\newcommand{\Enote}[1]{{\color{blue}\footnote{\color{blue} \textbf{Eric:} #1}}}
\renewcommand{\paragraph}[1]{{\protect\vspace{8pt}\noindent\sc{#1}}}
\newlength{\saveparindent}
\newlength{\saveparskip}
\newcommand{\BE}{\begin{enumerate}} \newcommand{\EE}{\end{enumerate}}
\newcommand{\BI}{\begin{itemize}} \newcommand{\EI}{\end{itemize}}
\newcommand{\BDes}{\begin{description}}\newcommand{\EDes}{\end{description}}
\newtheorem{alg}{Algorithm}
\newcommand{\BA}{\begin{alg}} \newcommand{\EA}{\end{alg}}
 \newtheorem{thm}{Theorem}[section]            
\newcommand{\BT}{\begin{thm}} \newcommand{\ET}{\end{thm}}
\newtheorem{lem}[thm]{Lemma} 
\newcommand{\BL}{\begin{lem}} \newcommand{\EL}{\end{lem}}
\newtheorem{clm}[thm]{Claim}
\newcommand{\BCM}{\begin{clm}} \newcommand{\ECM}{\end{clm}}
\newtheorem{techcor}[thm]{Corollary}
\newcommand{\BCo}{\begin{techcor}} \newcommand{\ECo}{\end{techcor}}
\newtheorem{Conc}[thm]{Conclusion}
\newcommand{\BCONC}{\begin{Conc}} \newcommand{\ECONC}{\end{Conc}}
\newtheorem{Obs}[thm]{Observation}
\newcommand{\BOBS}{\begin{Obs}} \newcommand{\EOBS}{\end{Obs}}
\newtheorem{Exmp}[thm]{Example}
\newcommand{\BEXM}{\begin{Exmp}} \newcommand{\EXMP}{\end{Exmp}}
\newtheorem{fact}[thm]{Fact}
\newtheorem{cor} [thm] {Corollary}      
\newcommand{\BC}{\begin{cor}} \newcommand{\EC}{\end{cor}}
\newtheorem{prop}[thm]{Proposition}     
\newcommand{\BP}{\begin{prop}} \newcommand {\EP}{\end{prop}}
\newtheorem{conj} {Conjecture}      
\newcommand{\BCJ}{\begin{conj}} \newcommand{\ECJ}{\end{conj}}
\theoremstyle{definition}
\newtheorem{defn}{Definition}[section]         
\newcommand{\BD}{\begin{defn}} \newcommand{\ED}{\end{defn}}
\def\FullBox{\hbox{\vrule width 8pt height 8pt depth 0pt}}
\newcommand{\QED}{\;\;\;\FullBox}
\newenvironment{Proof}{\noindent{\bf Proof:~~}}{\hfill\QED}
\newcommand{\BPF}{\begin{Proof}} \newcommand {\EPF}{\end{Proof}}
\newenvironment{proofof}[1]{\noindent{\bf Proof of {#1}:~~}}{\(\QED\)}
\newcommand{\BPFOF}{\begin{proofof}} \newcommand {\EPFOF}{\end{proofof}}
\newenvironment{smallproof}{\noindent{\bf Proof sketch:~~}}{\(\QED\)}
\newcommand{\bpf}{\begin{smallproof}} \newcommand{\epf}{\end{smallproof}}
\newcommand{\BEQ}{\begin{equation}} \newcommand{\EEQ}{\end{equation}}
\newcommand{\BEQN}{\begin{eqnarray}}\newcommand{\EEQN}{\end{eqnarray}}
\renewcommand{\Pr}{{\rm Pr}}
\newcommand{\R}{\mathbb{R}}
\newcommand*{\rom}[1]{\expandafter\@slowromancap\romannumeral #1@}
\newcommand{\D}{{\rm D}}
\newcommand{\floor}[1]{{\lfloor{#1}\rfloor}}
\newcommand{\tf}{\tilde f}
\title{Testing convexity of functions over finite domains}
\author[1]{Aleksandrs Belovs}
\author[2]{Eric Blais}
\author[2]{Abhinav Bommireddi}
\affil[1]{Faculty of Computing\\
          University of Latvia\\
          Riga, Latvia\\
          \texttt{aleksandrs.belovs@lu.lv}}
\affil[2]{Cheriton School of Computer Science\\
          University of Waterloo\\
          Waterloo, Canada\\
          \texttt{{eblais,vabommir}@uwaterloo.ca}}
\date{}
\begin{document}

\maketitle

\begin{abstract}
We establish new upper and lower bounds on the number of queries required to test convexity of functions over various discrete domains.

\begin{enumerate}
    \item We provide a simplified version of the non-adaptive convexity tester on the line.  We re-prove the upper bound $O\sA[\frac{\log(\eps n)}{\epsilon}]$ in the usual uniform model, and prove an $O\sA[\frac{\log n}{\eps}]$ upper bound in the distribution-free setting.

    \item We show a tight lower bound of $\Omega\sA[\frac{\log(\eps n)}{\epsilon}]$ queries for testing convexity of functions $f: [n] \rightarrow \mathbb{R}$ on the line. This lower bound applies to both adaptive and non-adaptive algorithms, and matches 
    the upper bound from item 1,
    showing that adaptivity does not help in this setting.

    \item Moving to higher dimensions, we consider the case of a stripe $[3] \times [n]$.
    We construct an \emph{adaptive} tester for convexity of functions  $f\colon [3] \times [n] \to \R$ with query complexity $O(\log^2 n)$. 
    We also show that any \emph{non-adaptive} tester must use $\Omega(\sqrt{n})$ queries in this setting.  
    Thus, adaptivity yields an exponential improvement for this problem.
    
    \item 
For functions $f\colon [n]^d \to \R$ over domains of dimension $d \geq 2$, we show a non-adaptive query lower bound $\Omega\s[(\frac{n}{d})^{\frac{d}{2}}]$.
\end{enumerate}
\end{abstract}

\section{Introduction}


Let $X$ be a subset of $\R^d$.  
A function $f \colon X \to \R$ is called \emph{convex} if for every finite collection of points $x_1,x_2,\ldots, x_k \in X$ and non-negative reals $\lambda_1,\ldots,\lambda_k \ge 0$ satisfying $\sum_{i} \lambda_i = 1$ and $ \sum_i \lambda_i x_i \in X$, we have
\[
f\sB[\sum_i \lambda_i x_i] \le \sum_i \lambda_i f(x_i).
\]

Convex functions are typically considered on convex domains, but for property testing questions, we will be mostly interested in the case when $X$ is a finite (hence, discrete) subset of $\R^d$.
In this case, one can show that $f$ is convex on $X$ if and only if it can be extended to a convex function $\tilde f\colon \R^d \to \R$ on the entire linear space $\R^d$.\footnote{I.e., $f$ is convex on $X$ if and only if there exists a function $\tilde f$ that is convex on $\R^d$ and satisfies $\tilde f(x) = f(x)$ for every $x \in X$. See Section~\ref{sec:convexity} for details.}

For a finite set $X$, we say that a function $g\colon X\to\bR$ is \emph{$\epsilon$-far from convex} with respect to some \emph{proximity parameter} $0<\epsilon<1$ if for every convex function $h\colon X \to \R$, we have $\absA|\{ x \in X: h(x) \neq g(x)] \}| \ge \epsilon |X|$.

In this work, we consider the problem of distinguishing convex functions from those that are far from convex in the property testing framework~\cite{GoldreichGR98, RubinfeldS96}. Formally, an \emph{$(\epsilon, X)$-convexity tester} is a bounded-error randomized algorithm that queries the values of an unknown function $f\colon X \to \R$ on a set of inputs from $X$ and distinguishes the case where $f$ is convex from the one where $f$ is $\epsilon$-far from convex. 
A tester is \emph{non-adaptive} if it selects all the inputs to query before observing the value of $f$ on any of those inputs; otherwise the tester is \emph{adaptive}.

Our goal is to determine the minimum query complexity of $(\epsilon,X)$-convexity testers for various discrete sets $X$, and to determine whether the query complexity of adaptive and non-adaptive $(\epsilon,X)$-convexity testers differs for any set $X$.
While there has been work studying the problem of testing convexity of functions in various settings~\cite{ParnasRR03,Ben-Eliezer19,PallavoorRV18,BermanRY14,BlaisRY14}, large gaps remain between the best upper and lower bounds.  
We give new bounds on the number of queries required to test convexity of functions on the line, over a stripe, and over higher-dimensional domains.


\subsection{Testing convexity on the line}

The problem of testing convexity of functions $f\colon [n] \to \R$ on the line was first considered by Parnas, Ron, and Rubinfeld~\cite{ParnasRR03}. They showed that $O(\frac{\log n}{\epsilon})$ queries suffice to $\epsilon$-test convexity in this setting.
A slightly better upper bound of $O(\frac{\log(\epsilon n)}{\epsilon})$ was shown by Ben-Eliezer~\cite{Ben-Eliezer19}.
This follows from his more general algorithm for testing local properties of arrays.
We give a more direct algorithm.

\begin{thm}
\label{thm:1dimUpper}
There exists an $\eps$-tester for convexity of functions $f\colon[n]\to\bR$ over the line with complexity $O(\frac{\log(\epsilon n)}{\epsilon})$.
The tester is non-adaptive and has 1-sided error.
\end{thm}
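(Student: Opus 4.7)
The plan is to give a direct non-adaptive algorithm and to analyze it via a structural lemma that localizes every unit of distance from convexity to an equidistant dyadic violation at scale at most $\log_{2}(\epsilon n) + O(1)$. Specifically, for each scale $\ell \in \{0, 1, \ldots, \lfloor \log_{2}(\epsilon n) \rfloor + O(1)\}$ the tester independently samples $\Theta(1/\epsilon)$ uniformly random starting positions $i \in \{1, \ldots, n - 2^{\ell+1}\}$, queries $f(i), f(i + 2^{\ell}), f(i + 2^{\ell+1})$, and rejects if the chord inequality $2f(i + 2^{\ell}) \le f(i) + f(i + 2^{\ell+1})$ fails. The algorithm is non-adaptive, has one-sided error (every convex $f$ satisfies the chord inequality), and uses $O(\log(\epsilon n) / \epsilon)$ queries.

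For soundness I would show: if $f$ is $\epsilon$-far from convex, then for some scale $\ell \le \lfloor \log_{2}(\epsilon n) \rfloor + O(1)$ an $\Omega(\epsilon / \log(\epsilon n))$-fraction of the equidistant triples at that scale are violating. My approach is a charging argument starting from a maximum convex subsequence $S \subseteq [n]$, for which $|B| := n - |S| \ge \epsilon n$. For each $b \in B$, maximality of $S$ forces at least one of the three triples $(L_{-1}(b), L(b), b)$, $(L(b), b, R(b))$, $(b, R(b), R_{+1}(b))$ to be convexity-violating, where $L(b), R(b)$ are the $S$-neighbors of $b$ and $L_{-1}, R_{+1}$ their predecessor and successor in $S$. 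Each witness sits in a window of length $O(\mathrm{gap}_{S}(b))$ about $b$, and a halving step converts any window-violation of length $w$ into an equidistant dyadic violation of scale at most $\log_{2} w + O(1)$ by iteratively descending to whichever half of the window retains a violation of the chord inequality. Bucketing the bad indices by $\lceil \log_{2} \mathrm{gap}_{S}(b) \rceil$ and applying pigeonhole over the $O(\log(\epsilon n))$ relevant scales then pulls out one scale with the claimed density.

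The main obstacle is the scale truncation at $\log_{2}(\epsilon n)$: nothing prevents all $\epsilon n$ bad indices from clustering inside a single $S$-gap of length $g \gg 1/\epsilon$, whose direct witnesses live at scale $\log_{2} g$. I plan to handle this with a self-similar step: a gap of length $g$ harboring $\Omega(g)$ bad indices makes $f$ restricted to the gap itself $\Omega(1)$-far from convex, after accounting for the boundary slope constraints at the gap endpoints, so the structural lemma applied inductively inside the gap produces internal equidistant violations at scales bounded by $O(\log g)$; and because a long gap must concentrate proportionally many bad indices in order to contribute substantially to $|B|$, only scales at most $\log_{2}(\epsilon n) + O(1)$ are ever required. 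Making the boundary-slope bookkeeping in this recursion clean is where I expect the bulk of the technical care to go.
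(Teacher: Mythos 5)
Your proposal and the paper's proof share the same spirit (a non-adaptive ``triple tester'' at $O(\log(\eps n))$ dyadic scales, analyzed via a convex subsequence), but the proposed algorithm is genuinely different from the paper's, and the analysis as sketched has gaps that the paper's Lemma~\ref{lem:tripleTest} is specifically engineered to avoid.

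First, the algorithm. You test only \emph{equidistant} triples $(i, i+2^{\ell}, i+2^{\ell+1})$ at uniformly random offsets $i$. The paper's tester instead draws a root $a$, picks a \emph{hub} $b$ that is a nearby multiple of $2^k$, and sets $c$ to be $a+1$ or $b+1$; the resulting triple is generally non-equidistant, but the hub is dyadically \emph{aligned}, so distinct roots near the same multiple share a hub. That alignment is the load-bearing feature of Lemma~\ref{lem:tripleTest}: a non-convex consecutive triple $(x,y,z)$ of $S$ is converted, by inserting the shared hubs $h,h'$ and the adjacent point, into a failing triple test \emph{rooted at one of $x,y,z$}. Since roots are distinct, the charging over $A$ is automatically an injection from bad indices to failing (root, test) pairs, and the tester that first picks a uniform root and then a uniform test at that root immediately has failure probability $\Omega(\eps/\log(\eps n))$.

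Your charging argument lacks that injection. The ``halving step'' you describe starts from a \emph{non-equidistant} witness triple such as $(L(b), b, R(b))$; such a witness does not give you a failing chord inequality for any equidistant dyadic window (the window's midpoint need not be $b$, and the window length need not be a power of two), so there is nothing to halve from. If instead you interpret the halving step as extracting a violating consecutive triple via Corollary~\ref{cor:lineTriple} (which is sound), you get a scale-$0$ violation that may be shared by many bad indices: consider $f_0(x)=x^2/2$ plus a bump of height $K$ on an interval of length $\eps n$. All $\eps n$ bad indices lie inside the bump, but the only negative second differences, hence the only consecutive (scale-$0$) violations, are at the two bump boundaries, and the only abundant equidistant violations are at scale $\approx\log_2(\eps n)$, not at the scale suggested by each bad index's gap. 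So bucketing bad indices by $\lceil\log_2 \mathrm{gap}_S(b)\rceil$ and invoking pigeonhole does not yield a bound on the number of \emph{distinct} violating $(i,\ell)$ pairs, which is what your algorithm needs. There is also a quantitative mismatch: with $\Theta(1/\eps)$ samples per scale, a density of only $\Omega(\eps/\log(\eps n))$ at a single scale is not enough to be hit with constant probability; you would need either density $\Omega(\eps)$ at some scale or $\Theta(\log(\eps n)/\eps)$ samples at the relevant scale.

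Finally, the scale-truncation problem you flag is real and the paper resolves it more cleanly than the recursion you anticipate: rather than starting from a maximum convex subsequence, the paper constructs $A$ \emph{incrementally}, removing one violating point at a time from $S$ and stopping as soon as $|[n]\setminus S| \ge \eps n$. Because fewer than $\eps n$ points have been removed at each step, every gap in the current $S$ is below $\eps n$, so every witness triple and hence every failing triple test automatically has height $\le 2\eps n$. No self-similar bookkeeping across long gaps is needed. Your equidistant-triple tester may well be correct (I do not see a counterexample), but a proof along your lines would need a replacement for the hub construction of Lemma~\ref{lem:tripleTest} that produces distinct violating equidistant triples, one per charged index, rather than the halving-and-pigeonhole sketch you give.
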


We also consider the problem of testing convexity of functions on the line in the distribution-free model of Halevy and Kushilevitz~\cite{HalevyK07}. In this model, the distance of a function $g\colon X \to \R$ to convexity is measured with respect to some unknown distribution $\cD$ over the domain $X$. 
The algorithm can query the target function $f\colon [n] \to \R$ as usual, and it can also sample from $\cD$. The tester must distinguish the case where $f$ is convex and the case where $f$ is \emph{$\eps$-far from a convex function with respect to $\cD$}, in that $\Pr_{x\sim\cD} [g(x)\ne h(x)]\ge\eps$ for every convex function $h$.
The tester must work for any distribution $\cD$, and the complexity measure is the worst-case sum of the number of queries to $f$ and samples from $\cD$.
Thus, distribution-free property testing is at least as hard as usual property testing, and for some problems the query complexity is much larger in the distribution-free setting~\cite{HalevyK07}.

We show that our algorithm for testing convexity of functions $f\colon [n] \to \R$ can be made distribution-free with only a slight loss in the dependence of $\eps$.

\begin{thm}
\label{thm:1dimDistfree}
There exists a non-adaptive 1-sided algorithm that $\eps$-tests a function $f\colon [n]\to\bR$ for convexity with respect to an unknown distribution $\cD$ using
$O\sA[\frac{\log n}{\eps}]$ queries to $f$ and $O\sA[\frac{1}{\eps}]$ samples from $\cD$.
\end{thm}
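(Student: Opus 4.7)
The plan is to derive Theorem~\ref{thm:1dimDistfree} as a minor adaptation of the tester of Theorem~\ref{thm:1dimUpper}: replace the $O(1/\eps)$ uniformly-sampled ``candidate centres'' by $O(1/\eps)$ samples drawn from $\cD$, while leaving the $O(\log n)$-query local search subroutine untouched.  The mild loss in the $\eps$-dependence (we get $\log n$ instead of $\log(\eps n)$) reflects the fact that, unlike in the uniform setting, we cannot amortise over ``nearby'' candidates since $\cD$ may concentrate its $\eps$-mass on a single point.

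The algorithm would be the following.  Draw $m=\Theta(1/\eps)$ samples $b_1,\ldots,b_m \sim \cD$; for each $b_i$ run the same dyadic search routine $\mathsf{Search}(b_i)$ used inside the tester of Theorem~\ref{thm:1dimUpper}, querying $f$ at the $O(\log n)$ points $b_i \pm 2^k$ for $0 \le k \le \lceil \log_2 n\rceil$, and reject if any queried triple $x_1<x_2<x_3$ satisfies $\tfrac{f(x_2)-f(x_1)}{x_2-x_1} > \tfrac{f(x_3)-f(x_2)}{x_3-x_2}$.  Accept otherwise.  Total cost: $O(1/\eps)$ samples from $\cD$ and $m\cdot O(\log n)=O(\log n/\eps)$ queries to $f$, as claimed.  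One-sided error is automatic since we reject only upon discovering a genuine slope-reversal in the values of $f$.

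For soundness, assume $f$ is $\eps$-far from convex with respect to $\cD$ and let $g\colon[n]\to\bR$ be a convex function minimising $\Pr_{x\sim\cD}[f(x)\ne g(x)]$, so that $\cD(W)\ge\eps$ for $W:=\{x\in[n]:f(x)\ne g(x)\}$.  The soundness reduces to a \emph{pointwise} detection claim: for every $b\in W$, $\mathsf{Search}(b)$ is guaranteed to return a violating triple.  Given this, each $b_i$ lands in $W$ with probability at least $\eps$ independently, so the algorithm rejects with probability at least $1-(1-\eps)^m \ge 2/3$ for a suitable $m=\Theta(1/\eps)$.

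The main obstacle is the pointwise detection lemma, which I expect to extract from the analysis underlying Theorem~\ref{thm:1dimUpper}.  The intended argument is by contradiction: if $\mathsf{Search}(b)$ saw no violation, then the $O(\log n)$ queried values together with $f(b)$ would be consistent with some convex function on $[n]$ agreeing with $f$ at $b$; splicing this local convex extension with $g$ outside a small neighbourhood of $b$ would yield a convex function that agrees with $f$ on a strictly larger set of $\cD$-mass than $g$, contradicting the optimality of $g$.  The delicate combinatorial step is verifying that the dyadic query pattern around $b$ is rich enough to rule out \emph{any} convex interpolation whose value at the centre equals $f(b)$; this is exactly the content that the uniform tester of Theorem~\ref{thm:1dimUpper} must already establish, so the distribution-free upgrade is essentially cost-free once the guarantee is recast pointwise rather than in aggregate.
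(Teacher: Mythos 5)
Your algorithm is essentially the paper's: sample $O(1/\epsilon)$ points from $\mathcal{D}$, run all $O(\log n)$ triple tests rooted at each sample, and reject on any slope reversal. The gap is in the soundness argument, specifically in your pointwise detection lemma. You fix a convex function $g$ that is $\mathcal{D}$-\emph{closest} to $f$, set $W = \{x : f(x)\ne g(x)\}$, and claim that every $b\in W$ fails a triple test. That claim is false in general. Passing all triple tests rooted at $b$ only forces $f(b)$ to be consistent with \emph{some} convex completion of $f$ on the passing set; nothing makes the $\mathcal{D}$-optimal $g$ equal to that completion at $b$. Indeed, $g$ may intentionally disagree with $f$ at a low-mass point $b$ in order to match $f$ on higher-mass points elsewhere. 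Your proposed patch---splicing a local convex extension around $b$ with $g$ away from $b$---does not help, since convexity is a global constraint and the spliced function is generally not convex.

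The paper avoids this by arguing against a different, specifically constructed convex comparator: the \emph{convex replacement} $\tilde f$ of Definition~\ref{defn:convexReplacement}. Lemma~\ref{lem:tripleTest} shows that any non-convex consecutive triple contains a point that fails one of its triple tests, so the set $P$ of points passing all their tests carries a convex restriction $f|_P$; by Lemma~\ref{lem:extension} this extends to a convex $\tilde f\colon[n]\to\R$ with $\tilde f = f$ on $P$. For this $\tilde f$ your pointwise claim holds \emph{by construction}: $f(b)\ne\tilde f(b)$ implies $b\notin P$, i.e., $b$ fails some triple test rooted at it. Since $\tilde f$ is convex and $f$ is $\epsilon$-far from every convex function under $\mathcal{D}$, we get $\Pr_{b\sim\mathcal{D}}[f(b)\ne\tilde f(b)]\ge\epsilon$, and each such $b$ triggers a rejection. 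Replacing your $g$ by $\tilde f$ is exactly what is needed to make your proof go through, and is the paper's argument.
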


The algorithms that establish Theorems~\ref{thm:1dimUpper} and~\ref{thm:1dimDistfree} are both \emph{triple testers}: 
they repeatedly draw triples of points from a natural probability distribution over $[n]^3$ and test that the function is convex on those three points.\footnote{This is similar to the situation for the well-known \emph{pair testers} for monotonicity~\cite{ErgunKKRV00} that sample pairs of points from a natural distribution and test them for monotonicity. Note also that while it is not presented as such, the convexity tester of Parnas, Ron, and Rubinfeld~\cite{ParnasRR03} can also be reformulated as a triple tester.}
This has a number of consequences.  
First, both our algorithm admit time-efficient implementation.
Second consequence is for quantum testers (see~\cite{MontanaroW16} for introduction to quantum property testing).
Using quantum amplitude amplification~\cite{brassard:amplification}, we can achieve quadratic improvement. 
Thus, in the standard property testing model, the quantum query complexity for $\epsilon$-testing convexity is $O\s[\sqrt{\epsilon^{-1}\log(\epsilon n)}]$, and in the distribution-free setting the quantum query complexity of the problem is $O\s[\sqrt{\epsilon^{-1}\log n}]$.
Again, both of the algorithms can be implemented time-efficiently.



\bigskip
Blais, Raskhodnikova, and Yaroslavtsev~\cite{BlaisRY14} showed that the bound in Theorem~\ref{thm:1dimUpper} on the query complexity of \emph{non-adaptive} convexity testers is optimal when $\epsilon > 0$ is a constant.
For adaptive algorithms, this only gives a lower bound of $\Omega(\log \log n)$ by the standard conversion between adaptive and non-adaptive algorithms.
We close this gap and show that the bound in Theorem~\ref{thm:1dimUpper} is optimal for all values of $\eps \le \frac19$, even when the testers are allowed to be adaptive.

\begin{thm}
\label{thm:1D}
For every $\frac1n \le \epsilon \le \frac19$, any $\epsilon$-tester for convexity of functions $f\colon [n] \to \R$ has query complexity $\Omega\sA[\frac{\log(\eps n)}{\epsilon}]$. 
\end{thm}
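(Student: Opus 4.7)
By Yao's minimax principle, it suffices to construct two distributions $\calD_+$ on convex functions $f\colon[n]\to\R$ and $\calD_-$ on functions that are $\epsilon$-far from convex, and show that no deterministic adaptive $q$-query tester distinguishes them with probability above $2/3$ when $q = o(\log(\epsilon n)/\epsilon)$. The two factors in the lower bound suggest a two-level construction: the $1/\epsilon$ factor arising from $\Theta(\epsilon n)$ planted witnesses spaced $\Theta(1/\epsilon)$ apart in $[n]$, and the $\log(\epsilon n)$ factor arising from a hidden-scale structure (in the spirit of the Blais--Raskhodnikova--Yaroslavtsev non-adaptive lower bound~\cite{BlaisRY14}) attached to each witness.

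A concrete starting point is the following construction. Fix a steep convex base $F\colon[n]\to\R$, for example $F(x)=Cx^2$. Choose $N=\Theta(\epsilon n)$ witness centers $p_1,\ldots,p_N$, and attach to each $p_i$ an independent random scale $d_i \in \{1,2,4,\ldots,O(\epsilon n)\}$ (giving $\Theta(\log(\epsilon n))$ possibilities per center). Every function in both distributions agrees with $F$ outside the small triples $\{p_i-d_i,\,p_i,\,p_i+d_i\}$; at $p_i$ itself, $\calD_+$ applies a convex downward perturbation compatible with $F$, while $\calD_-$ applies a concave upward perturbation of magnitude $\Theta(Cd_i^2)$ that makes the triple at scale $d_i$ violating. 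The key design goals are: (a) every $f\in\mathrm{supp}\,\calD_+$ is globally convex, by verifying that the small dips compose consistently with $F$; (b) every $f\in\mathrm{supp}\,\calD_-$ is $\epsilon$-far from convex, by arguing that the $N=\Theta(\epsilon n)$ concave perturbations live in disjoint neighborhoods and each one forces a separate Hamming modification to repair; and (c) catching a violation at a specific witness $p_i$ requires simultaneously querying $p_i$ and a neighbor at the correct hidden scale $d_i$, which is essentially a $\log(\epsilon n)$-way ``guess the scale'' problem.

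The query lower bound is where the main work is. The key observation is an \emph{obliviousness property}: a query to any point outside all planted witness triples returns $F(x)$ under both distributions, so it yields no distinguishing information. The adaptive tester can only hope to distinguish by (i) hitting a planted center $p_i$ and (ii) also querying two points at distance exactly $d_i$ from it that together complete the triple. I would then fix any deterministic adaptive decision tree of depth $q$, and for each leaf compute -- over the randomness of the centers and scales -- the probability that the corresponding query sequence covers some planted witness triple. Independence across the $N$ witnesses (positions spaced apart to be disjoint) together with the $\log(\epsilon n)$-way scale randomness per witness should yield a distinguishing probability of roughly $O\!\left(q\epsilon/\log(\epsilon n)\right)$, which is $o(1)$ whenever $q=o(\log(\epsilon n)/\epsilon)$.

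The main obstacle will be handling adaptivity cleanly. For non-adaptive testers the counting is direct, but an adaptive tester could in principle exploit a partial discovery (e.g.\ learning $p_i$ while still ignorant of $d_i$) to concentrate subsequent queries around that center and binary-search for $d_i$. The crucial leverage is exactly the obliviousness above: until a full triple is completed the decision tree cannot distinguish $\calD_+$ from $\calD_-$, so it is effectively ``blind'' and every branch is charged for the full randomness of its target witness. Formalising this via an adaptive-to-non-adaptive reduction (or a martingale-style information bound on how much each query can shrink the posterior over the hidden scale) will be the technical heart of the argument, and also the step most likely to force refinements of the construction above (for instance, increasing the number of planted witnesses per region or adjusting the scale distribution) in order to make the constants in the counting close the $q=\Omega(\log(\epsilon n)/\epsilon)$ bound tightly.
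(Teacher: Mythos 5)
Your high-level plan (Yao's principle, a $\Theta(\epsilon n)$-many-witnesses layer giving the $1/\epsilon$ factor, a hidden-scale layer giving the $\log(\epsilon n)$ factor, and an ``obliviousness'' argument to handle adaptivity) is the right skeleton and matches the paper's strategy in broad strokes. But the concrete construction you propose cannot work, and the gap is not one that minor ``refinements'' will close.

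The failure is in the hidden-scale layer. If you take $F(x)=Cx^2$ and raise the value at a single center $p_i$ by $\delta=\Theta(Cd_i^2)$, then the discrete second difference at scale $d$, namely $F(p_i-d)+F(p_i+d)-2(F(p_i)+\delta)=2Cd^2-2\delta$, is negative for \emph{every} $d\le d_i$, not only at $d=d_i$. The violation is therefore maximally visible at scale $d=1$, so a tester that simply queries $(p_i-1,p_i,p_i+1)$ at a random $p_i$ already catches it with constant probability per ``hit,'' which caps your construction at an $\Omega(1/\epsilon)$ lower bound with no $\log$ factor. Worse, the magnitude $\delta$ itself leaks $d_i$: a single query to $p_i$ tells the adaptive tester $f(p_i)-F(p_i)$, from which it reads off the scale. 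And if $\calD_+$ and $\calD_-$ apply perturbations of opposite sign (or even of differing magnitude) at $p_i$, one query to $p_i$ distinguishes the two distributions outright. In short, planting a local bump at a center never hides a scale, because a bump that violates convexity at scale $d$ violates it at all smaller scales, and its size reveals $d$ anyway.

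What the paper does instead is qualitatively different. It perturbs the \emph{discrete derivative} $\partial f$, not the function values at isolated points, and encodes a hidden variable $a\colon[3]^{<k}\to[k^3-1]$ into the digits of $\partial f_a(x)$ (written in base $m=3k^3$) so that the digit of $\partial f_a(x)$ at position $i$ depends only on $a$ evaluated at the length-$i$ ternary prefix of $x$ and on the digit $x_i$, via the rule in Equation~(\ref{eqn:phi}). The far-from-convex family $g_{a,j}$ then swaps the $x_j=0$ and $x_j=1$ branches at a secret digit position $j$, which breaks monotonicity of $\partial f$ only at that position while leaving the other digits untouched. Crucially, Claim~\ref{clm: dependence of f on a_s} shows $f_a(x)$ depends \emph{only} on $a_s$ for $s$ a prefix of $x$; this is the obliviousness property you were reaching for, but realized at the level of the digit structure rather than the support of the perturbation. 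With this in hand, the abstract Lemma~\ref{lem:Yao-easy} handles adaptivity cleanly: for any deterministic depth-$q$ tree one builds a partial matching $\eta$ between negative and positive instances that the tree cannot separate, with measure controlled by a prefix-collision count (at most $q-1$ bad scales $j$, by a variant of the monotonicity argument from~\cite{Belovs18}). The $1/\epsilon$ factor then comes, as you anticipated, from tiling $[n]$ with $\Theta(1/\epsilon)$ independent copies of this gadget. To repair your approach you would have to replace the ``single bump of magnitude $\Theta(Cd_i^2)$'' with a perturbation of the \emph{slope} that is invisible at all but one digit position, which is precisely what the paper's $\phi_a$ construction achieves.
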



In particular, the lower bound in Theorem~\ref{thm:1D} implies that adaptivity does \emph{not} help to reduce query complexity when testing convexity of functions over the line. 
This is analogous to the situation for testing monotonicity of functions over the line \cite{Fischer04}. This result, combined with the distance approximation algorithm of Fattal and Ron~\cite{FattalR10}, also shows that approximating the distance to convexity is essentially no harder than testing convexity.



\subsection{Testing convexity over 2-dimensional domains}

Parnas, Ron, and Rubinfeld~\cite{ParnasRR03} asked whether convexity can be tested efficiently for functions over 2-dimensional domain. The first non-trivial upper bound on the query complexity for testing the convexity of functions mapping $[n]^2$ to $\R$ was obtained by Ben-Eliezer~\cite{Ben-Eliezer19}, who showed that $O(n)$ queries suffice for non-adaptive testing of convexity---a number of queries that is sublinear (in fact, quadratically smaller) than the size of the domain.

The only previous lower bound for non-adaptive testing of convexity of functions $f \colon [n]^2 \to \R$ was again $\Omega(\log n)$~\cite{BlaisRY14}, so it remained open whether it is possible to test convexity non-adaptively using a number of queries that is \emph{exponentially} smaller than the size of the domain. We show that it is not, and that the Ben-Eliezer bound is optimal for all non-adaptive algorithms when $\epsilon$ is a constant.

\begin{thm}[Special case of Theorem~\ref{Thm:HighDimLower} below]
\label{thm:2Dlb}
Any non-adaptive $\Omega(1)$-tester for convexity of $f\colon [n]^2 \rightarrow \R$ has query complexity $\Omega(n)$. 
\end{thm}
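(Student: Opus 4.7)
The plan is to apply Yao's minimax principle: exhibit two distributions $\mathcal{D}_Y$ (supported on convex functions $[n]^2 \to \R$) and $\mathcal{D}_N$ (supported on functions $\Omega(1)$-far from convex), and show that for every fixed query set $Q \subseteq [n]^2$ with $|Q| = o(n)$, the restrictions $f|_Q$ under $\mathcal{D}_Y$ and $\mathcal{D}_N$ are statistically close. Since Theorem~\ref{thm:2Dlb} is stated as a special case of the $d$-dimensional Theorem~\ref{Thm:HighDimLower}, the construction I would aim for is just the $d=2$ instance of the general one, where the intuition is cleanest.

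The key idea is to hide the convexity violations along a randomly chosen axis-aligned line, say a uniformly random column $c^* \in [n]$, so that detecting non-convexity non-adaptively requires querying three or more collinear points inside the hidden column. The indistinguishability step is then a short counting argument: any query set $Q$ with $|Q| < n/3$ has fewer than $n/9$ columns that contain three or more queries, so a uniformly random $c^*$ lies in such a ``queried-enough'' column with probability at most $1/9$. Conditioned on the complementary event, the restriction $f|_Q$ must be distributed identically under $\mathcal{D}_Y$ and $\mathcal{D}_N$. Guaranteeing this requires that the marginal of $f$ at any one or two points inside a single column agree under both distributions, which can be arranged by randomizing the YES distribution with a matching ``ghost column'' that reproduces the same one- and two-point statistics as the NO distribution's perturbation.

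The main obstacle, and the technical heart of the proof, is achieving $\Omega(1)$-farness for $\mathcal{D}_N$. A naive single-column modification alters only $n$ out of $n^2$ values, so it yields distance $O(1/n)$ from convex, far short of the constant distance needed. Boosting the distance to $\Omega(1)$ requires exploiting a cascading effect that forces any convex completion to differ from $f$ on $\Omega(n^2)$ points. One approach is to design the reference function so that many convexity inequalities involving column $c^*$ are already tight, so that a concave perturbation along $c^*$ propagates to $\Omega(n^2)$ forced changes throughout the grid. Another is to replace the single hidden column by a structured random family of $\Theta(n)$ hidden lines, ensuring the total number of modified values is $\Theta(n^2)$ while preserving the property that each individual line's violations remain invisible to $o(n)$ queries. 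Proving this $\Omega(1)$-farness rigorously is the main combinatorial challenge and relies on a careful structural analysis of convex completions of partially specified functions on the grid; once that is in place, combining the distance bound with the counting argument above yields the claimed $\Omega(n)$ query lower bound.
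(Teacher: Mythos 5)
Your high-level framework is the right one --- Yao's minimax, a hidden random ``direction'' along which the convexity violations are invisible, and matching low-order marginals between the YES and NO distributions --- but the specific plan of hiding a single axis-aligned column cannot be made to work, and the two fixes you sketch are not enough to close the gap. The difficulty you correctly flag (that a single-column perturbation is only $O(1/n)$-far from convex) is real, but the resolutions you propose do not lead anywhere: a ``cascading'' construction that forces a convex completion to differ on $\Omega(n^2)$ points is not realized by any concrete function you exhibit; and ``a structured random family of $\Theta(n)$ hidden lines'' that is still axis-aligned destroys your own counting argument, since there would no longer be a single unknown column to hide --- every column would be perturbed --- and then $o(n)$ queries easily reveal the structure (two queries in any one column already suffice to test the perturbation pattern there). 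With only $n$ axis-aligned columns to choose from, there is fundamentally too little entropy in the hidden parameter to both preserve indistinguishability and keep every point perturbed.

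The idea you are missing, and which the paper's construction hinges on, is to randomize the \emph{direction} of the hidden lines over a family of size $\Omega(n^2)$. Concretely, the paper draws a random vector $a\in\mathbb{Z}^2$ with $(a_1,a_2)$ coprime and $0\le a_i\le n/8$ (so there are $\Omega(n^2)$ choices, using Proposition~\ref{prop: co-prime_size}), completes it to a lattice basis $B(a)$ of $\mathbb{Z}^2$, and takes the base convex function $g_B(x) = (x^B_1)^2 + 2(x^B_2)^2$ whose growth is slow in direction $a$ and fast transversally. Both $\mathcal{D}_Y$ and $\mathcal{D}_N$ perturb \emph{every} point of $[n]^2$ by $\pm1$: in the YES case the sign is constant along each line parallel to $a$ (and the factor of $2$ on the transverse term makes this convexity-preserving, Claim~\ref{clm: Yes distribution is convex HighDim}); in the NO case the sign alternates along each such line, so every window of three consecutive lattice points on a line is a violating triple, giving $\Omega(1)$-farness (Claim~\ref{clm: No distributionis far from convex HighDim}). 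Now the indistinguishability step is exactly the counting argument you want, but applied to directions rather than columns: a pair of queries determines at most one direction, so $q$ queries can ``expose'' at most $\binom{q}{2} < q^2$ of the $\Omega(n^2)$ candidate directions; for $q = o(n)$ the random direction is hidden with probability $1-o(1)$, and conditioned on that the restrictions $f|_Q$ are identically distributed under $\mathcal{D}_Y$ and $\mathcal{D}_N$ (no ghost-column trick is needed --- the alternating factor $(-1)^{x^B_1}$ is invisible if you never see two points on the same line, because the line's sign $\sigma$ is a fresh uniform $\pm1$). This is also why the paper states the result as the $d=2$ case of Theorem~\ref{Thm:HighDimLower}: the whole argument is really about counting lattice directions, which generalizes cleanly to $[n]^d$ and gives $\Omega((n/d)^{d/2})$.
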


Note that Theorem~\ref{thm:2Dlb} does not eliminate the possibility that convexity of functions on $[n]^2$ can be tested with $\polylog(n)$ queries by \emph{adaptive} algorithms. Based on the results for testing convexity in 1D, one may be tempted to guess that adaptivity does not help in this setting either and that the bound in the theorem could be strengthened to apply to adaptive algorithms as well. To test this intuition, we consider an intermediate domain between 1-dimensional and full 2-dimensional case: the \emph{stripe} $[3]\times[n]$.
The same intuition from the 1-dimensional case would suggest that adaptivity does not help in testing convexity of functions over the stripe. We show, however, that here adaptivity can be used to obtain an \emph{exponential} improvement on the query complexity of convexity testers.

\begin{thm}
\label{thm: [3]x[n]}
There exists a 1-sided-error algorithm that $\eps$-tests a function $f\colon[3]\times[n]\to\bR$ for convexity in the distribution-free testing model using $O\sA[\frac{\log^2 n}{\eps}]$ queries to $f$ and $O\sA[\frac{1}{\eps}]$ samples from $\cD$. 
By contrast, any non-adaptive $\Omega(1)$-tester for convexity of $f\colon[3]\times[n]\to\bR$ (in the standard testing model) has query complexity $\Omega(\sqrt{n})$. 
\end{thm}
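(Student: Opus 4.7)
The theorem has two halves---an adaptive distribution-free upper bound of $O(\log^2 n/\eps)$ and a non-adaptive lower bound of $\Omega(\sqrt{n})$---which I would prove separately.

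For the upper bound, my plan is a two-layer adaptive algorithm. The first layer applies the distribution-free 1D convexity tester of Theorem~\ref{thm:1dimDistfree} to each of the three rows $f(i,\cdot)\colon [n]\to\R$ with proximity parameter $\Theta(\eps)$, costing $O(\log n/\eps)$ queries and $O(1/\eps)$ samples from $\cD$ in total; after passing this, I may assume each row is essentially 1D-convex, so the only remaining potential convexity violations on the stripe are the \emph{column} inequalities $2f(2,j)\le f(1,j)+f(3,j)$ and the \emph{diagonal} inequalities $2f(2,m)\le f(1,m-k)+f(3,m+k)$. The second layer draws $O(1/\eps)$ fresh samples from $\cD$ and, for each sample $p=(i,j)$, runs an adaptive witness search using $O(\log^2 n)$ queries. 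The search exploits a key observation: since rows $1$ and $3$ are (essentially) convex, the function $g_m(k)=f(1,m-k)+f(3,m+k)$ is convex in $k$, so its minimum can be located by a ternary search; the extra logarithmic factor comes from robustly certifying the values on which the ternary search depends under the \emph{approximate}, rather than exact, row convexity guaranteed by the first layer.

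The main obstacle for the upper bound is the structural correctness lemma: if $f$ is row-convex but $\eps$-far from 2D-convex with respect to $\cD$, then a $\cD$-sample participates in some violating column or diagonal triple with probability $\Omega(\eps)$. Proving this requires constructing an explicit repair of $f$ to a fully convex function that alters only a small $\cD$-mass of points, which is delicate because editing a single value $f(2,m)$ simultaneously affects many diagonal inequalities, so one must carefully charge each necessary repair to a $\cD$-heavy point lying on a triple the witness search can actually detect.

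For the non-adaptive lower bound, I would apply Yao's principle and construct distributions $\mathcal{Y}$ and $\mathcal{N}$ over functions $f\colon [3]\times [n]\to\R$ such that draws from $\mathcal{Y}$ are convex, draws from $\mathcal{N}$ are $\Omega(1)$-far from convex, and the two look statistically identical under any non-adaptive query set $Q$ of size $o(\sqrt{n})$. My intended construction starts from a shared convex base (for instance the zero function); under $\mathcal{N}$, a uniformly random perfect matching $\pi$ of even-sum index pairs in $[n]$ selects $\Theta(n)$ disjoint pairs $(a,b)$, and at each matched pair a diagonal violation is planted by perturbing $f(2,(a+b)/2)$ upward while simultaneously adjusting $f(1,a)$ and $f(3,b)$ so that every row remains 1D-convex and every column inequality stays satisfied. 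A violation is then visible only through a full matched diagonal triple $\{(1,a),(2,(a+b)/2),(3,b)\}$, and a birthday-style calculation shows that any $Q$ of size $o(\sqrt{n})$ avoids every matched triple with high probability, making the views indistinguishable. The hard part will be designing the plants so that they are simultaneously row- and column-consistent along the random matching, provably $\Omega(1)$-far from convex in aggregate (arguing that any convex repair must alter a constant fraction of the planted entries), and marginally uniform enough that the view on small $Q$ carries no distinguishing signal.
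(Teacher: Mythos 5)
Your plan for the upper bound follows the paper's route in spirit---test each row for 1D convexity and use adaptivity to minimize the convex ``diagonal'' function $h(x)=\min_\delta\tfrac12\bigl(\tilde f_0(x-\delta)+\tilde f_2(x+\delta)\bigr)$ via bisection, getting $O(\log n)$ calls at $O(\log n)$ cost each. However, the list of constraints you propose to check is incomplete in a way that will defeat the structural lemma you identify as the crux. Checking the direct inequalities $f(1,x)\le h(x)$ only certifies convexity on centred simplices of the form $\{(0,x-\delta),(2,x+\delta)\}$ centred at $(1,x)$ and on purely horizontal triples. It does \emph{not} handle minimal centred simplices of the form $(0,a),(1,b),(2,c)$ centred at $(1,x)$, because there the comparison point $(a+c)/2$ on the middle row is typically a half-integer, where $f_1$ is undefined; convexity of $f_1$ together with $f_1\le h$ at integers does not imply the needed inequality there (the paper explicitly flags this in a footnote). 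The paper's fix is the slope checks: for each sampled $x$ it additionally verifies that the affine functions through $(x-1,f_1(x-1)),(x,f_1(x))$ and through $(x,f_1(x)),(x+1,f_1(x+1))$ lie below $h$ on the half-integer grid on the correct side of $x$, again via adaptive convex minimization. Without something equivalent, the set of points passing your test need not support a convex extension, and the correctness argument collapses. You should also be careful about what ``passing the first layer'' gives you: the paper never relies on a probabilistic pass/fail of the row tester, but on the deterministic fact that a point passing \emph{all} its triple tests agrees with a convex replacement $\tilde f_i$; that is the handle that makes the repair argument go through.

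Your lower bound construction is genuinely different from the paper's and, as stated, does not work. With a \emph{fixed} convex base (you suggest the zero function) and perturbations applied only under $\mathcal N$, a single query to any planted entry already distinguishes the two distributions, so the birthday argument over triples never gets a chance to apply; the claim ``a violation is visible only through a full matched diagonal triple'' is false because the \emph{values} leak independent of whether you see an entire triple. To hide the plants you must perturb $\mathcal Y$ too, with identical marginals, but for the zero base (or any affine base) there are no local $\pm1$ perturbations that preserve convexity, so there is no natural yes-distribution. The paper resolves exactly this tension by taking a quadratic base $g_B(x)=(x_1^B)^2+2\sum_{i\ge2}(x_i^B)^2$ whose second difference along the hidden lattice direction $b_1=a$ is $2$ while along the orthogonal directions it is $\ge 4$. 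That slack is what lets $\mathcal D_Y$ add an independent $\pm1$ \emph{constant per line} in the $b_1$ direction while remaining convex (Claim~\ref{clm: Yes distribution is convex HighDim}), and lets $\mathcal D_N$ alternate $\pm1$ along each such line to create $\Omega(n)$ disjoint violations (Claim~\ref{clm: No distributionis far from convex HighDim}). Indistinguishability then follows because the hidden slope $a_2$ ranges over $\Theta(n)$ values and a non-adaptive query set of size $q$ produces only $O(q^2)$ pair-directions, so with $q=o(\sqrt n)$ no two queries land on the same hidden line with high probability. The birthday exponent you aimed for is right, but the crucial missing ingredients are (i) a convexity-preserving yes-distribution with matching marginals and (ii) a hidden \emph{direction} parameter rather than a hidden matching, since the direction is what couples the right number of violations with the right indistinguishability exponent.
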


The exponential gap between the adaptive and non-adaptive query complexity of convexity testing in Theorem~\ref{thm: [3]x[n]} stands in stark contrast to the situation for the related problem of testing monotonicity: there it is known that adaptivity does not yield \emph{any} reduction in query complexity, as there is a non-adaptive monotonicity tester for functions $f \colon [n]^d \to \R$ with query complexity $O(d \log n)$~\cite{ChakrabartyS13a} and every monotonicity tester (adaptive or not) has query complexity $\Omega(d \log n)$~\cite{ChakrabartyS14}.

\subsection{Testing convexity over high-dimensional domains}

Ben-Eliezer's upper bound for testing convexity~\cite{Ben-Eliezer19} also carries over to high-dimensional settings. When the dimension $d$ is large, however, the bound is quite weak: it shows that $O(d n^{d-1})$ queries suffice to test convexity non-adaptively. This is (barely) sublinear in the domain size $n^d$ when $d = o(n)$.

Blais, Raskhodnikova, and Yaroslavtsev~\cite{BlaisRY14} previously showed that non-adaptive algorithms that test \emph{linear convexity} of functions over the hypergrid $[n]^d$ have query complexity $\Omega(d \log n)$. (Linear convexity is a slightly different notion of convexity than the one studied here; see Appendix~\ref{app:otherConvexity} for details.)
We show that a much stronger lower bound holds for the problem of testing convexity: any non-adaptive algorithm for testing convexity of functions over $[n]^d$ has query complexity that is linear in $n$ and exponential in $d$.

\begin{thm}
\label{Thm:HighDimLower}
For every $d \ge 2$ and any $\epsilon \le \frac1{10}$, any bounded-error non-adaptive $\epsilon$-tester for convexity has query complexity $\Omega\s[(\frac{n}{d})^{\frac{d}{2}}]$.
\end{thm}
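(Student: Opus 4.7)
I plan to apply Yao's minimax principle. It suffices to exhibit two distributions on functions $[n]^d \to \R$ --- a distribution $D_{\text{yes}}$ supported on convex functions and a distribution $D_{\text{no}}$ supported on functions that are $\eps$-far from convex --- such that no deterministic non-adaptive algorithm making fewer than $c(n/d)^{d/2}$ queries can distinguish them with constant advantage.

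The construction is built around a strictly convex base function such as $f_0(x) = A \|x - x_0\|^2$ for a center $x_0$ near the middle of $[n]^d$ and a large constant $A$. A direct computation shows that the convexity margin of $f_0$ at any collinear triple $(a, b, c)$ with $b = (a+c)/2$ equals $A\|a-c\|^2/4$. Both $D_{\text{yes}}$ and $D_{\text{no}}$ will be perturbations $f_0 + g$, where $g$ is random and chosen so that the one-point marginal distribution of the function value is identical at every point of $[n]^d$ under the two laws --- this way, individual queries carry no distinguishing information. Under $D_{\text{yes}}$, the perturbation $g$ is constrained to preserve convexity (e.g.\ a random affine function together with compensating random offsets calibrated against $f_0$'s convexity margin). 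Under $D_{\text{no}}$, $g$ contains $\Theta(\eps n^d)$ randomly placed ``bumps'' of magnitude $\delta = \Theta(An)$ at lattice points, producing enough local convexity violations to force $\eps$-farness from any convex function.

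The key geometric observation motivating the bound $(n/d)^{d/2}$ is the following. A bump of magnitude $\delta$ placed at $b$ produces a convexity violation at the triple $(b-v, b, b+v)$ if and only if $\delta > A\|v\|^2$, i.e.\ $\|v\| < \sqrt{\delta/A} = \Theta(\sqrt{n})$. Hence the witness points for any given bump are confined to a Euclidean ball of radius $\Theta(\sqrt{n})$ around $b$. By the standard volume estimate for high-dimensional balls, $\pi^{d/2}r^d/\Gamma(d/2+1)$, such a ball contains $\Theta((n/d)^{d/2})$ lattice points. Any non-adaptive detection of a violation therefore requires at least three queries falling into such a ball, with two of them symmetric around the third.

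The main obstacle, and where I expect the argument to be most delicate, is the indistinguishability step: proving that for every fixed query set $Q$ with $|Q| \le c(n/d)^{d/2}$, one has $\| D_{\text{yes}}|_Q - D_{\text{no}}|_Q \|_{TV} \le 1/3$. Outside the effective witness region, the distributions are coupled trivially because marginals agree point-by-point. Inside, one must bound the expected number of witnessable collinear triples $(a,b,c) \in Q^3$ with $b$ a bump and $\|a-c\| = O(\sqrt{n})$. Since there are $\Theta(\eps n^d)$ bumps, each bump contributes $O((n/d)^{d/2})$ candidate pairs, and each such pair is hit only with probability $O((|Q|/n^d)^2)$, a second-moment / union-bound calculation yields that the expected number of detected witnesses is $O(\eps \cdot |Q|^3 \cdot (n/d)^{d/2} / n^{2d})$, which drops below a constant precisely when $|Q| = o((n/d)^{d/2})$. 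Combining this with the matching marginals via a hybrid coupling argument will give the desired TV bound and hence the theorem.
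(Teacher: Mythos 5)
Your high-level framing (Yao's principle, a quadratic base function, perturbations with matching marginals) points in the right direction, but the construction you sketch has a genuine gap at its core, and it is different in an essential way from what the paper does.

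The obstacle is your indistinguishability step, and it fails for two reasons. First, matching one-point marginals is not achievable in your setup: a bump of magnitude $\delta = \Theta(An)$ at a lattice point $b$ violates every local convexity constraint through $b$ (the nearest-neighbour triple $(b-e_i,b,b+e_i)$ has margin only $\Theta(A)$), so no convexity-preserving perturbation in $D_{\mathrm{yes}}$ can ever place probability mass on the value $f_0(b)+\delta$. Second, and more fundamentally, even if one-point marginals did agree, the claim that ``outside the effective witness region the distributions are coupled trivially'' is wrong: matching one-point marginals says nothing about joint distributions over several queries. In your $D_{\mathrm{no}}$ the bumps are placed independently, so the values at distinct queried points are (conditionally on $f_0$) independent; any convexity-preserving $D_{\mathrm{yes}}$ necessarily has strong correlations between nearby values (an affine perturbation, for instance, is determined by $d+1$ parameters). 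A tester making a handful of queries could distinguish these joint laws long before it ever catches a witness triple, so bounding the number of detectable witnesses does not bound the total-variation distance. Relatedly, your second-moment estimate treats membership of a pair in a \emph{deterministic} query set as a small probability, which is not the quantity Yao's principle asks you to control, and even taken at face value it does not produce the exponent $(n/d)^{d/2}$.

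The paper closes exactly this gap with a different construction. Instead of an isotropic quadratic with independent large bumps, it uses an anisotropic quadratic $g_B(x) = (x^B_1)^2 + 2\sum_{i\ge 2}(x^B_i)^2$ in a random integer basis $B=B(a)$, where the first basis vector $a$ is drawn uniformly from $\Theta\big((n/d)^d\big)$ short lattice vectors with coprime leading coordinates. Both $\mathcal{D}_Y$ and $\mathcal{D}_N$ perturb $g_B$ by $\pm 1$ with one sign bit $\sigma(z)$ per line parallel to $a$; the only difference is that $\mathcal{D}_Y$ uses $\sigma(z)$ constantly along the line (preserving convexity, because the transverse second differences of $g_B$ are $\ge 4$) while $\mathcal{D}_N$ multiplies by $(-1)^{x^B_1}$ (creating a violation on every consecutive triple along the line). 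The payoff is that if no two queried points lie on a common line in direction $a$ --- the paper calls such a basis ``hidden'' --- then the joint distribution of the observed values is \emph{exactly} the same under $\mathcal{D}_Y$ and $\mathcal{D}_N$: each queried value is $g_B(x)\pm 1$ with the signs independent and uniform. Since a set of $q$ queries determines at most $\binom{q}{2}$ directions and there are $\Omega\big((n/(4d))^d\big)$ choices of $a$, a non-adaptive tester with $q = o\big((n/d)^{d/2}\big)$ queries leaves $B$ hidden with probability $1-o(1)$, and the theorem follows. This ``hidden direction'' device, which turns the indistinguishability into an exact coupling conditioned on a low-probability collision event, is the key idea missing from your proposal.
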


Note that the trivial upper bound for testing convexity (or any other property) of functions over $[n]^d$ is $n^d$, so Theorem~\ref{Thm:HighDimLower} shows that non-adaptive convexity testers cannot do significantly better (qualitatively) than the na\"ive brute-force testing algorithm.

This result also implies a general lower bound of $\Omega(d \log n)$ queries for adaptive convexity testers of convexity for functions over the hypergrid $[n]^d$. This is the first general lower bound for convexity testing which shows that the query complexity must scale as the product of the dimension and the logarithm of the length of hypergrids.

\subsection{Discussion and open problems}
\label{sec:discussion}

Our results suggest two main open problems.

\newtheorem{open}{Open Problem}
\begin{open}
\label{open:lowD}
Is it possible to $\Omega(1)$-test convexity of functions $f \colon [n] \times [n] \to \R$ with $\polylog(n)$ queries?
\end{open}

Parnas, Ron, and Rubinfeld~\cite{ParnasRR03} also raised the problem of determining the query complexity for testing convexity in $d \ge 2$, and the upper bound in Theorem~\ref{thm: [3]x[n]} provides the first suggestion that the query complexity of the problem might be exponentially smaller than---and not just sublinear in---the domain size. As the lower bound in the same theorem shows, however, any algorithm that would provide a positive answer to this question would have to be adaptive.

We can also generalize Open Problem~\ref{open:lowD} to ask whether convexity testing of $f \colon [n]^d \to \R$ can be done with query complexity $\polylog(n)$ for every constant value of $d$. For high-dimensional settings, it is also natural to ask about the dependence on $d$.

\begin{open}
\label{open:highD}
Must every $\Omega(1)$-tester for convexity of functions $f \colon [n]^d \to \R$ have query complexity $2^{\Omega(d)}$?
\end{open}

Theorem~\ref{Thm:HighDimLower} gives a positive answer to this question for non-adaptive algorithms, but it still allows for the possibility that there is a convexity tester with query complexity that is polynomial in $d$. It is also possible that the best query complexity of convexity testers is subexponential in $d$, even if it is not polynomial in $d$. (C.f., for instance, the submodularity testing problem, where it is known that $2^{\tilde{O}(\sqrt{d})}$ queries suffice to test submodularity of functions $f : \{0,1\}^d \to \R$~\cite{SeshadhriV14}. It is possible that a similar bound holds for testing convexity as well.)

\subsection{Organization}

We introduce some basic facts about convexity in Section~\ref{sec:convexity}, estalish our algorithmic results in Sections~\ref{sec:1dimUpper} and~\ref{sec: [3]x[n] Upper}, and give the proofs for our hardness results in Sections~\ref{sec: High Dim lower bound} and~\ref{sec: 1-D lower bound}.

Specifically, the proofs of Theorems~\ref{thm:1dimUpper} and~\ref{thm:1dimDistfree} for testing convexity over one-dimensional domains are presented Section~\ref{sec:1dimUpper}. The upper bound in Theorem~\ref{thm: [3]x[n]} for testing convexity of functions on the stripe is established in Section~\ref{sec: [3]x[n] Upper}.

The lower bound in Theorem~\ref{Thm:HighDimLower} for testing convexity over high-dimensional domains is presented in Section~\ref{sec: High Dim lower bound}; the lower bound for the stripe in Theorem~\ref{thm: [3]x[n]} is found in Section~\ref{sec: [3]x[n] lower bound}; and the optimal lower bound for testing convexity on the line in
Theorem~\ref{thm:1D} is presented in Section~\ref{sec: 1-D lower bound}.




%



\section{Basic facts about convexity}
\label{sec:convexity}


In this section, we establish some basic facts about convex functions over finite subsets of $\bR^d$.
We use the notation $[n]=\{0,1,\dots,n-1\}$ and $[a..b]=\{a,a+1,\dots,b-1\}$.
All the results in this section are standard; we provide the missing proofs in Appendix~\ref{app:missingProofs} for completeness.

The \emph{restriction} of a function $f\colon X \to \bR$ to a domain $Y \subseteq X$ is the function $f|_Y\colon Y \to \bR$ defined by $f|_Y(y) = f(y)$ for each $y \in Y$. Our first basic observation is that restriction preserves convexity.

\begin{lem}
\label{lem:restriction}
Let $f\colon X\to\bR$ be a convex function and $Y\subseteq X$.  Then the function $f|_Y\colon Y\to\bR$ restricted to $Y$ is also convex.
\end{lem}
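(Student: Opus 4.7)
The proof should be essentially immediate from the definition of convexity given in the introduction, so my plan is to directly verify the defining condition of convexity for $f|_Y$ by inheriting it from $f$.

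Concretely, I would take an arbitrary finite collection $y_1, \ldots, y_k \in Y$ and non-negative weights $\lambda_1, \ldots, \lambda_k \geq 0$ with $\sum_i \lambda_i = 1$ such that the convex combination $\sum_i \lambda_i y_i$ lies in $Y$. I then observe that because $Y \subseteq X$, the points $y_1, \ldots, y_k$ all lie in $X$ and the combination $\sum_i \lambda_i y_i$ also lies in $X$. The hypothesis that $f$ is convex on $X$ therefore applies and yields
\[
f\sB[\sum_i \lambda_i y_i] \le \sum_i \lambda_i f(y_i).
\]
Since $f|_Y(y) = f(y)$ for every $y \in Y$, and since both $\sum_i \lambda_i y_i$ and each $y_i$ lie in $Y$, this inequality can be rewritten as $f|_Y(\sum_i \lambda_i y_i) \le \sum_i \lambda_i f|_Y(y_i)$, which is exactly the convexity condition for $f|_Y$.

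There is no real obstacle here: the only subtlety worth flagging in the write-up is that the definition of convexity on a finite (discrete) set $X$ only quantifies over those convex combinations that happen to land in $X$, and the same caveat applies for $Y$. Restricting the class of admissible combinations (from "lands in $X$" to the stronger "lands in $Y$") can only make the hypothesis easier to satisfy, which is why inheritance is automatic. I would keep the write-up to two or three sentences.
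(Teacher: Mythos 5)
Your proof is correct and is exactly the direct verification one expects; the paper itself leaves this lemma unproven (Appendix~\ref{app:missingProofs} only supplies proofs of Lemma~\ref{lem:extension} and Theorem~\ref{thm:minimalSimplex}, treating Lemma~\ref{lem:restriction} as immediate from the definition). Your observation about the quantifier—restricting from ``combinations landing in $X$'' to ``combinations landing in $Y$'' only weakens what must be checked—is the right thing to flag and is precisely why the inheritance is automatic.
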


To define the extension of convex functions, we first need the notion of a centred simplex.

\begin{defn}
A \emph{simplex} in $\bR^d$ is a set of affinely independent points.
A \emph{centred simplex} in $\bR^d$ is a collection of points $x_1,\dots, x_k, z$ such that $x_1,\dots,x_k$ form a simplex, and $z$ can be (uniquely) expressed as 
\begin{equation}
\label{eqn:centre}
z = \sum_i \lambda_i x_i ,
\end{equation}
where all $\lambda_i > 0$ and $\sum_i \lambda_i = 1$.
The point $z$ is called the \emph{centre} of the simplex, and 
we say that the simplex is \emph{centred at $z$} when this condition is satisfied.
\end{defn}

In other words, $x_1,\ldots,x_k,z$ is a centred simplex if $z$ is inside the convex hull of $x_1,\dots,x_k$ and no $x_i$ can be removed from the simplex without breaking this property.
When $X$ is a finite subset of $\bR^d$ and $x_1,\dots,x_k,z \in X$, we say that the centred simplex is \emph{of $X$}.

\begin{defn}
The centred simplex $x_1,\dots,x_k, z$ of $X$ is \emph{minimal} iff $z$ is the only point of $X$ inside the convex hull of the simplex $x_1,\dots,x_k$ except for its vertices.
\end{defn}

\begin{lem}
\label{lem:extension}
Let $f\colon X\to\bR$ be a convex functions with $X$ a finite subset of $\bR^d$.
Then the function can be extended to a convex function on the whole space $\bR^d$.
That is, there exists a convex function $g\colon \bR^d\to\bR$ such that $g(x) = f(x)$ for all $x\in X$.
Moreover, for a point $z$ in the convex hull of $X$ the function $g$ can be defined as
\begin{equation}
\label{eqn:g}
g(z) = \min_{x_1,\dots,x_k} \sum_i \lambda_i f(x_i),
\end{equation}
where $x_1,\dots,x_k$ range over all simplices of $X$ centred at $z$, and $\lambda_i$ are as in Equation~(\ref{eqn:centre}).
\end{lem}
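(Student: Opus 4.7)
The plan is to first build the extension on the convex hull $\operatorname{conv}(X)$ by minimising over all convex-combination representations of a point $z$, show that this minimum is in fact realised by some centred simplex of $X$, and then extend the resulting function from $\operatorname{conv}(X)$ to $\bR^d$ by a standard convex-envelope construction.

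Concretely, I would define the auxiliary function $\tilde g\colon\operatorname{conv}(X)\to\bR$ by
\[
\tilde g(z) = \min\left\{\sum_i \lambda_i f(x_i) :\ x_i\in X,\ \lambda_i\ge 0,\ \sum_i \lambda_i=1,\ \sum_i \lambda_i x_i = z\right\},
\]
where the minimum ranges over all convex-combination representations of $z$, not merely centred ones; the minimum is attained because for fixed $z$ the feasible set of coefficient vectors is a compact polytope. Convexity of $\tilde g$ on $\operatorname{conv}(X)$ is immediate, by combining two optimal representations in the obvious way. The equality $\tilde g(x)=f(x)$ on $X$ follows from the trivial representation $\lambda_x=1$ (giving $\tilde g(x)\le f(x)$) together with the convexity of $f$ on $X$, which forces $f(x)\le\sum_i\lambda_i f(x_i)$ for any representation of $x$; this is the only place where the hypothesis on $f$ is used.

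To show $\tilde g(z)=g(z)$, the direction $g(z)\ge\tilde g(z)$ is automatic since centred simplices are particular convex-combination representations. For the reverse, start from an optimal representation $z=\sum_i\lambda_i x_i$ of $\tilde g(z)$ and discard zero coefficients so that all $\lambda_i>0$. If the $\{x_i\}$ are affinely dependent, choose a nonzero $(\mu_i)$ with $\sum_i\mu_i=0$ and $\sum_i\mu_i x_i=0$; then $z=\sum_i(\lambda_i+t\mu_i)x_i$ remains a valid representation for $|t|$ small, and its value $\tilde g(z)+t\sum_i\mu_i f(x_i)$ is linear in $t$. Optimality together with the freedom to move $t$ in both directions forces $\sum_i\mu_i f(x_i)=0$, so we may push $t$ until one coefficient first hits zero, strictly reducing the support without changing the value. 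Iterating yields an affinely independent subfamily of the $x_i$'s, i.e., a centred simplex at $z$ achieving $\tilde g(z)$.

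Finally, to extend $g$ from $\operatorname{conv}(X)$ to $\bR^d$, set $G(z)=\sup\{\ell(z):\ell\text{ affine on }\bR^d,\ \ell\le g\text{ on }\operatorname{conv}(X)\}$. Since $g$ is bounded below by $\min_{x\in X} f(x)$, constant affine lower bounds exist, and $G$ is everywhere finite and convex as a supremum of affine functions. The inequality $G\le g$ on $\operatorname{conv}(X)$ is automatic, and the reverse follows at each $z_0\in\operatorname{conv}(X)$ by separating the convex epigraph of $g$ from $(z_0,g(z_0)-\delta)$ for every $\delta>0$, yielding an affine $\ell\le g$ with $\ell(z_0)\ge g(z_0)-\delta$. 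The main obstacle is the perturbation-and-reduce step: one has to verify carefully that the procedure terminates at an affinely independent simplex while preserving the optimal value; all other steps are routine convex-analysis manipulations.
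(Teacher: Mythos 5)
Your first four steps---defining $\tilde g$ as the minimum over all convex-combination representations, proving convexity of $\tilde g$ on $\operatorname{conv}(X)$, proving $\tilde g=f$ on $X$ using convexity of $f$, and the perturbation-and-reduce argument showing the optimum is attained on a centred simplex---correctly reproduce the content of the paper's proof (the paper organizes it in the opposite order, defining $g$ over centred simplices first and then proving equality with the unrestricted minimum; the perturbation argument is the same). The final step, however, is wrong as written. The function
\[
G(z)=\sup\bigl\{\ell(z):\ell\text{ affine on }\bR^d,\ \ell\le g\text{ on }\operatorname{conv}(X)\bigr\}
\]
is $+\infty$ outside $\operatorname{conv}(X)$ in general. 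Take $X=\{0,1\}\subset\bR$ and $f\equiv 0$, so $g\equiv 0$ on $[0,1]$; for every $N>0$ the affine map $\ell(z)=Nz-N$ satisfies $\ell\le 0$ on $[0,1]$, yet $\ell(2)=N$, so $G(2)=+\infty$. Your finiteness argument ("constant affine lower bounds exist") only rules out $G=-\infty$; nothing in the constraint $\ell\le g$ on $\operatorname{conv}(X)$ bounds the slope of admissible $\ell$, and once one leaves the compact set $\operatorname{conv}(X)$ there is no upper bound. The paper sidesteps this entirely by citing a reference for the extension from the convex hull to $\bR^d$. A direct fix compatible with the rest of your argument: since $X$ is finite, $\tilde g$ is piecewise affine on $\operatorname{conv}(X)$ with finitely many affine pieces $\ell_1,\dots,\ell_m$ (one per full-dimensional cell of some triangulation of $\operatorname{conv}(X)$ by simplices of $X$), and convexity of $\tilde g$ forces $\tilde g=\max_j\ell_j$ on $\operatorname{conv}(X)$; then $G:=\max_j\ell_j$, being a maximum of finitely many affine functions, is finite and convex on all of $\bR^d$ and restricts to $\tilde g$ on $\operatorname{conv}(X)$.
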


Combining the above two lemmata, we see that if $f\colon X\to\bR$ is a convex function with $X\subseteq\bR^d$ finite, and $X\subseteq Y\subseteq \bR^d$, then the function $f$ can be extended to a convex function on $Y$.  This is how we will usually use the above lemma.

We say that a function $f\colon X\to\bR$ is convex on a centred simplex $x_1,\dots, x_k, z$ if its restriction to this set of points is convex.  
This is equivalent to
\[
f(z) \le \sum_i \lambda_i f(x_i),
\]
where $\lambda_i$ are as in Equation~(\ref{eqn:centre}).
This notion provides a characterization of convexity that we will use to test convex functions.

\begin{thm}
\label{thm:minimalSimplex}
A function $f\colon X\to\bR$ is convex if and only if it is convex on every minimal centred simplex of $X$.
\end{thm}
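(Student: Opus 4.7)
The forward direction is immediate from \lemref{lem:restriction}, since any minimal centred simplex is a subset of $X$. The interesting direction is the converse, which I plan to establish in two stages: first showing that $f$ is convex on every centred simplex of $X$ (not just the minimal ones), and then passing from centred simplices to arbitrary convex combinations via a Carath\'eodory-style reduction.

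For the first stage I would use strong induction on $m := |X \cap S|$, where $S = \text{conv}(x_1,\dots,x_k)$ is the simplex of the centred simplex $(x_1,\dots,x_k,z)$. The base case $m = k+1$ is exactly the minimal case and follows from the hypothesis. For $m > k+1$ there exists $y \in (X \cap S)\setminus\{x_1,\dots,x_k,z\}$; by affine independence of the $x_i$ I can uniquely write $y = \sum_i \mu_i x_i$. Subdividing $S$ by using $y$ as a new vertex produces sub-simplices $T_i = \text{conv}(\{x_j : j \ne i\} \cup \{y\})$ for $i$ with $\mu_i > 0$, and $z$ lies in some $T_{j_0}$. Since $x_{j_0} \in X \cap S$ but $x_{j_0} \notin T_{j_0}$, we have $|X \cap T_{j_0}| < m$, so the induction hypothesis applied to the centred simplex of $z$ in $T_{j_0}$ yields $f(z) \le \sum_{j \ne j_0} \alpha_j f(x_j) + \beta f(y)$ where $z = \sum_{j\ne j_0}\alpha_j x_j + \beta y$. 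The symmetric construction, subdividing instead using $z$ as the new vertex, gives a sub-simplex $T'$ containing $y$ with $|X \cap T'| < m$ and yields $f(y) \le \sum_{j \ne j'_0} \alpha'_j f(x_j) + \beta' f(z)$. Substituting the second inequality into the first produces $(1-\beta\beta')f(z) \le \sum_i c_i f(x_i)$; a short calculation relating $(\lambda_i)$, $(\mu_i)$, $(\alpha_j,\beta)$, and $(\alpha'_j,\beta')$ through the identities $z=\sum\lambda_i x_i$ and $y=\sum\mu_i x_i$ shows $c_i = (1-\beta\beta')\lambda_i$, and $y \ne z$ forces $\beta\beta' < 1$ (in particular $j_0 \ne j'_0$), delivering the required $f(z) \le \sum_i \lambda_i f(x_i)$.

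For the second stage, given an arbitrary convex combination $z = \sum_i \lambda_i x_i$ with $\lambda_i \ge 0$, $\sum_i \lambda_i = 1$, and $x_i, z \in X$, if the $x_i$'s are affinely dependent I pick an affine relation $\sum_i \nu_i x_i = 0$, $\sum_i \nu_i = 0$, and replace $\lambda_i$ by $\lambda_i + t\nu_i$, choosing the sign of $t$ so that $t \sum_i \nu_i f(x_i) \le 0$ and choosing its magnitude so that one coefficient vanishes while all remain non-negative. Iterating reduces to an affinely independent representation---i.e., a centred simplex---with no larger weighted sum, and the first stage then yields $f(z) \le \sum_i \lambda_i f(x_i)$. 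The main obstacle is the bookkeeping in the inductive combination step: verifying the exact algebraic identity $c_i = (1-\beta\beta')\lambda_i$ and handling the degenerate case where $z$ lies on a face of $T_{j_0}$, so that the induction is actually applied to a lower-dimensional centred simplex of $X$.
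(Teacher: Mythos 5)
Your proof is correct, but it takes a genuinely different route from the paper's. The paper argues by contrapositive via a minimal-counterexample argument: it takes a violating convex combination $z=\sum_i\lambda_i x_i$ with $k$ as small as possible and the hull $\mathrm{conv}(x_1,\dots,x_k)$ inclusion-wise minimal, uses the Carath\'eodory-style shift to force $x_1,\dots,x_k$ to be a simplex, and then---crucially---if the centred simplex is not minimal it chooses $z$ to be the point \emph{maximizing} the violation $f(z)-\sum_i\lambda_i f(x_i)$. With that extremal choice, a single shift $\lambda_i\mapsto\lambda_i-t\mu_i$ towards any other interior point $y$ produces a violating combination on a strictly smaller hull, giving the contradiction. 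You instead set up a strong induction on $m=|X\cap S|$ and, in the inductive step, perform the barycentric subdivision twice---once using $y$ to bound $f(z)$ and once using $z$ to bound $f(y)$---then eliminate $f(y)$ by substitution, using $\beta\beta'<1$. This two-way substitution is a clean replacement for the paper's ``pick the worst $z$'' trick, at the cost of one extra application of the induction hypothesis and the bookkeeping you flag. Both reduce arbitrary convex combinations to simplices by the same Carath\'eodory argument (the paper folds it into the minimality of $k$; you do it as a separate stage).

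Two small remarks on your ``main obstacle'' paragraph. The identity $c_i=(1-\beta\beta')\lambda_i$ is not actually delicate: substituting $y=\sum_{j\ne j_0'}\alpha_j'x_j+\beta'z$ into $z=\sum_{j\ne j_0}\alpha_jx_j+\beta y$ gives $(1-\beta\beta')z=\sum_i c_ix_i$ with $\sum_i c_i=1-\beta\beta'$, and affine independence of the $x_i$ then forces $c_i=(1-\beta\beta')\lambda_i$ for free. The degenerate case where $z$ lands on a face of $T_{j_0}$ is also benign, since the face is a simplex on a subset of $\{x_j:j\ne j_0\}\cup\{y\}$ and still omits $x_{j_0}$, so $|X\cap(\text{face})|<m$ and strong induction applies; one only needs $\beta>0$, which holds because $\beta=0$ would make $z$ an affine combination of $k-1$ of the $x_j$, contradicting affine independence together with $\lambda_i>0$ for all $i$. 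Finally, your parenthetical ``in particular $j_0\ne j_0'$'' is not needed and need not even be true: $\beta\beta'<1$ already follows from $\beta<1$ (if $\beta=1$ then all $\alpha_j=0$ and $z=y$).
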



Let us apply the general Theorem~\ref{thm:minimalSimplex} to the setting where $f$ is a function over the line.
For the rest of this section, let $X = \{x_1,x_2,x_3,\ldots\} \subseteq \R$ where $x_1 < x_2 < x_3 < \cdots$.
A centred simplex in this case is a \emph{triple} $x<y<z$ and $y$ is the centre of the triple.
A function $f$ is convex on the triple if and only if
\begin{equation}
\label{eqn:convexity}
\frac{f(y)-f(x)}{y-x} \le \frac{f(z)-f(y)}{z-y}.
\end{equation}
A minimal centred simplex is a \emph{minimal triple} of the form $x_i<x_{i+1}<x_{i+2}$.
Thus, we get the following corollary.

\begin{cor}
\label{cor:lineTriple}
The function $f\colon X\to \bR$ is convex if and only if it is convex on every triple $x_i<x_{i+1}<x_{i+2}$ of consecutive points.
\end{cor}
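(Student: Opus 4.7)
The plan is to deduce the corollary directly from Theorem~\ref{thm:minimalSimplex} by identifying what the minimal centred simplices of $X$ look like in one dimension, and then noting that the algebraic convexity condition on such a simplex is exactly the inequality~(\ref{eqn:convexity}).

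The forward direction is immediate: if $f$ is convex on $X$, then for any three consecutive points $x_i < x_{i+1} < x_{i+2}$, the restriction $f|_{\{x_i,x_{i+1},x_{i+2}\}}$ is convex by Lemma~\ref{lem:restriction}, which is equivalent to inequality~(\ref{eqn:convexity}) at those three points.

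For the reverse direction, I would first observe that a simplex in $\bR$ is just a pair of distinct points (three or more collinear points are affinely dependent), so every centred simplex of $X$ has the form $(x,z;y)$ with $x,z\in X$, $x\neq z$, and $y\in X$ lying strictly between $x$ and $z$ (strict, because the definition requires $\lambda_i>0$). Such a centred simplex is \emph{minimal} in the sense of the definition above exactly when the open interval $(x,z)$ contains no point of $X$ other than $y$, which forces $x,y,z$ to be three consecutive elements $x_i,x_{i+1},x_{i+2}$ of $X$. Conversely, any consecutive triple is a minimal centred simplex. Thus the minimal centred simplices of $X$ are precisely the consecutive triples.

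Now apply Theorem~\ref{thm:minimalSimplex}: $f$ is convex on $X$ if and only if it is convex on every minimal centred simplex, i.e., on every consecutive triple. Writing out the centred-simplex convexity condition $f(y)\le \lambda_1 f(x)+\lambda_2 f(z)$ with $y = \lambda_1 x + \lambda_2 z$, $\lambda_1,\lambda_2>0$, $\lambda_1+\lambda_2=1$, and solving for $\lambda_1 = \tfrac{z-y}{z-x}$, $\lambda_2 = \tfrac{y-x}{z-x}$, one obtains exactly inequality~(\ref{eqn:convexity}). This completes the proof. There is no real obstacle here — the entire content is bookkeeping around the 1D specialization of the definitions, with Theorem~\ref{thm:minimalSimplex} doing all the work.
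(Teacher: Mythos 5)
Your proof is correct and takes essentially the same route as the paper, which presents the corollary as an immediate consequence of the preceding discussion: centred simplices in $\R$ are triples $x<y<z$ with $y$ strictly interior, minimal ones are consecutive triples, and Theorem~\ref{thm:minimalSimplex} together with the algebraic identification of the convexity condition with inequality~(\ref{eqn:convexity}) gives the result. Your write-up simply makes explicit the bookkeeping that the paper leaves implicit.
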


A nice feature of convex functions on the line is that we can efficiently find their minimum.

\begin{thm}
\label{thm:minimum}
Assume $f\colon X\to\bR$ is a convex function.  It is possible to find the minimum of $f$ on $X$ in time $O(\log |X|)$.
\end{thm}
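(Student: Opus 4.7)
The plan is to reduce the problem to a binary search on the sequence of consecutive slopes. For $i=1,\dots,|X|-1$, define
\[
s_i \;=\; \frac{f(x_{i+1})-f(x_i)}{x_{i+1}-x_i}.
\]
By Corollary~\ref{cor:lineTriple}, $f$ is convex on every triple $x_i<x_{i+1}<x_{i+2}$ of consecutive points, and rearranging the convexity inequality~(\ref{eqn:convexity}) for such a triple gives $s_i \le s_{i+1}$. Hence the sequence $s_1 \le s_2 \le \cdots \le s_{|X|-1}$ is monotonically non-decreasing.

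Given this monotonicity, the location of the minimum of $f$ is essentially determined by the first index at which the slope becomes non-negative. More precisely, if $s_i < 0$ then $f(x_{i+1}) < f(x_i)$, and if $s_i \ge 0$ then $f(x_{i+1}) \ge f(x_i)$; combined with the monotonicity of the $s_i$'s, this shows that $f(x_1) \ge f(x_2) \ge \cdots \ge f(x_{i^\star}) \le f(x_{i^\star+1}) \le \cdots$, where $i^\star$ is the smallest index with $s_{i^\star} \ge 0$ (and $i^\star = |X|$ if no such index exists). Thus the minimum is attained at $x_{i^\star}$.

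To find $i^\star$, perform a binary search over $\{1,\dots,|X|-1\}$: at each step, pick the middle index $m$, query $f(x_m)$ and $f(x_{m+1})$ to compute $s_m$, and recurse on the right half if $s_m < 0$ and on the left half if $s_m \ge 0$. After $O(\log |X|)$ rounds the search localizes $i^\star$, using two queries per round, for a total of $O(\log |X|)$ queries and time. The boundary cases (all slopes negative, or all non-negative) are handled by simply returning $x_{|X|}$ or $x_1$ respectively when the search falls off the end.

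There is no serious obstacle here: the only subtlety is verifying that the binary-search invariant is preserved, which is immediate from the monotonicity of the $s_i$'s guaranteed by convexity. The key conceptual step is the observation that convexity of $f$ on consecutive triples is exactly the statement that the discrete derivative sequence is non-decreasing, which turns minimization into a one-dimensional threshold-finding problem solvable by binary search.
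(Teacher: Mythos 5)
Your proof is correct and takes essentially the same approach as the paper: both perform bisection based on comparing two consecutive function values at the midpoint (which is exactly the sign of your $s_m$) and recursing into the half that must contain the minimum. Your writeup just makes explicit the justification — monotonicity of the discrete-slope sequence, hence unimodality of $f$ — which the paper's terse "Use bisection" leaves implicit.
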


\pfstart
Use bisection.  Let $n = |X|$.
If $n<6$, query all the values of $f$ and find the minimum.
Otherwise, let $a=\floor[n/2]$ and $b=a+1$.
Query $f(x_a)$ and $f(x_b)$.
If $f(x_a)<f(x_b)$, execute the minimum search on the set $x_1,\dots,x_a$.
Otherwise, execute the minimum search on the set $x_b,\dots,x_n$.
By each execution, the size of the set decreases roughly by a factor of 2, hence $O(\log |X|)$ iterations suffice.
\pfend

\section{Algorithms for testing convexity over the line}
\label{sec:1dimUpper}

In this section, we prove Theorem~\ref{thm:1dimUpper} and Theorem~\ref{thm:1dimDistfree}.
Both theorems are established using similar ideas, by constructing explicit convexity testing algorithms that are inspired by the monotonicity tester on the line~\cite{Belovs18}.

\begin{defn}
\label{defn:tripleTest}
Let $a\in[n]$.  A \emph{triple test} rooted at $a$ is a (non-necessarily sorted) triple $(a,b,c)$ such that
\begin{itemize}
\item $b\in \sfig{ 2^k\floor[\frac{a-1}{2^k}], 2^k\ceil[\frac{a+1}{2^k}] }$ for some integer $k$ satisfying $1\le 2^k<n$, and
\item $c$ is either $a+1$ or $b+1$. 
\end{itemize}
The element $a$ is called the \emph{root}, and $b$ is called a \emph{hub} of $a$.  The integer $2^k$ is called the \emph{height} of the triple.
We say that $a$ passes the triple test if the function $f$ is convex on $\{a,b,c\}$.
We say that $a$ passes all its triple tests if it passes all the triple tests rooted at it.
\end{defn}

\begin{clm}
If $x<y-1$, then $x$ and $y$ have a common hub with height not exceeding $2(y -x)$.
\end{clm}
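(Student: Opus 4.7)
The plan is to exhibit, for each $x<y-1$, a power of two $h=2^k$ such that the interval $[x+1,y-1]$ contains exactly one multiple $m$ of $h$. Once this is done, uniqueness of $m$ forces it to be simultaneously the smallest multiple of $h$ that is $\ge x+1$ (so $m = 2^k \lceil (x+1)/2^k \rceil$, an upper hub of $x$) and the largest multiple of $h$ that is $\le y-1$ (so $m = 2^k \lfloor (y-1)/2^k \rfloor$, a lower hub of $y$), hence a common hub of $x$ and $y$.

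To find the right $h$, I would introduce the counting sequence $N_k := |\{m \in [x+1,y-1] : 2^k \mid m\}|$ and study its behaviour as $k$ grows. Clearly $N_0 = (y-x)-1 \ge 1$ and $N_k = 0$ for all sufficiently large $k$. Because the multiples of $2^{k+1}$ are exactly the even multiples of $2^k$, one obtains the key inequality $N_{k+1} \ge \lfloor N_k / 2 \rfloor$, which shows that (i) $N$ is non-increasing, and (ii) whenever $N_k \ge 2$ one also has $N_{k+1} \ge 1$. In particular, the sequence cannot drop from a value $\ge 2$ directly to $0$, so it must take the value $1$ at some smallest index $k^*$. Then I set $h := 2^{k^*}$ and let $m$ be the unique multiple of $h$ in $[x+1, y-1]$.

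It then remains to verify the bounds. The validity $1 \le h < n$ follows from $h \le m \le y-1 \le n-2$. For the height bound, if $k^* = 0$ then $h = 1 \le 2(y-x)$ since $y-x \ge 2$. If $k^* \ge 1$, minimality of $k^*$ gives $N_{k^*-1} \ge 2$, so $[x+1,y-1]$ contains two distinct multiples of $h/2$; these lie at distance $\ge h/2$ inside an interval of length $(y-x)-2$, giving $h \le 2(y-x) - 4 < 2(y-x)$.

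The main conceptual step, and the one that needs the most care, is the pass-through-one property: showing that $N_k$ cannot skip from $\ge 2$ directly to $0$. This ultimately boils down to the elementary observation that among any two adjacent multiples of $2^k$ exactly one is a multiple of $2^{k+1}$, which forces $N_{k+1}\ge 1$ whenever $N_k\ge 2$. Once this is in hand, the remaining steps -- extracting $h$, verifying that the unique multiple $m$ plays the role of both an upper hub of $x$ and a lower hub of $y$, and bounding the height -- are a routine unpacking of the hub definition.
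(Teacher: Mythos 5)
Your proof is correct, and it takes a genuinely different route from the paper's. The paper argues directly: it fixes a single scale $k$ with $\frac{y-x}{2} \le 2^k \le y-x$, notes that at this scale there are one or two multiples of $2^k$ strictly between $x$ and $y$, and in the two-multiple case passes to scale $2^{k+1}$ (still at most $2(y-x)$) to isolate a unique common multiple. You instead set up the counting sequence $N_k$ and run a discrete intermediate-value argument: $N_k$ is non-increasing, starts $\ge 1$, ends at $0$, and cannot skip over the value $1$ because adjacent multiples of $2^k$ alternate in $2^{k+1}$-divisibility. Both are sound; yours trades a two-line case split for a short structural lemma, is somewhat more self-contained (the paper's phrasing leaves it to the reader to see that the chosen $k$ really yields one or two strictly-interior multiples, which requires taking the smaller admissible $k$ when $y-x$ is a power of $2$), and as a bonus yields the marginally sharper bound $2^{k^*}\le 2(y-x)-4$ when $k^*\ge 1$. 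Your verification that the unique multiple is simultaneously the upper hub of $x$ and the lower hub of $y$, and the check that $1\le 2^{k^*}<n$, are also correct.
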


\begin{proof}
Let $k$ be such that $\frac{y-x}{2} \leq 2^k \leq y-x$. There can be either one or two multiples of $2^k$ between $x$ and $y$. If there is just one then we are done and that is the common hub. If there are two then there will be exactly one multiple of $2^{k+1}$ between $x$ and $y$ and that is their common hub.
\end{proof}

\begin{lem}
\label{lem:tripleTest}
Assume $x<y<z$ is a non-convex triple.  Then at least one of $x$, $y$ or $z$ fails some of its triple tests with height not exceeding $2\cdot\max\{y-x,z-y\}$.
\end{lem}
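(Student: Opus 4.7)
The plan is to prove the contrapositive: assuming every triple test rooted at $x$, $y$, or $z$ with height at most $2D := 2\max\{y-x,z-y\}$ passes, I will show that $f$ is convex on $\{x,y,z\}$. Writing $\sigma(i) := f(i+1)-f(i)$, it suffices to establish the two bracketing inequalities
\[
\mathrm{(I)}\ \frac{f(y)-f(x)}{y-x} \le \sigma(y) \qquad \text{and} \qquad \mathrm{(II)}\ \sigma(y) \le \frac{f(z)-f(y)}{z-y},
\]
since their concatenation is exactly the convexity inequality~(\ref{eqn:convexity}) on the triple $x<y<z$.

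To prove (I), I split on $y-x$. If $y-x=1$, then the left-hand side is $\sigma(x)$, and the triple test rooted at $x$ with hub $y=x+1$ (height $2^0=1$) and $c=y+1$ checks convexity on $\{x,x+1,x+2\}$, yielding $\sigma(x)\le\sigma(y)$. If $y-x\ge 2$, then the preceding claim supplies a common hub $h \in (x,y)$ of $x$ and $y$ at some height $2^k \le 2(y-x) \le 2D$, and I invoke three triple tests with this hub:
\begin{itemize}
\item root $x$, hub $h$, $c=h+1$: convexity on $\{x,h,h+1\}$ gives $\frac{f(h)-f(x)}{h-x} \le \sigma(h)$;
\item root $y$, hub $h$, $c=h+1$: when $y-h\ge 2$, convexity on $\{h,h+1,y\}$ rearranges to $\sigma(h) \le \frac{f(y)-f(h)}{y-h}$; when $y-h=1$ this inequality is the trivial identity $\sigma(h) = \frac{f(y)-f(h)}{y-h}$ and no test is needed;
\item root $y$, hub $h$, $c=y+1$: convexity on $\{h,y,y+1\}$ gives $\frac{f(y)-f(h)}{y-h} \le \sigma(y)$.
\end{itemize}
Chaining the first two inequalities is exactly convexity on $\{x,h,y\}$; the elementary implication ``$A/a \le B/b \Rightarrow (A+B)/(a+b) \le B/b$'' then yields $\frac{f(y)-f(x)}{y-x} \le \frac{f(y)-f(h)}{y-h}$, which chains with the third bullet to give (I).

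To prove (II) the argument is symmetric. If $z-y=1$ both sides equal $\sigma(y)$ and (II) is automatic. Otherwise, the same claim supplies a common hub $h' \in (y,z)$ at height $\le 2(z-y) \le 2D$, and three analogous triple tests, at root $y$ with hub $h'$ (for $c\in\{y+1,h'+1\}$) and at root $z$ with hub $h'$ (for $c=h'+1$), combine via the same weighted-average step to give $\sigma(y) \le \frac{f(h')-f(y)}{h'-y} \le \frac{f(z)-f(y)}{z-y}$, with the degenerate case $z-h'=1$ handled by the same trivial-identity observation as above.

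All tests invoked have height at most $2D$, and all auxiliary indices $x+1$, $h+1$, $y+1$, $h'+1$ lie in $[n]$ because $z \le n-1$. The principal obstacle is not the chain of inequalities itself but the bookkeeping of the degenerate sub-cases ($y-x=1$, $z-y=1$, $y-h=1$, $z-h'=1$), where one of the three bullets would collapse to two distinct points; the argument is structured so that in each such case the missing inequality is already a trivial identity and hence need not be obtained from a test.
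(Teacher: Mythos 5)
Your proof is correct and is essentially the contrapositive of the paper's argument, using the same six triple tests rooted at $x$, $y$, $z$ with the common hubs $h$ and $h'$; the paper finds a failing test directly via Corollary~\ref{cor:lineTriple} applied to the five-point restriction $x<h<y<h'<z$, whereas you show that passing all tests forces the convexity inequality through explicit chains plus the mediant step. One minor bookkeeping note: your list of degenerate sub-cases omits $h'-y=1$ (which collapses the first bullet in the proof of (II)), but as with the other degeneracies the required inequality becomes the trivial identity $\sigma(y)=\frac{f(h')-f(y)}{h'-y}$ there, so no gap results.
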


\pfstart
Assume for now that $y-x\ge 2$ and $z-y\ge 2$.
Let $h$ be the common hub between $x, y$ with height not exceeding $2(y-x)$ and $h'$ be the common hub between $y, z$ with height not exceeding $2(z-y)$. 
Consider the function $f$ restricted to the domain $x<h<y<h'<z$.
By Lemma~\ref{lem:restriction}, we know the function is not convex, hence, by Corollary~\ref{cor:lineTriple}, it is non-convex on at least one of the triples $(x, h, y)$, $(h, y, h')$, or $(y, h', z)$.
Let us consider the three cases separately:
\begin{itemize}
\item $f$ is non-convex on the triple $x,h,y$.
Consider the function $f$ on the domain $x<h<h+1\le y$.
Using Corollary~\ref{cor:lineTriple} if needed, we get that the function $f$ is non-convex on one of the triples $(x,h,h+1)$ or $(h,h+1,y)$.
Each of them constitutes a triple test: $a=x$, $b=h$, $c=h+1$, or $a=y$, $b=h$, $c=h+1$, respectively.
\item $f$ is non-convex on the triple $h,y,h'$.
Consider the function $f$ on the domain $h<y<y+1\le h'$.
The function $f$ is non-convex on one of the triples $(h,y,y+1)$ or $(y,y+1,h')$.
Again, each of them constitutes a triple test: $a=y$, $b=h$, $c=y+1$, or $a=y$, $b=h'$, $c=y+1$, respectively.
\item $f$ is non-convex on the triple $y,h',z$.  This case is analogous to the first one.
\end{itemize}
If $y=x+1$, then the above analysis works with $h=x$ (the first case never holds, and $x=y-1$ is a hub of $y$).
If $z=y+1$, the above analysis works with $h'=z$ (the third case never holds, and $z=y+1$ is a hub of $y$).
Finally, if both $y=x+1$ and $z=y+1$, we can use the triple test with $a=x$, $b=y$, and $c=z$.
\pfend

A simple consequence of this lemma is that the function $f$ is convex on the set of points passing all their triple tests.
This allows us to formulate the following notion.

\begin{defn}
\label{defn:convexReplacement}
A \emph{convex replacement} of a function $f\colon [n]\to\bR$ is a convex function $\tilde f\colon [n]\to\bR$ such that $f(x) = \tilde f(x)$ for all $x$ that pass all their triple tests.
\end{defn}


The proof of Theorem~\ref{thm:1dimDistfree} now follows easily.

\pfstart[Proof of Theorem~\ref{thm:1dimDistfree}]
The algorithm is simple: sample $a$ from $\cD$ and run all the triple tests rooted at $a$.  It takes $1$ sample and $O(\log n)$ queries.  
The probability that this test fails is at least the distance (with respect to $\cD$) to the convex replacement to $f$.
Repeat the above test $O(1/\eps)$ times to increase the success probability to $\Omega(1)$.
\pfend

We are now also ready to complete the proof of Theorem~\ref{thm:1dimUpper}.

\pfstart[Proof of Theorem~\ref{thm:1dimUpper}]
The algorithm is a triple tester.
It selects a root $a$ of the triple uniformly at random from $[n]$, select a triple rooted at $a$ with height at most $2\eps n$ uniformly at random, and tests it for convexity.
For completeness, let us restate the algorithm:

\medskip
\parbox{.9\textwidth}{
\begin{algorithm}[H]
\caption{\textsc{ConvexityTest1D}}
Draw $a$ uniformly at random from $[n]$\;

Draw $k$ uniformly at random from $\{0,\dots, \ceil[\log_2(2\eps n)]\}$\;

Let $b$ be either the largest multiple of $2^k$ strictly smaller than $x$, or the smallest multiple of $2^k$ strictly larger than $x$, each case with probability $1/2$\;

Let $c$ be either $a+1$ or $b+1$, each with probability $1/2$\;
Query $f(a)$, $f(b)$ and $f(c)$ and test whether $a,b,c$ form a convex triple\;
\end{algorithm}
}
\medskip


We claim that if the function $f$ is $\eps$-far from convex, then this test fails with probability $\Omega(\eps/\log(\eps n))$.
Thus, this test has to be repeated $O(\log(\eps n)/\eps)$ times.

We will construct a subset $A\subseteq [n]$ of size $\eps n$ such that every $a\in A$ fails one of its triple tests with height at most $2\eps n$.
Start with $S\gets [n]$.
We treat $S$ as a sorted list.
While $|[n]\setminus S|< \eps n$, the function $f|_S$ is non-convex.
Choose three neighbouring elements $x<y<z$ in $S$ that violate convexity.  Let $(a,b,c)$ be the non-convex triple constructed in Lemma~\ref{lem:tripleTest}.  The height of this triple is at most $2\eps n$.  Remove $a$ from $S$.
When $|[n]\setminus S|\ge\eps n$, let $A\gets [n]\setminus S$.
\pfend

\section{Algorithm for testing convexity on the \texorpdfstring{$[3] \times [n]$}{[3]x[n]} stripe}
\label{sec: [3]x[n] Upper}

In this section, we prove the upper bound in Theorem~\ref{thm: [3]x[n]}.

\subsection{High-level description}

Our approach to testing convexity on the stripe $[3]\times [n]$ is as follows.
This set is very close to the 1-dimensional line, so we can draw a lot from the tester of Section~\ref{sec:1dimUpper}.
In this vein, for $i\in[3]$, let $f_i\colon [n]\to\bR$ be the restrictions of $f$ to the column $\{i\}\times [n]$.
We will construct a convex replacement $\tf$ of $f$ so that every point where $f$ and $\tilde f$ disagree fails some test.
Sampling $a\in [3]\times [n]$ from $\cD$ and executing the test on $a$ will give us a distribution-free tester of convexity.

Any simplex centred at a point in the line $\{0\}\times [n]$ or $\{2\}\times [n]$ is completely contained inside this line.
Hence, for $f_0$ and $f_2$ we can simply take convex replacements $\tf_0$ and $\tf_2$ from Definition~\ref{defn:convexReplacement}, and assume that $\tf$ restricted to $\{0\}\times [n]$ or $\{2\}\times [n]$ is $\tf_0$ or $\tf_2$, respectively.

Let us define a function $h\colon \{0,1/2,1,3/2,\dots,n-1\}\to\bR$ as 
\begin{equation}
\label{eqn:h}
h(x) = \min_\delta \frac{\tf_0(x-\delta)+\tf_2(x+\delta)}2.
\end{equation}
Note that $h(x) = g(1,x)$ where $g$ is the convex extension, as in Equation~(\ref{eqn:g}), of the function $\tf$ restricted to $\{0,2\}\times [n]$.  (We have not defined $\tf$ on the line $\{1\}\times [n]$ yet.)
By Lemma~\ref{lem:extension} and Lemma~\ref{lem:restriction}, the function $h$ is convex.
Its value can be computed by minimising the convex function $\delta \mapsto \sA[\tf_0(x-\delta)+\tf_2(x+\delta)]/2$.
This is exactly the place where our tester uses adaptivity.

The main part of our algorithm deals with interplay between the functions $h$ and $f_1$.
Let us give some relations between $h$ and $f_1$ for the case when $f$ is convex.
First, the function $f$ is convex on any simplex of the form $(0,x-\delta), (2,x+\delta)$ centred at $(1,x)$, which implies that $f_1(x) \le h(x)$ for every $x\in[n]$.
Next, for every $\{x,x+1\}\subseteq [n]$, let $\beta\colon\bR\to\bR$ be the affine function agreeing with $f_1$ at $x$ and $x+1$.
We have that the function $f$ is convex on any simplex of the form $(0,z-\delta), (1,x), (2,z+\delta)$ centred at $(1,x+1)$, which implies that $\beta(z)\le h(z)$ for all $z > x+1$.\footnote{Note that this observation does not immediately follow from the first observation and  convexity of $f_1$, because it also incorporates half-integer values of $z$, where $f_1$ is not defined.}
Similarly, considering simplices $(0,z-\delta), (1,x+1), (2,z+\delta)$ centred at $(1,x)$, we get that $\beta(z)\le h(z)$ for all $z<x$.
Our tester will check these conditions.
\medskip

\subsection{Subroutines}

We are now ready to describe the subroutines used by our tester.
The first subroutine is the convexity test for the line from Section~\ref{sec:1dimUpper}.

\medskip
\parbox{.9\textwidth}{
\begin{algorithm}[H]
\caption{\textsc{1DTest}$(i, x)$}
Execute all the triple tests rooted on $x$ for the function $f_i$ as in Definition~\ref{defn:tripleTest}\;

If at least one of the tests fails, output that $f$ is not convex and terminate the algorithm\;
\end{algorithm}
}
\medskip

The complexity of this subroutine is $O(\log n)$.
The following claim is a direct consequence of Definition~\ref{defn:convexReplacement}.
\begin{clm}
If \textsc{1DTest}$(i, x)$ does not fail for $i=0$ or $i=2$, then $\tf_i(x) = f_i(x)$.
\end{clm}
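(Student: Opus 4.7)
The claim is essentially a tautology, so my proof plan is just to carefully unwind the relevant definitions and point out that nothing further is required.

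First, I would observe that by the specification of \textsc{1DTest}, the call \textsc{1DTest}$(i,x)$ does not fail exactly when every triple test rooted at $x$ (in the sense of Definition~\ref{defn:tripleTest}) for the one-dimensional function $f_i$ succeeds; equivalently, $x$ passes all its triple tests for $f_i$. For $i \in \{0,2\}$, we have already fixed $\tf_i$ to be a convex replacement of $f_i$ in the sense of Definition~\ref{defn:convexReplacement}. That definition explicitly requires $\tf_i(y) = f_i(y)$ for every $y \in [n]$ that passes all its triple tests. Applying this with $y = x$ immediately gives $\tf_i(x) = f_i(x)$.

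The only thing worth double-checking is that a convex replacement of $f_i$ actually exists, so that it is legitimate to speak of $\tf_i$. This is precisely the content of the remark following Lemma~\ref{lem:tripleTest}: the restriction of $f_i$ to the set of points passing all their triple tests is convex (since any violating triple $x<y<z$ would, by Lemma~\ref{lem:tripleTest}, force one of its three points to fail a triple test). Lemma~\ref{lem:extension} then extends this convex restriction to a convex function on all of $[n]$, which is by definition a convex replacement $\tf_i$. There is no real obstacle here; the proof is a one-line appeal to Definition~\ref{defn:convexReplacement}.
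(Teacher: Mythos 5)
Your proposal is correct and matches the paper's own (very brief) justification, which simply states the claim is a direct consequence of Definition~\ref{defn:convexReplacement}; you unwind the same definition and add a helpful remark on why the convex replacement exists in the first place.
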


The next subroutine evaluates the function $h$.

\medskip
\parbox{.9\textwidth}{
\begin{algorithm}[H]
\caption{\textsc{Evaluate}$(x)$}
Find a point $\delta^*$ where the function $g(\delta)=\frac12(f_0(x-\delta)+f_2(x+\delta))$ attains its minimum, assuming this function is convex\;

For $y\in\{x-\delta^*-1,x-\delta^*,x-\delta^*+1\}$ perform \textsc{1DTest}(0, $y$)\;

For $y\in\{x+\delta^*-1,x+\delta^*,x+\delta^*+1\}$ perform \textsc{1DTest}(2, $y$)\;

Check that $g(\delta^*)\le g(\delta^*-1)$ and $g(\delta^*)\le g(\delta^*+1)$\;

Return $g(\delta^*)$\;
\end{algorithm}
}
\medskip

\begin{clm}
\label{clm:convexMinimisation}
The subroutine either finds a violation of convexity or returns $h(x)$.
The complexity of the subroutine is $O(\log n)$.
\end{clm}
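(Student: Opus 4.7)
The plan is to verify separately the complexity bound and the correctness statement. The complexity is immediate: Theorem~\ref{thm:minimum}'s bisection procedure runs for $O(\log n)$ iterations even when $g$ is not actually convex (it makes a fixed number of comparisons before terminating, though the output need not be the true minimum), each of the six \textsc{1DTest} calls costs $O(\log n)$ by the analysis in Section~\ref{sec:1dimUpper}, and Step 4 uses $O(1)$ queries. The total is $O(\log n)$.

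For correctness, introduce the idealized function $\tilde g(\delta) = \tfrac{1}{2}(\tf_0(x-\delta) + \tf_2(x+\delta))$, which is convex as a sum of two convex functions pre-composed with affine maps. Suppose first that no \textsc{1DTest} call has reported a failure; by the claim following Definition~\ref{defn:convexReplacement} we then have $f_0(y) = \tf_0(y)$ and $f_2(y) = \tf_2(y)$ at each of the six tested points, so $g(\delta) = \tilde g(\delta)$ for every $\delta \in \{\delta^* - 1, \delta^*, \delta^* + 1\}$. If Step 4 also passes, the inequalities $\tilde g(\delta^*) \le \tilde g(\delta^* \pm 1)$ show that $\delta^*$ is a local minimizer of the convex function $\tilde g$, hence a global minimizer, so $g(\delta^*) = \tilde g(\delta^*) = \min_\delta \tilde g(\delta) = h(x)$ by Equation~(\ref{eqn:h}).

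To see that reporting a violation when Step 4 fails is consistent with 1-sided error, we argue by contrapositive: if $f$ is convex, then the restrictions $f_0, f_2$ are convex (Lemma~\ref{lem:restriction}), so all triple tests pass trivially and $\tf_i = f_i$ globally. Hence $g \equiv \tilde g$ is itself convex, and Theorem~\ref{thm:minimum} guarantees that the bisection returns the true global minimizer, making the Step 4 check succeed. The main obstacle I anticipate is precisely the possibility that the measured $g$ deviates from $\tilde g$ away from $\delta^*$, so that bisection terminates at a spurious point; Step 4 is the safeguard that, combined with the six \textsc{1DTest} calls, pinpoints $\delta^*$ as a local minimum of $\tilde g$, from which convexity of $\tilde g$ finishes the argument.
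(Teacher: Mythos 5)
Your proof is correct and follows essentially the same route as the paper's: introduce $\tilde g$, observe it is convex, use the six \textsc{1DTest} calls to force $g = \tilde g$ on $\{\delta^*-1, \delta^*, \delta^*+1\}$, and then use the fact that a local minimum of a convex function is global to conclude $g(\delta^*) = h(x)$. You spell out the one-sided-error direction (if $f$ is convex then Step 4 cannot fail) and the complexity of the bisection somewhat more explicitly than the paper, which just asserts "If Step 4 fails, we get that $g \ne \tilde g$" and calls the complexity "obvious", but the underlying argument is identical.
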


\pfstart
Define $\tilde g(\delta) = \frac12(\tilde f_0(x-\delta)+\tilde f_2(x+\delta))$ so that $h(x) = \min_\delta \tilde g(\delta)$.
Steps 2 and 3 of the subroutine ensure that $g$ and $\tilde g$ agree on $\delta^*-1, \delta^*$ and $\delta^*+1$.
If Step 4 fails, we get that $g\ne \tilde g$, meaning that the function $f$ is not convex.
Otherwise, we get that $\tilde g(\delta^*)\le \tilde g(\delta^*-1)$ and $\tilde g(\delta^*)\le \tilde g(\delta^*+1)$.
As $\tilde g$ is convex, this implies that the minimum of $\tilde g$ is attained at $\delta^*$. 
The complexity estimate is obvious.
\pfend

\subsection{The algorithm}

Now let us state the test for convexity over the stripe.

\medskip
\parbox{.9\textwidth}{
\begin{algorithm}[H]
\caption{\textsc{ConvexityTestStripe}}
Sample $(i,x)$ from $\cD$\;

Perform \textsc{1DTest}($i$, $x$)\;

\If{$i=1$}{
\textsc{Evaluate} $h(x)$ and verify that $f_1(x)\le h(x)$\;

Perform \textsc{1DTest}(1, $x-1$) and \textsc{1DTest}(1, $x+1$)\;

Let $\beta^-\colon [n] \to \R$ be the affine function satisfying $\beta^-(x-1)=f_1(x-1)$ and $\beta^-(x)=f_1(x)$\;

Minimise the convex function $h-\beta^-$ on the interval $\{x+1,x+\frac32,x+2,\dots, n-1\}$ and verify that the minimum is non-negative\;

Let $\beta^+ \colon [n] \to \R$ be the affine function satisfying $\beta^+(x)=f_1(x)$ and $\beta^+(x+1)=f_1(x+1)$\;

Minimise the convex function $h-\beta^+$ on the interval $\{0,\frac12,1,\dots, x-1\}$ and verify that the minimum is non-negative\;
}
\end{algorithm}
}
\medskip

The tester uses one sample from $\cD$ and $O(\log^2 n)$ queries to $f$, since steps 7 and 9 each require $O(\log n)$ calls to the \textsc{Evaluate} subroutine, which in turn makes $O(\log n)$ queries to $f$.

By the discussion at the beginning of the section, any convex function $f$ passes the test with probability 1. 
Let $f \colon [3]\times[n] \to \R$ be any function that is $\eps$-far from convex with respect to $\cD$.
Let $S$ be the set of points that pass the test.
We claim that $f$ restricted to $S$ is convex.
Hence, the error probability of the test is at least $\eps$, and it suffices to repeat the test $O(1/\eps)$ times.

In order to prove that $f$ is convex on $S$, we extend it to a slightly larger domain.
This is done to better handle possible minimal centred simplices.
Let as above $\tf_i$ be convex replacement of $f_i$.  
We claim that the function $\tf\colon (\{0,2\}\times [n])\cup S \to \bR$ defined by
\[
\tf(i,x) =
\begin{cases}
\tf_i(x),&\text{if $i=0$ or $i=2$;}\\
f(i,x),&\text{if $(i,x)\in S$;}
\end{cases}
\]
is convex (the two values are equal when both conditions apply).
As $f$ and $\tf$ agree on $S$, this implies that $f$ restricted to $S$ is convex.

By Theorem~\ref{thm:minimalSimplex} and above discussion, it suffices to consider minimal simplices centred at points of the form $(1,x)\in S$.
From Lemma~\ref{lem:tripleTest}, we get that the function $\tf$ is convex on a centred simplex of the form $\{(1,x), (1,y), (1,z)\}\subseteq S$.
The function $\tf$ is also convex on a simplex $(0,x-\delta), (2,x+\delta)$ centred at $(1,x)$ because $h(x)\ge f_1(x)$ by Step 4.

Any other minimal simplex centred at $(1,x)$ is of the form $(0,a), (1,b), (2,c)$.
Let $y = (a+c)/2$, and assume $b<x<y$ (the case $y<x<b$ is similar).
From Step 7 of the algorithm, we know that the function $g\colon\{x-1,x,y \}\to \bR$ defined by
\begin{align*}
g(x-1)&= f_1(x-1),\\
g(x)&= f_1(x),\\
g(y)&= h(y)
\end{align*}
is convex.
Both $(1,b)$ and $(1,x)$ are in $S$ and so they pass the test. Thus, from Steps 2 and 5, we have that $f_1$ and $\tf_1$ agree on $b$, $x-1$ and $x$.
As $\tf_1$ is convex, and using Corollary~\ref{cor:lineTriple}, we have that the function $g' \colon \{b,x-1,x,y \} \to \bR$ defined by
\begin{align*}
g'(b)&= f_1(b),\\
g'(x-1)&= f_1(x-1),\\
g'(x)&= f_1(x),\\
g'(y)&= h(y)
\end{align*}
is also convex.
Finally, since $h(y) \le (\tf_0(a)+\tf_2(c))/2$, we get that $\tf$ is convex on the simplex $(0,a),(1,b),(2,c)$ centred at $(1,x)$.


\section{Lower bounds for testing convexity in high dimensions}

\label{sec: High Dim lower bound}

In this section we prove the $\Omega\s[(\frac{n}{d})^\frac{d}{2}]$ lower bound for non-adaptive algorithms that test convexity on the $[n]^d$ grid in Theorem~\ref{Thm:HighDimLower} and the $\Omega(\sqrt{n})$ lower bound for non-adaptive algorithms that test convexity over the stripe $[3] \times [n]$ in Theorem~\ref{thm: [3]x[n]}. 

\subsection{Overview of the proof}

The lower bounds in Theorems~\ref{thm: [3]x[n]} and~\ref{Thm:HighDimLower} are both obtained using the same general construction. We describe it in the setting of functions over $[n]^d$ for simplicity.

The key idea is that we can construct convex functions whose increase in slope (i.e., second derivative) is small in a particular direction and large in the rest of the directions. We can perturb the values of such functions by $\pm 1$ in a way that yields functions which are far from convex but for which the only violation of convexity on the hypergrid will contain at least two points that form a line along the direction where the slope was increasing slowly. So any algorithm that does not query two points which give a line in that direction cannot catch any violations of convexity. To get a strong lower bound from this key idea, we show that it is possible to ``hide'' the slowly-increasing direction among $\Omega\big( (\frac nd)^d\big)$ possible directions. Since a set of $q$ queries contains pairs of points that form at most $q^2$ different directions, this construction shows that any non-adaptive convexity testing algorithm with one-sided error---i.e., that always accepts convex functions---must have query complexity at least $\Omega\big( (\frac nd)^{d/2}\big)$.

To generalize this argument in a way that gives a lower bound for non-adaptive testing algorithms with two-sided error as well, we consider a different perturbation of the convex functions of $\pm 1$ that preserves convexity.
We can do this by performing the same perturbation (i.e., either all $+1$ or all $-1$) for every point along a line in the slowly-increasing direction. The perturbations for each line are chosen independently at random; by ensuring that the slope of the original function is large enough in all other directions, these independent perturbations do not violate convexity. As we show in the rest of this section, non-adaptive algorithms with query complexity $o\big( (\frac nd)^{d/2}\big)$ cannot distinguish this type of perturbation from the type that breaks convexity.

\subsection{Preliminaries}

We write $x_{[a, b]}$ to denote the coordinates $x_a, x_{a+1},\ldots, x_b$ of an input $x$. We use the following standard results in our proof.

\begin{lem}[Hoeffding's inequality]
	Let $x_1,\ldots, x_n \in \R$ be negatively correlated random variables bounded by $x_i \in [b_i, a_i]$ and define $\overline{x} = \frac{1}{n}(x_1 + \cdots + x_n)$. Then
\[
\Pr\left[|\overline{x} - \E[\overline{x}]| \geq t\right] 
\leq 2e^{\frac{2n^2t^2}{\sum_{i=1}^{n}(b_i - a_i)^2}}.
\]
\end{lem}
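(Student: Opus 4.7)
The plan is to follow the Chernoff (exponential moment) method used in the classical proof of Hoeffding's inequality for independent bounded variables, and to absorb the negative correlation hypothesis at the one place where the independence of the $x_i$ would normally be invoked.

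First I would reduce to a one-sided bound. Setting $S=\sum_{i=1}^n x_i$ and $\mu=\E[S]$, the event $|\overline{x}-\E[\overline{x}]|\ge t$ is identical to the event $|S-\mu|\ge nt$, so by a union bound it suffices to show
\[
\Pr[\,S-\mu\ge nt\,]\;\le\;\exp\!\Big(-\tfrac{2n^{2}t^{2}}{\sum_{i}(a_{i}-b_{i})^{2}}\Big),
\]
with the symmetric lower tail obtained by repeating the argument with $-x_i$ in place of $x_i$.

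Next, for any $\lambda>0$, Markov's inequality applied to $e^{\lambda(S-\mu)}$ gives
\[
\Pr[\,S-\mu\ge nt\,]\;\le\;e^{-\lambda nt}\,\E\!\Big[\prod_{i=1}^{n}e^{\lambda(x_{i}-\E[x_{i}])}\Big].
\]
At this point I would invoke the negative-correlation hypothesis in the form that lets us bound the expectation of the product by the product of expectations, $\E[\prod_{i}e^{\lambda(x_{i}-\E[x_{i}])}]\le \prod_{i}\E[e^{\lambda(x_{i}-\E[x_{i}])}]$, each factor being a monotone function of a single $x_i$. Then Hoeffding's lemma (a mean-zero variable supported in an interval of length $a_i-b_i$ has MGF at most $\exp(\lambda^{2}(a_{i}-b_{i})^{2}/8)$) bounds each factor, and optimizing over $\lambda$ by the choice $\lambda=4nt/\sum_{i}(a_{i}-b_{i})^{2}$, followed by the factor of $2$ for the two-sided tail, yields the stated inequality.

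The main obstacle is precisely the product-of-expectations step: pairwise negative correlation $\E[x_{i}x_{j}]\le\E[x_{i}]\E[x_{j}]$ does \emph{not} in general imply the multiplicative MGF inequality above, so the intended meaning of ``negatively correlated'' here must be strong enough --- for example negative association in the sense that $\E[f(x_{I})g(x_{J})]\le \E[f(x_{I})]\E[g(x_{J})]$ for every pair of disjoint index sets and every pair of coordinatewise monotone $f,g$, or else some specific structure of the $x_i$ arising in the downstream lower-bound construction --- to license this step. Once the product MGF bound is in hand, the remainder is a verbatim repetition of the textbook Chernoff-method proof of Hoeffding's inequality.
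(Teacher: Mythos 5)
The paper states this lemma without proof, citing it as a standard fact, so there is no in-paper argument to compare against. Your outline is the standard Chernoff exponential-moment proof, and your critique of the hypothesis is exactly right: pairwise negative correlation, $\E[x_i x_j]\le\E[x_i]\E[x_j]$, does \emph{not} by itself give the multiplicative bound $\E\bigl[\prod_i e^{\lambda(x_i-\E[x_i])}\bigr]\le\prod_i\E\bigl[e^{\lambda(x_i-\E[x_i])}\bigr]$. For the lemma as stated one must read ``negatively correlated'' as \emph{negatively associated} (in the Joag-Dev--Proschan sense: $\E[f(x_I)g(x_J)]\le\E[f(x_I)]\E[g(x_J)]$ for disjoint index sets $I,J$ and coordinatewise nondecreasing $f,g$), under which the factorization step is licensed and Hoeffding's lemma plus optimization over $\lambda$ gives the claimed tail bound. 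You also implicitly corrected a typo: the exponent in the paper's displayed bound is missing its minus sign (as printed the right-hand side is $\ge 2$ and the inequality is vacuous), and your version supplies the correct $-2n^2t^2/\sum_i(a_i-b_i)^2$. For what it is worth, in the single place the paper invokes this inequality---the far-from-convexity claim for the stripe in Section 5.6---the $\pm1$ summands are determined by $\sigma$-values indexed by distinct elements of $\mathbb{Z}^{d-1}$ and are therefore genuinely independent, so the full strength of the negative-association hypothesis is not actually needed for the application.
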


\begin{lem}[Yao's minimax]
\label{lem: yao's minimax}
Fix any disjoint sets $\cP$ and $\cN$ of functions mapping $\cX$ to $\cY$.
Let $\cD_{\cP}$ and $\D_{\cN}$ be probability distributions 
on functions mapping $\cX$ to $\cY$ that satisfy 
\[
\Pr_{f \sim \cD_{\cP}}[ f \in \cP ] = 1 
\qquad \text{and} \qquad
\Pr_{g \sim \D_{\cN}}[ f \in \cN ] = 1 - o(1).
\]
Let $\cD$ be the distribution where with probability $\frac{1}{2}$ we sample from $\cD_Y$ and with probability $\frac{1}{2}$ we sample from $\cD_N$. If any non-adaptive deterministic algorithm $\Pi$ with query complexity $q$ can not answer correctly with probability $\frac{2}{3}$, then any non-adaptive randomized algorithm that decides whether $f \in \cP$ or $f \in \cN$ with error at most $\frac{1}{4}$ makes $\Omega(q)$ queries.
\end{lem}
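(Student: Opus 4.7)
The plan is to invoke Yao's minimax principle in the standard way, with a small adjustment to absorb the $o(1)$ probability that a sample from $\cD_\cN$ fails to lie in $\cN$. I proceed by contrapositive: assume there exists a non-adaptive randomized algorithm $A$ of query complexity $q'$ that, for every $f\in\cP\cup\cN$, outputs the correct answer (\emph{accept} on $\cP$, \emph{reject} on $\cN$) with probability at least $3/4$; the goal is to conclude $q'=\Omega(q)$.

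First I would compute the success probability of $A$ on the mixture distribution $\cD$, where the ``correct'' label is \emph{accept} when the sample comes from $\cD_\cP$ and \emph{reject} when it comes from $\cD_\cN$. For $f\sim\cD_\cP$ we have $f\in\cP$ almost surely, so the testing guarantee gives $\Pr[A(f)=\text{accept}]\ge 3/4$. For $f\sim\cD_\cN$ we have $f\in\cN$ except on an $o(1)$ event; conditioning on $f\in\cN$ and bounding the remaining $o(1)$ mass below by $0$ yields $\Pr[A(f)=\text{reject}]\ge (1-o(1))\cdot \tfrac34 = \tfrac34 - o(1)$. Averaging the two halves with weight $1/2$ each, the overall success probability of $A$ on $\cD$ is at least $3/4-o(1)$.

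Next I would apply averaging over the internal coins of $A$: since its expected success probability on $\cD$ is at least $3/4-o(1)$, some fixing $A^\star$ of the coins is a deterministic non-adaptive algorithm of query complexity $q'$ whose success probability on $\cD$ is also at least $3/4-o(1)$, which exceeds $2/3$ for all sufficiently large $n$. The hypothesis rules out any deterministic non-adaptive algorithm with at most $q$ queries from reaching success probability $2/3$ on $\cD$ (and the statement is monotone in the budget, since extra queries can always be ignored), so we must have $q'\ge q$, giving the desired $\Omega(q)$ bound.

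The argument is textbook and presents no real obstacle; the only care required lies in matching the parameters. The slack between $3/4$ (the randomized success on $\cD$) and $2/3$ (the threshold in the hypothesis) is precisely what accommodates both the loss of $o(1)$ coming from $\Pr_{f\sim\cD_\cN}[f\notin\cN]$ and the deterministic-to-randomized passage, with room to spare.
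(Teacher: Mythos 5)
The paper states this lemma as a standard preliminary result and does not give a proof, so there is no paper argument to compare yours against. Your proof is a correct rendering of the standard Yao minimax argument: you average the worst-case randomized guarantee over the mixture distribution $\cD$, fix the random coins to obtain a deterministic non-adaptive algorithm of the same query complexity with success probability at least $3/4 - o(1)$ on $\cD$, observe this exceeds $2/3$ for large $n$, and invoke the hypothesis (together with the padding observation that a $q'$-query algorithm with $q' \le q$ is also a $q$-query algorithm) to force $q' > q$. The bookkeeping of the $o(1)$ mass of $\cD_{\cN}$ outside $\cN$ is handled correctly, since the $3/4$-versus-$2/3$ gap absorbs it with room to spare.
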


\begin{prop}[Theorem 332 \cite{hardy1979}]
\label{prop: co-prime_size}
Let $a, b \in [n]$ be two numbers picked uniformly at random. The probability that the pair $(a, b)$ is co-prime is $>0.5$. 
\end{prop}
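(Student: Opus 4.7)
The plan is to apply M\"obius inversion over the divisors of $\gcd(a,b)$, reducing the coprimality probability to a sum that converges to $6/\pi^2$, a constant strictly exceeding $1/2$. The key identity is $[\gcd(a,b)=1] = \sum_{d \mid \gcd(a,b)} \mu(d)$ (for positive $\gcd$), which lets us swap the order of summation over $(a,b)$ and over $d$, and count the $(a,b)$ pairs with $d \mid a$ and $d \mid b$ for each $d$ separately.

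Concretely, the first step is to write
\[
\Pr[\gcd(a,b) = 1] = \frac{1}{n^2} \sum_{d=1}^{n} \mu(d)\, N_d^2,
\]
where $N_d$ is the number of multiples of $d$ in the domain (at most $\lceil n/d \rceil$ in either the $\{0,\dots,n-1\}$ or $\{1,\dots,n\}$ convention, the two agreeing up to $O(1)$). Substituting $N_d = n/d + O(1)$ splits the sum into its main term $\sum_{d \le n} \mu(d)/d^2$ plus an error of order $\frac{1}{n^2}\sum_{d \le n}|\mu(d)|(n/d + 1) = O((\log n)/n)$. The main term converges to $\sum_{d \ge 1} \mu(d)/d^2 = 1/\zeta(2) = 6/\pi^2 \approx 0.6079$, while the tail $\sum_{d > n} 1/d^2 = O(1/n)$ is also negligible.

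Since $6/\pi^2 > 1/2$, for all $n$ above some explicit threshold $n_0$ one obtains $\Pr[\gcd(a,b)=1] > 1/2$. The main obstacle is making this threshold explicit enough to cover every $n$ of interest; the plan is to handle the remaining small values of $n$ by direct enumeration (for instance, one checks that $n=2$ and $n=3$ already give $3/4$ and $5/9$ respectively, both above $1/2$, and the density then tends monotonically, up to lower-order fluctuations, toward $6/\pi^2$). Alternatively, and as the authors do, one may simply invoke Theorem~332 of Hardy and Wright, which is exactly the density statement used here.
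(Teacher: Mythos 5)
The paper does not prove this proposition itself; it simply cites Theorem~332 of Hardy and Wright, which establishes exactly the $6/\pi^2$ asymptotic density via the M\"obius-inversion argument you sketch. Your derivation is the standard one and matches the cited source; you also (correctly) observe that one may simply invoke the reference, which is what the paper does. The only nit worth flagging is that with the paper's convention $[n]=\{0,\dots,n-1\}$, the statement is vacuously false at $n=1$ (the sole pair $(0,0)$ is not coprime), so a careful small-case check would start at $n=2$; this is immaterial to how the proposition is used in the lower-bound construction, where the relevant range length is large.
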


\subsection{Change of basis and convexity}


\begin{defn}
A \emph{lattice basis} is a matrix $B = [b_1, \ldots, b_k] \in \mathbb{R}^{d \times k}$ whose columns are linearly independent vectors in $\mathbb{R}^d$. The \emph{lattice generated by $B$} is the set 
\[
\mathcal{L}(B) = 
\Big\{ B x \mid x \in \mathbb{Z}^k \Big\} =
\Big\{ \sum_{i = 1}^k x_i b_i \mid x_1,\ldots,x_k \in \mathbb{Z} \Big\}.
\]
\end{defn}

\begin{fact}[Lemma 1.2 \cite{Rothvoss2015IntegerOA}]
\label{Fact: basis of Z^d}
$B \in \mathbb{Z}^{[d] \times [d]}$ is a basis of $\mathbb{Z}^d$ if and only if its determinant is $\pm1$. 
\end{fact}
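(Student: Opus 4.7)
The statement is a standard characterization, so the plan is fairly short. I will prove both implications directly from the definition: $B \in \mathbb{Z}^{d \times d}$ with linearly independent columns is a basis of $\mathbb{Z}^d$ exactly when $\mathcal{L}(B) = \mathbb{Z}^d$, i.e.\ when the columns of $B$ generate the integer lattice.

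For the forward direction, I would assume $\mathcal{L}(B) = \mathbb{Z}^d$. Then for each standard basis vector $e_i$ there is an integer vector $c_i \in \mathbb{Z}^d$ with $B c_i = e_i$. Collecting these vectors into a matrix $C = [c_1, \ldots, c_d] \in \mathbb{Z}^{d \times d}$ yields $BC = I_d$. Taking determinants gives $\det(B)\det(C) = 1$, and since both $\det(B)$ and $\det(C)$ are integers, they must both be $\pm 1$.

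For the reverse direction, I would assume $\det(B) = \pm 1$. By the adjugate formula, $B^{-1} = \det(B)^{-1} \operatorname{adj}(B)$, and since $\operatorname{adj}(B)$ has integer entries (its entries are cofactors of $B$, which are integer determinants of submatrices) and $\det(B)^{-1} = \pm 1$, the inverse $B^{-1}$ is an integer matrix. Then any $v \in \mathbb{Z}^d$ can be written as $v = B(B^{-1} v)$ with $B^{-1} v \in \mathbb{Z}^d$, giving $\mathbb{Z}^d \subseteq \mathcal{L}(B)$. The reverse inclusion $\mathcal{L}(B) \subseteq \mathbb{Z}^d$ is immediate from $B \in \mathbb{Z}^{d\times d}$. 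Hence $\mathcal{L}(B) = \mathbb{Z}^d$.

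There is no real obstacle here — both directions are essentially one-step consequences of basic linear algebra (multiplicativity of the determinant and Cramer's rule). The only subtlety is making sure one invokes integrality of $\operatorname{adj}(B)$ cleanly in the reverse direction; I would state this as a brief observation rather than computing cofactors explicitly.
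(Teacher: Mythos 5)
Your proof is correct, and it is the standard argument: multiplicativity of the determinant plus integrality of the adjugate. The paper does not actually prove this Fact — it simply cites it as Lemma~1.2 of Rothvoss's lecture notes — and your two-implication proof is exactly the one that appears in that reference and in most treatments of lattices, so there is nothing substantive to compare. One small point worth making explicit if you wanted to be fully rigorous against the paper's own definition of a lattice basis (which requires linearly independent columns): in the forward direction, being a basis presupposes linear independence, so $\det(B)\neq 0$ and the equation $BC=I_d$ is consistent; in the reverse direction, $\det(B)=\pm 1$ already forces the columns to be linearly independent, so the hypothesis of the definition is automatically met. You handle both directions correctly, and the integrality of $\operatorname{adj}(B)$ is the right thing to flag as the one non-trivial ingredient.
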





\begin{defn}
Given any vector $a \in \mathbb{Z}^d$ whose first two coordinates $a_1$ and $a_2$ are coprime, the \emph{canonical basis completion} of $a$ is the basis $B(a) = [b_1(a),\ldots, b_d(a)] \in \mathbb{Z}^{d \times d}$ whose $i$th column is 
\[
b_i(a) = \begin{cases}
a & \mbox{if } i = 1 \\
c_1 e_1 + c_2 e_2 & \mbox{if } i = 2 \\
e_i & \mbox{if } 3 \le i \le d
\end{cases}
\]
where $c_1$ and $c_2$ are the integers that satisfy $a_1c_1 - a_2c_2 = 1$ and $e_i \in \mathbb{Z}^d$ is the vector with value $1$ in the $i$th coordinate and $0$ in all other coordinates.
\end{defn}

The next proposition shows that the canonical basis completion of any vector $a \in \mathbb{Z}^d$ that satisfies the condition of the above definition generates the lattice $\mathbb{Z}^d$.

\begin{prop}
\label{prop: Basis completion}
Given any vector $a \in \mathbb{Z}^d$ whose first two coordinates $a_1$ and $a_2$ are coprime, the canonical basis completion $B(a)$ of $a$ generates the lattice $\mathcal{L}(B(a)) = \mathbb{Z}^d$.
\end{prop}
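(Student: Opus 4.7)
The plan is to invoke Fact~\ref{Fact: basis of Z^d}: since $B(a) \in \mathbb{Z}^{d \times d}$, it generates $\mathcal{L}(B(a)) = \mathbb{Z}^d$ if and only if $\det B(a) = \pm 1$. So the entire task reduces to computing this determinant.

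First, I would exploit the block structure of $B(a)$. Its last $d-2$ columns are the standard basis vectors $e_3, e_4, \ldots, e_d$, each of which carries a single $1$ in a distinct row (rows $3, \ldots, d$ respectively) and $0$'s in rows $1$ and $2$. Expanding the determinant by cofactors along column $d$ contributes a factor of $1$ from the $(d,d)$ entry and removes row $d$ and column $d$; iterating this $d-2$ times strips columns $d, d-1, \ldots, 3$ and the corresponding rows. What remains is the $2 \times 2$ determinant of the top-left block, so
\[
\det B(a) \;=\; \det \begin{pmatrix} a_1 & c_1 \\ a_2 & c_2 \end{pmatrix} \;=\; a_1 c_2 - a_2 c_1.
\]

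Second, I would appeal to the defining relation between $c_1, c_2$ and $a_1, a_2$. Because $a_1$ and $a_2$ are coprime by hypothesis, Bezout's identity guarantees that $c_1, c_2 \in \mathbb{Z}$ can be chosen so that this $2 \times 2$ minor equals $\pm 1$; this is precisely the content of the relation built into the definition of the canonical basis completion. Combining this with the cofactor calculation yields $\det B(a) = \pm 1$, and the conclusion $\mathcal{L}(B(a)) = \mathbb{Z}^d$ follows from Fact~\ref{Fact: basis of Z^d}.

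The proof is entirely routine and I do not anticipate any real obstacle. The only point requiring a bit of care is matching the sign conventions between the Bezout-style relation in the definition of $B(a)$ and the standard $2 \times 2$ determinant formula $a_1 c_2 - a_2 c_1$; in either sign convention, the absolute value is $1$, which is all that Fact~\ref{Fact: basis of Z^d} needs.
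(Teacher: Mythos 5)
Your proof is correct in spirit and takes the only available approach, which is also what the paper does (the paper's own proof is literally the one line ``Follows from Fact~\ref{Fact: basis of Z^d}''); you simply fill in the cofactor expansion the paper leaves implicit. One caveat: your last paragraph dismisses the discrepancy between the definition's relation $a_1 c_1 - a_2 c_2 = 1$ and the determinant $a_1 c_2 - a_2 c_1$ as a ``sign convention,'' but that is not quite right --- it is an index swap, not a sign. Taken literally, the paper's definition does not force $|\det B(a)| = 1$: for example, with $a_1 = 3$, $a_2 = 2$, $c_1 = 3$, $c_2 = 4$ one has $a_1 c_1 - a_2 c_2 = 1$ yet $a_1 c_2 - a_2 c_1 = 6$. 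The definition in the paper almost certainly contains a typo and should read $b_2(a) = c_2 e_1 + c_1 e_2$ (equivalently, the relation should be $a_1 c_2 - a_2 c_1 = 1$); with that correction the top-left $2\times 2$ minor is exactly the Bezout combination, your determinant computation gives $\pm 1$, and Fact~\ref{Fact: basis of Z^d} concludes the proof.
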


\begin{proof}
Follows from Fact~\ref{Fact: basis of Z^d}.
\end{proof}

If $x \in \mathbb{Z}^d$ be the representation of a point according to the basis $I$, then $x^{B} = B^{-1}x$ is the representation according to the basis $B$. So $y = x + a$ and $y^B = x^B + e_1$ are equivalent.

\subsection{Constructions}

In this subsection we show how to construct the distributions $\mathcal{D}_{Y}, \mathcal{D}_{N}$. We also prove that every function in $\mathcal{D}_{Y}$ is convex and every function in $\mathcal{D}_{N}$ is $\frac{1}{20}$-far from convex. 

Let $\mathcal{B}$ be the distribution over bases obtained by drawing a vector $a \in \mathbb{Z}^d$ uniformly at random among all vectors whose coordinates are in the range $0 \le a_1,a_2,\ldots,a_d \le \frac{n}{4d}$ and whose first two coordinates $a_1$ and $a_2$ are coprime and returning the canonical basis $B(a)$ for $a$. 

The distributions $\mathcal{D}_Y$ and $\mathcal{D}_N$ are both obtained by drawing a basis from $\mathcal{B}$ and starting with a convex function $g_B$ associated with that basis that we will call the \emph{canonical} convex function for $B$.

\begin{defn}
The \emph{canonical convex function} for a basis $B$ of $\mathbb{Z}^d$ is the function $g_B \colon \mathbb{Z}^d \to \mathbb{Z}$ defined by
\[
g_B(x) = (x^{B}_1)^2 + 2 \sum_{i=2}^d (x^{B}_i)^2.
\]
\end{defn}

Our distribution on convex functions is obtained by shifting the values of the canonical convex function $g_B$ in a way that preserves convexity.

\begin{defn}[$\mathcal{D}_Y$]
Let $\mathcal{S}^B$ to be the distribution on functions $h \colon [n]^d \to \mathbb{Z}$ obtained by drawing values $\sigma(z) \in \{\pm 1\}$ independently and uniformly at random for each $z \in \mathbb{Z}^{d-1}$ and defining
\[
h(x) = g_B(x) + \sigma(x^{B}_2,\ldots,x^{B}_d)
\]
for each $x \in [n]^d$. Let $\mathcal{D}_Y$ be the distribution obtained by drawing $B \sim \mathcal{B}$ and then drawing a function $h \sim \mathcal{S}^B$.
\end{defn}

Our distribution on functions that are far from convex is similar, except that the shifts of the canonical convex function $g_B$ are now constructed in a way that will create many disjoint violations of convexity.

\begin{defn}[$\mathcal{D}_N$]
Let $\mathcal{A}^B$ be the distribution on functions $h \colon [n]^d \to \mathbb{Z}$ obtained by drawing values $\sigma(z) \in \{\pm 1\}$ independently and uniformly at random for each $z \in \mathbb{Z}^{d-1}$ and defining
\[
h(x) = g_B(x) + \sigma(x^{B}_2,\ldots,x^{B}_d) \cdot (-1)^{x^{B}_1}
\]
for each $x \in [n]^d$. Let $\mathcal{D}_N$ be the distribution obtained by drawing $B \sim \mathcal{B}$ and then drawing a function $h \sim \mathcal{A}^B$.
\end{defn}
 



We complete this section by showing that the functions in the support of $\mathcal{D}_Y$ are indeed convex and that the functions in the support of $\mathcal{D}_N$ are far from convex.

\begin{clm}
\label{clm: Yes distribution is convex HighDim}
Every function in the support of $\mathcal{D}_Y$ is convex.
\end{clm}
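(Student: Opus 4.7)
The plan is to extend $h$ from $[n]^d$ to all of $\mathbb{Z}^d$ via the same defining formula, prove convexity of that extension on $\mathbb{Z}^d$, and then invoke Lemma~\ref{lem:restriction} to conclude convexity of $h$ on $[n]^d$. Concretely, set $\tilde h(x) := g_B(x) + \sigma(x^{B}_2,\ldots,x^{B}_d)$ for every $x \in \mathbb{Z}^d$ (well-defined since $x^{B} = B^{-1}x \in \mathbb{Z}^d$), and note $\tilde h|_{[n]^d} = h$.

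Next I would pass to the $B$-basis. By Fact~\ref{Fact: basis of Z^d} and Proposition~\ref{prop: Basis completion}, $B$ is unimodular, so the map $x \mapsto y := B^{-1}x$ is a $\mathbb{Z}$-linear bijection of $\mathbb{Z}^d$ that preserves affine combinations. Consequently, convexity of $\tilde h$ on $\mathbb{Z}^d$ is equivalent to convexity (in the variable $y$) of
\[
\psi(y) := \|y\|_Q^2 + \sigma(y_2,\ldots,y_d), \qquad Q := \mathrm{diag}(1,2,\ldots,2),
\]
on $\mathbb{Z}^d$. For any integer points $y^{(1)},\dots,y^{(k)} \in \mathbb{Z}^d$ and weights $\lambda_j \ge 0$ summing to one with $y := \sum_j \lambda_j y^{(j)} \in \mathbb{Z}^d$, the elementary identity $\sum_j \lambda_j \|y^{(j)}\|_Q^2 - \|y\|_Q^2 = \sum_j \lambda_j \|y^{(j)}-y\|_Q^2$ rewrites the convexity deficit as
\[
\sum_j \lambda_j \psi(y^{(j)}) - \psi(y) = \sum_j \lambda_j \Bigl[\|y^{(j)}-y\|_Q^2 + \sigma(y^{(j)}_{[2,d]}) - \sigma(y_{[2,d]})\Bigr].
\]

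The heart of the proof is then a term-by-term case analysis on the displacement $v := y^{(j)} - y \in \mathbb{Z}^d$. If $v = 0$, the summand vanishes. If $v = \pm e_1$, then $y^{(j)}_{[2,d]} = y_{[2,d]}$, so the $\sigma$-difference is $0$ while $\|v\|_Q^2 = 1$. For every other nonzero integer $v$, either some coordinate $v_i$ with $i \ge 2$ is nonzero (so $\|v\|_Q^2 \ge 2$), or $v_i = 0$ for all $i \ge 2$ and $|v_1| \ge 2$ (so $\|v\|_Q^2 \ge 4$); in either case $\|v\|_Q^2 \ge 2$, which dominates the swing $|\sigma(y^{(j)}_{[2,d]}) - \sigma(y_{[2,d]})| \le 2$. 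Every summand is therefore non-negative, giving $\psi(y) \le \sum_j \lambda_j \psi(y^{(j)})$ and hence convexity of $\psi$, of $\tilde h$, and ultimately of $h$.

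The main subtlety, and what I expect to be the one moderately delicate point in the write-up, is the calibration between the quadratic and the perturbation: the coefficient $2$ in front of the $[2,d]$-coordinates in $g_B$ is exactly what ensures $\|v\|_Q^2 \ge 2$ for every $v \in \mathbb{Z}^d \setminus \{0, \pm e_1\}$, while along the cheap displacements $\pm e_1$ the $\sigma$-swing is automatically zero because $\sigma$ does not depend on the first $B$-coordinate. These two observations are what make the construction work, and explain why the $2$ in the definition of $g_B$ is tight.
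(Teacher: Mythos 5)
Your proof is correct and takes essentially the same approach as the paper: both arguments reduce to comparing the quadratic deficit of $g_B$ against the $\pm1$ perturbation, with the factor of $2$ on the $[2,d]$-coordinates calibrated to dominate the $\sigma$-swing precisely when those coordinates change. Your explicit change to the $B$-basis, the parallel-axis identity, and the term-by-term case split on the displacement $v$ are a cleaner phrasing of the paper's aggregated bound $\sum_i \lambda_i g_B(x_i) - g_B(z) \ge 2\sum_{i\in I}\lambda_i \ge -\sum_i \lambda_i\bigl(\sigma(x^B_{i[2,d]})-\sigma(z^B_{[2,d]})\bigr)$, but the underlying idea is identical.
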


\begin{proof}
Fix any $B$ in the support of $\mathcal{B}$, any $h$ in the support of $\mathcal{S}^B$, and any points $z, x_1,\ldots,x_k \in \mathbb{Z}^d$ such that $z = \sum_{i=1}^k \lambda_i x_i$ is a convex combination of the points $x_1,\ldots,x_k$, $\lambda_1,\ldots,\lambda_k \ge 0$ and $\sum_{i=1}^k \lambda_i = 1$.
We will show that $\sum_{i=1}^k \lambda_i h(x_i) \ge h(z)$.

Let us define $\delta_1,\ldots,\delta_k \in \mathbb{Z}^d$ to be the vectors for which $x^B_i = z^B + \delta_i$ 
for each $i \in [k]$. Then the identity $\sum_{i=1}^k \lambda_i (x_i^{B} - z^B) = 0$ implies that $\sum_{i=1}^k \lambda_i \delta_{ij} = 0$ for every $j \in [d]$ and that
\begin{align*}
\sum_{i=1}^k \lambda_i g_B(x_i) 
&= \sum_{i=1}^k \lambda_i \Big( (x^B_{i1})^2 + 2\sum_{j=2}^d (x^B_{ij})^2 \Big) \\
&= \sum_{i=1}^k \lambda_i \Big( (z^B_1 + \delta_{i1})^2 + 2\sum_{j=2}^d (z^B_j + \delta_{ij})^2 \Big) \\
&= g_B(z) + \sum_{i=1}^k \lambda_i \Big( \delta_{i1}^2 + 2 \sum_{j=2}^d \delta_{ij}^2\Big).
\end{align*}
Define $I = \{i \in [k] \mid z^B_{[2,d]} \neq x^B_{i[2,d]}\}$. For each $i \in I$, the vector $\delta_i$ satisfies $\sum_{j=2}^d \delta_{ij}^{\,2} \ge 1$ so we have that
\[
\sum_{i=1}^k \lambda_i g_B(x_i) - g_B(z) 
\ge 2\sum_{i = 1}^k \lambda_i \sum_{j=2}^d \delta_{ij}^2
\ge 2 \sum_{i \in I} \lambda_i.
\]
Furthermore, since $\sigma(x^B_{i[2,d]}) - \sigma(z^B_{[2,d]})$ is always bounded below by $-2$ and the difference is zero whenever $i \notin I$, we obtain
\[
\sum_{i=1}^k \lambda_i h(x_i) - h(z) \ge
\sum_{i=1}^k \lambda_i g_B(x_i) - g_B(z) - 2\sum_{i \in I} \lambda_i \ge 0. \qedhere
\]
\end{proof}

\begin{clm}
\label{clm: No distributionis far from convex HighDim}
Every function in the support of $\mathcal{D}_N$ is $\frac{1}{20}$-far from convex.
\end{clm}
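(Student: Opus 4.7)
The plan is to exhibit, for every $h$ in the support of $\mathcal{D}_N$, a family of at least $n^d/16$ pairwise disjoint \emph{violating triples}: sets of three collinear grid points $\{x-a,\,x,\,x+a\} \subseteq [n]^d$ on which the second difference of $h$ in the direction $a$ is strictly negative. By \lemref{lem:restriction}, every convex $h' \colon [n]^d \to \R$ restricts to a convex function on each such triple, so $h'(x-a) - 2h'(x) + h'(x+a) \ge 0$, whereas the analogous quantity for $h$ will equal $-2$. Hence $h$ and $h'$ must disagree on at least one point of every violating triple, and disjointness of the family then forces $|\{x : h(x) \neq h'(x)\}| \ge n^d/16 > n^d/20$, establishing $\tfrac{1}{20}$-farness.

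The first step is the second-difference computation. Using $(x \pm a)^B = x^B \pm e_1$, I get $g_B(x+a) - 2g_B(x) + g_B(x-a) = 2$, and the random perturbation of $\mathcal{D}_N$ contributes
\[
\sigma(x^B_{[2,d]})\bigl[(-1)^{x^B_1 + 1} - 2(-1)^{x^B_1} + (-1)^{x^B_1 - 1}\bigr] = -4\,\sigma(x^B_{[2,d]})\,(-1)^{x^B_1}.
\]
Thus the second difference along $a$ at $x$ equals $2 - 4\,\sigma(x^B_{[2,d]})\,(-1)^{x^B_1}$, which is $-2$ (a violation) exactly when $\sigma(x^B_{[2,d]}) = (-1)^{x^B_1}$. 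Since $\sigma$ is fixed along each $a$-line in $[n]^d$ while $(-1)^{x^B_1}$ alternates, violations occur at every second interior point of every such line.

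Next the triples are assembled into a disjoint family. Partitioning each $a$-line $\ell$ of length $L_\ell$ into $\lfloor L_\ell/4 \rfloor$ consecutive blocks of four points yields, inside each block, two adjacent interior positions of opposite $x^B_1$-parity, hence at least one violating triple entirely contained in that block. Summing over lines and using $\lfloor L_\ell/4 \rfloor \ge (L_\ell - 3)/4$ gives at least $(n^d - 3\#\text{lines})/4$ disjoint violating triples. To control $\#\text{lines}$, I would count consecutive pairs: since $0 \le a_i \le n/(4d)$ for every $i$, the number of $x \in [n]^d$ with also $x + a \in [n]^d$ equals $\prod_i (n - a_i) \ge n^d \bigl(1 - \tfrac{1}{4d}\bigr)^d \ge \tfrac{3}{4} n^d$. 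This pair count equals $\sum_\ell (L_\ell - 1) = n^d - \#\text{lines}$, so $\#\text{lines} \le n^d/4$, yielding at least $n^d/16$ disjoint violating triples, which is the required count.

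The main obstacle I anticipate is handling short $a$-lines, which can arise when some orbits of translation by $a$ near the boundary of $[n]^d$ collapse to only a few points; the pair-counting step sidesteps the need to analyze individual line lengths and delivers the uniform bound $\#\text{lines} \le n^d/4$ that drives the final tally.
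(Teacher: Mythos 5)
Your proof is correct. The core combinatorial observation is the same as the paper's: along each line in direction $a = b_1$ (i.e., a fixed value of $x^B_{[2,d]}$), the $\pm 1$ perturbation alternates with the parity of $x^B_1$, so exactly one of every two adjacent minimal centred triples is a violation, and hence every block of four consecutive points on a line contains at least one violating triple. Where you diverge from the paper is the counting step. The paper restricts attention to the central subcube $L = [\tfrac{n}{2d}, n - \tfrac{n}{2d}]^d$, argues that every $a$-line meeting $L$ has length at least $4$, and concludes that the number of disjoint witnesses is at least $|L|/7 = \tfrac{1}{7}(1 - \tfrac{1}{d})^d n^d$; note that $\tfrac{1}{7}(1 - \tfrac{1}{d})^d$ dips below $\tfrac{1}{20}$ at $d = 2$ (where it equals $\tfrac{1}{28}$), so the paper's final arithmetic does not quite deliver the stated constant for small $d$. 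Your pair-counting argument is more robust: summing $\lfloor L_\ell/4 \rfloor \ge (L_\ell - 3)/4$ over all lines and using that the number of consecutive pairs equals $\sum_\ell (L_\ell - 1) = \prod_i (n - a_i) \ge n^d\bigl(1 - \tfrac{1}{4d}\bigr)^d \ge \tfrac{3}{4}n^d$, you bound the number of lines by $n^d/4$ and obtain at least $n^d/16$ disjoint violating triples. Since every convex $h'$ must disagree with $h$ on at least one point of each such triple (by \lemref{lem:restriction}), this gives $\tfrac{1}{16}$-farness, which is stronger than the claimed $\tfrac{1}{20}$ and valid uniformly for all $d \ge 2$. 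Your counting is cleaner, more elementary (no case analysis of line lengths), and fixes the small constant slip in the paper's version.
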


\begin{proof}
Fix any $B$ in the support of $\mathcal{B}$ and any $h$ in the support of $\mathcal{A}^B$. For any points $x,y,z \in [n]^d$ that satisfy $y^B = x^B + e_1$ and $z^B = y^B + e_1$, if we have
\[
h(x) = g_B(x) - 1, \qquad h(y) = g_B(y) + 1, \mbox{ and } \qquad h(z) = g_B(z) - 1
\] 
Then the triple $(x,y,z)$ is a witness of non-convexity of $h$ since 
\[
\tfrac12 h(x) + \tfrac12 h(z) = \tfrac12 g_B(x) + \tfrac12 g_B(z) - 1 = g_B(y) < h(y) = h(\tfrac12 x + \tfrac12 z).
\]

Hence from how we defined $h$, any four points $w, x, y, z \in [n]^d$ that satisfy $x^B = w^B + e_1$, $y^B = x^B + e_1$ and $z^B = y^B + e_1$ one of $(w, x, y)$, $(x, y, z)$ is a witness on non-convexity. Let $L = [\frac{n}{2d}, n - \frac{n}{2d}]^d$. For $s \in \mathbb{Z}^{d-1}$, let $L_s = \{x \mid x \in [n]^d, \exists y \in L \text{ s.t } y^{B}_{[2..d]} = x^{B}_{[2..d]} = s\}$. Since $a_1, a_2,..., a_d < \frac{n}{4d}$ we have that $|L_s| \geq 4$. And since any $4$ consecutive points with the same $[2, d]$ coordinates, in basis $B$, have a witness of non-convexity, the number of witnesses in $L_s$ is $\geq \frac{|L_s|}{7}$. Also $L \subseteq \cup_{s \in \mathbb{Z}^{d-1}} L_s$, hence the number of disjoint witnesses of non-convexity is greater than $\frac{|L|}{7} = \frac{1}{7}(1 - \frac{1}{d})^dn^d \geq \frac{1}{20} n^d $. In every disjoint non-convexity witness we have to change the value of at least one point to make the function convex. Therefore $h$ is $\frac{1}{20}$-far from convex. 
\end{proof}

\subsection{Proof of Theorem~\ref{Thm:HighDimLower}}

Let $\cD$ be the distribution where with probability $\frac{1}{2}$ we pick something from $\cD_Y$ and with probability $\frac{1}{2}$ we pick something from $\cD_N$. In this section we prove that there does not exist a non-adaptive deterministic algorithm with query complexity $q< 0.01 (\frac{n}{4d})^{\frac{d}{2}}$ that answers correctly with probability $\frac{2}{3}$ on the distribution $\cD$. From Lemma~\ref{lem: yao's minimax} this would prove Theorem~\ref{Thm:HighDimLower} as from Claim~\ref{clm: Yes distribution is convex HighDim} and Claim~\ref{clm: No distributionis far from convex HighDim} we know that every function in the support of $\mathcal{D}_Y$ is convex and every function in the support of $\mathcal{D}_N$ is $\frac{1}{10}$-far from convex.

Let us assume there exists such a deterministic algorithm $\Pi$ that answers correctly on a distribution $\cD = \frac{1}{2}\cD_Y + \frac{1}{2}\cD_N$ with probability greater than $\frac{2}{3}$. We can think of the distribution $\cD$ as pick a $B \sim \mathcal{B}$ and pick a $\sigma\colon \mathbb{Z}^{d-1} \rightarrow \pm 1$ uniformly at random. And at the end with probability $\frac{1}{2}$ we choose whether we want a function in the support of $\cD_Y$ or $\cD_N$. Let the points the algorithm $\Pi$ queries be $Q = x_1, x_2, ...., x_q \in \mathbb{Z}^d$. 

We refer to a $B$ in the support of $\mathcal{B}$ to be \emph{exposed} if there exists $i, j <q$ such that $x^B_{i[2,d]} = x^B_{j[2, d]}$, otherwise we refer to it as \emph{hidden}. 

\begin{clm}
\label{clm: non-adaptive deterministic lower bound HighDim}
On the distribution $\cD$ the probability that $\Pi$ answers correctly is less than $0.6$.
\end{clm}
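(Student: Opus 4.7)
The plan is to use the standard indistinguishability paradigm. Conditional on any hidden basis $B$, the deterministic algorithm $\Pi$ sees identical distributions of query answers under $\mathcal{D}_Y$ and $\mathcal{D}_N$ and so succeeds with probability exactly $\tfrac{1}{2}$. Combining this with a bound of the form $\Pr[B\text{ exposed}] < \tfrac{1}{5}$ yields overall success probability at most $\tfrac12 + \tfrac12 \cdot \tfrac15 < 0.6$.

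For the indistinguishability step, fix any hidden $B$. By definition the tuples $x^B_{i[2,d]}$ are pairwise distinct for $i = 1,\ldots,q$, so the signs $\sigma(x^B_{i[2,d]})$ are $q$ mutually independent uniform $\pm 1$ variables. Under $\mathcal{D}_Y$ the algorithm observes the vector $\bigl(g_B(x_i) + \sigma(x^B_{i[2,d]})\bigr)_{i \le q}$, while under $\mathcal{D}_N$ it observes $\bigl(g_B(x_i) + \sigma(x^B_{i[2,d]})\cdot (-1)^{x^B_{i1}}\bigr)_{i\le q}$. Since $g_B(x_i)$ and the parity $(-1)^{x^B_{i1}}$ are deterministic given $B$ and $x_i$, and multiplying an independent uniform $\pm 1$ variable by any fixed sign preserves both independence and uniformity, the two joint distributions of observations are identical. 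Hence $\Pi$'s deterministic output has the same distribution in both cases, and its conditional success probability is exactly $\tfrac12$.

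For the hiddenness step, I union-bound over the at most $q^2$ pairs of queries. For a pair $(i,j)$ with $x_i \ne x_j$, the event $x^B_{i[2,d]} = x^B_{j[2,d]}$ is equivalent to $v := x_i - x_j$ satisfying $v = c\,a$ for some nonzero integer $c$. Here the coprimality constraint $\gcd(a_1,a_2)=1$ built into $\mathcal{B}$ plays the decisive role: writing $a = v/c$, the requirement $\gcd(v_1/c,\, v_2/c) = 1$ forces $c = \pm\gcd(v_1,v_2)$, which in turn uniquely determines $a$ (with the sign fixed by the non-negativity of $a$). So at most one $a \in \mathcal{B}$ causes any given pair to collide, giving per-pair probability at most $1/|\mathcal{B}|$. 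Since $|\mathcal{B}| \ge \tfrac{1}{2}(n/(4d))^d$ by Proposition~\ref{prop: co-prime_size}, summing over the $q^2 \le 10^{-4}(n/(4d))^d$ pairs yields $\Pr[B\text{ exposed}] \le 2 \cdot 10^{-4} \ll \tfrac{1}{5}$.

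The conceptual heart of the argument is the indistinguishability step; the rest is bookkeeping. The cleanest technical point is that the coprimality condition in the definition of $\mathcal{B}$—which appears primarily to guarantee that $B(a)$ is a basis of $\mathbb{Z}^d$ via Fact~\ref{Fact: basis of Z^d}—does double duty as the mechanism that restricts each collision event to a single basis, eliminating any number-theoretic divisor-counting loss from the union bound. The only place where one needs to be careful is in verifying that the coprimality constraint really does collapse the set of witnessing $a$'s to at most one; without it, the bound would degrade by a factor equal to the divisor function evaluated at $\gcd(v)$, and the clean constants $0.01$ and $0.6$ would have to be retuned.
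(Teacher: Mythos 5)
Your proposal follows essentially the same route as the paper: decompose into the hidden and exposed cases, use indistinguishability of $\mathcal{D}_Y$ and $\mathcal{D}_N$ conditioned on a hidden $B$ to get conditional success probability $\tfrac12$, and union-bound over query pairs to bound the exposure probability. The indistinguishability argument is identical in substance to the paper's (the paper phrases it as $f|_Q - g_B|_Q$ being uniform over $\{\pm 1\}^q$ under either distribution); your slightly more explicit phrasing, noting that multiplication of an independent Rademacher variable by a fixed sign preserves its law, is a clean way to say the same thing.

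One thing worth flagging in your favor: the paper asserts ``there are at most $q^2$ exposed $B$'' and moves on, implicitly relying on the fact that each query pair $(i,j)$ can expose at most one basis. Your proof actually justifies this. The observation that $x^B_{i[2,d]}=x^B_{j[2,d]}$ forces $x_i - x_j = c\,a$ for an integer $c$, and that the coprimality of $(a_1,a_2)$ together with the non-negativity constraint then pins down $a$ uniquely (via $|c| = \gcd(v_1,v_2)$), is correct and is the reason the per-pair collision probability is $1/|\mathcal{B}|$ rather than something larger by a divisor-counting factor. This is a genuine detail the paper leaves to the reader, and your write-up fills it in correctly.
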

\begin{proof}
When $B$ is hidden then there is no way the algorithm $\Pi$ can answer correctly with probability greater than $\frac{1}{2}$. This is because $\Pr_{f \sim \mathcal{D}_{Y}|_{B \text{ is hidden}}}\left[ f|_Q = \alpha \right] 
= \Pr_{g \sim \mathcal{D}_{N}|_{B \text{ is hidden}}}\left[ g|_Q = \alpha \right]$. In fact it is even stronger, along with function values at the queried points even if we give what the hidden basis $B$ is, the algorithm can not answer correctly with probability greater than $\frac{1}{2}$. This is because, as for any $i, j <q$,  $x^B_{i[2, d]} \neq x^B_{j[2,d]}$, we have  $f|_Q - g_B|_Q = s$, for each $s \in \{-1, +1\}^q$, with probability $\frac{1}{2^q}$ irrespective of $f$ being in $\cS^B$ or $\cA^B$. We can assume that the algorithm always answers correctly when $B$ is exposed. The probability that the algorithm $\Pi$ answers correctly is $\leq \Pr[B \text{ is exposed}]\cdot 1 + \Pr[B \text{ is hidden}]\cdot \frac{1}{2}$.

Since there are only $\binom{q}{2}$ $< q^2$,  $\text{ }i, j <q$ pairs, there are at most $q^2$ exposed $B$. From Proposition~\ref{prop: co-prime_size} and the construction of $\mathcal{B}$ we know that $|\mathcal{B}| \geq 0.5 (\frac{n}{4d})^d$ and if $q< 0.01 (\frac{n}{4d})^{\frac{d}{2}}$ the probability that a $B \sim \mathcal{B}$ is exposed is $\leq \frac{1}{100}$.

Hence the success probability of the algorithm is $\leq \frac{1}{100}\cdot 1 + \frac{99}{100} \cdot \frac{1}{2} \leq \frac{101}{200}$.
\end{proof}

This a contradiction on the assumption that the algorithm answers correctly with probability $\frac{2}{3}$. Hence there can not exist such a non-adaptive deterministic algorithm $\Pi$.

\subsection{Non-adaptive lower bound for \texorpdfstring{$[3] \times [n]$}{[3]x[n]}}

\label{sec: [3]x[n] lower bound}

In this section we prove a $\Omega(\sqrt{n})$ lower bound for non-adaptively testing convexity on the $[3]\times [n]$ grid. The proof is almost the same as the the higher dimensional setting with slight changes. 

Let $\mathcal{B}$ be the distribution over bases obtained by drawing a vector $a \in \mathbb{Z}^2$ uniformly at random among all vectors whose first coordinate is $1$ and the second coordinate is in the range $0 \leq a_2 \leq \frac{n}{100}$ and returning the canonical basis $B(a)$ for $a$. 

Define the distributions $\mathcal{D}_{Y}$ and $\mathcal{D}_{N}$ as above with the one modification that the domain of $h$ is set to be $[3] \times [n]$ instead of $[n]^d$. In this setting, we again have that every function in the support of $\mathcal{D}_Y$ is convex, using the same argument as in Claim~\ref{clm: Yes distribution is convex HighDim}.  But now it is no longer true that every function in the support of $\mathcal{D}_N$ is $\frac{1}{10}$-far from convex. Instead, we have that a function $f \sim \mathcal{D}_N$ is $\frac{1}{10}$-far from convex with probability $1-o(1)$.


\begin{clm}
A function $f \sim \mathcal{D}_N$ is $\frac{1}{10}$-far from convex with probability $1 - o(1)$.
\end{clm}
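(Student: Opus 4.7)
The plan is to condition on the basis $B = B(a)$ drawn from $\mathcal{B}$ and show that, over the randomness of $\sigma$, the function $h \sim \mathcal{A}^B$ is $\frac{1}{10}$-far from convex with probability $1 - e^{-\Omega(n)}$; averaging over $B$ then yields the claim.

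Unlike the higher-dimensional case, the lines of the stripe in direction $b_1 = a$ contain at most three points of $[3] \times [n]$, so I cannot guarantee a convexity violation deterministically as in Claim~\ref{clm: No distributionis far from convex HighDim}; instead each 3-point line will give a violation with probability $\tfrac{1}{2}$ independently. Concretely, I fix $a = (1, a_2)$ with $0 \le a_2 \le n/100$ and identify the lines along $b_1$ that contain three points of the stripe: these are the sets $\ell_y = \{(0, y),\ (1, y+a_2),\ (2, y+2a_2)\}$ for $y \in \{0, 1, \ldots, n-1-2a_2\}$. The bound on $a_2$ gives $m \ge 0.98\, n$ such lines, which are pairwise disjoint, and each corresponds to a distinct value $s_y$ of the coordinate $x_2^B$, so the random signs $\sigma(s_y)$ are mutually independent across different $y$.

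On each $\ell_y$, a direct calculation (analogous to the one in Claim~\ref{clm: No distributionis far from convex HighDim}, with only the first $B$-coordinate varying in unit steps) yields $g_B(x_0) + g_B(x_2) - 2 g_B(x_1) = 2$, while the alternating factor $(-1)^{x_1^B}$ makes the perturbation contribute $\pm 4\, \sigma(s_y)$ to $h(x_0) + h(x_2) - 2 h(x_1)$, with an overall sign depending only on $B$ and $y$. Hence the triple on $\ell_y$ violates convexity precisely when $\sigma(s_y)$ takes one specific value, an event of probability exactly $\tfrac{1}{2}$ that is independent across $y$.

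Let $N$ denote the number of violated lines. Then $N \sim \mathrm{Bin}(m, \tfrac{1}{2})$, so Hoeffding's inequality gives $N \ge 0.4\, n$ with probability $1 - e^{-\Omega(n)}$. Since the violated triples are pairwise disjoint and each forces at least one function-value change to restore local convexity, any convex function must differ from $h$ on at least $N$ points, giving relative distance at least $0.4\, n / (3n) > \tfrac{1}{10}$. The main subtlety is the sign bookkeeping that yields an \emph{exact} half-probability violation per line, since we must track how the alternating perturbation composes with the parity of $x_1^B$ on the line; once that is established, the concentration estimate and the disjoint-witness counting are routine.
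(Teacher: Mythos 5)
Your proposal is correct and takes essentially the same route as the paper: condition on $B$, observe that each of the $\Theta(n)$ three-point lines in the $b_1$-direction yields a disjoint convexity violation independently with probability exactly $\tfrac12$ (since the second difference of $g_B$ along the line is $2$ while the alternating perturbation contributes $\pm4\sigma$), then apply Hoeffding and count disjoint witnesses. The only cosmetic difference is that the paper restricts to $x_2 \le \tfrac{9n}{10}$ rather than taking all $\ge 0.98n$ lines, which changes the constants but not the argument.
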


\begin{proof}
For any $B$ in the support of $\mathcal{B}$, a function $h \sim \mathcal{A}^B$ is $\frac{1}{10}$-far from convex with probability $1-o(1)$. Let $X= \{x \mid x_1 = 0, 0 \leq x_2 \leq \frac{9n}{10}\}$. For any points $x \in X$ and $y, z \in \mathbb{Z}^2$ that satisfy $y^B = x^B + e_1$ and $z^B = y^B + e_1$ we have that $y, z \in [3] \times [n]$ and 
\[
h(x) = g_B(x) - 1, \qquad h(y) = g_B(y) + 1, \mbox{ and } \qquad h(z) = g_B(z) - 1
\]
with probability $\frac{1}{2}$. Therefore, $x, y, z$ form a witness for non-convexity with probability $\frac{1}{2}$. This is true for all $x \in X$. Using Hoeffding's inequality the probability that the number of witnesses for non-convexity is less than $\frac{n}{3}$ is $\leq  e^{-cn}$. Hence with probability $1 - e^{-cn}$ the distance to convexity is at least $\frac{\frac{n}{3}}{3n} \geq \frac{1}{10}$.
\end{proof}

Any non-adaptive deterministic algorithm which performs $q < \frac{\sqrt{n}}{100}$ can not answer correctly with probability grater than $0.6$. The proof is similar to that of Claim~\ref{clm: non-adaptive deterministic lower bound HighDim}. From Lemma~\ref{lem: yao's minimax} this completes the proof of the lower bound in Theorem~\ref{thm: [3]x[n]}.

\section{Lower bound for testing convexity on the line}
\label{sec: 1-D lower bound}

The lower bound in Theorem~\ref{thm:1D} is obtained by using similar ideas to the ones in~\cite{Belovs18} used to prove the analogous lower bound for testing monotonicity. The key idea is to introduce violations of convexity that are only visible at a given scale.

We first show a lower bound of $\Omega(\log n)$ for $\frac{1}{9}$-testing convexity and then extend it to general $\epsilon$.

\subsection{General principle}

In this section, we formulate the general principle our proof is based on in an abstract form to give the overall structure of our proof.
In the next sections, we show how to apply it to convexity testing.

We deal with randomised query algorithms whose inputs are functions $f\colon [n]\to[r]$, and which want to distinguish the set of positive inputs $\cP$ from the set of negative inputs $\cN$, that is, accept all $f\in\cP$ and reject all $g\in\cN$.
If $\cT$ is a deterministic decision tree, then $\cT(f)$ denotes the terminal leaf of the decision tree $\cT$ on input $f$.

\begin{lem}
\label{lem:Yao-easy}
Let $\cP$ and $\cN$ be two disjoint sets of functions mapping $[n]$ to $[r]$.
Let $A$ and $B$ be sets of labels, and assume there are mappings 
$A\ni a \mapsto f_a \in \cP$ and
$B\ni b \mapsto g_b \in \cN$.
Let $\mu$ and $\nu$ be two probability measures supported on $A$ and $B$, respectively.

Assume that for every deterministic decision tree $\cT$ of depth $q$, one can find a partial mapping $\eta\colon B\to A$ such that
\begin{itemize}
\item $\cT(f_{\eta(b)}) = \cT(g_b)$ for every $b$ in the domain of $\eta$;
\item $\mu(\eta(B)) = \Omega(1)$;
\item $\nu(\eta^{-1}(a)) = \Omega(\mu(a))$ for every $a\in\eta(B)$.
\end{itemize}
Then, every randomised query algorithm distinguishing $\cP$ from $\cN$ makes $\Omega(q)$ queries.
\end{lem}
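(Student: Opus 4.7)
The natural strategy is a Yao's minimax argument applied to the distribution $\cD = \frac12\mu + \frac12\nu$, where with probability $\frac12$ we draw $a\sim\mu$ and feed the algorithm $f_a\in\cP$, and with probability $\frac12$ we draw $b\sim\nu$ and feed it $g_b\in\cN$. The goal is to show every deterministic decision tree $\cT$ of depth at most $q$ has error at least some absolute constant $\epsilon_1>0$ on $\cD$; standard amplification then upgrades this to the randomised lower bound $\Omega(q)$.

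\textbf{Key steps.} First, I would fix any deterministic tree $\cT$ of depth $q$ and invoke the hypothesis to obtain the partial mapping $\eta\colon B\to A$ with the three listed properties. For each $a\in\eta(B)$, set $B_a = \eta^{-1}(a)\subseteq B$. The crucial consequence of $\cT(f_{\eta(b)})=\cT(g_b)$ is that $\cT$ returns a single common value on $f_a$ and on every $g_b$ with $b\in B_a$. Since $f_a\in\cP$ and $g_b\in\cN$ require opposite answers, $\cT$ must be wrong on at least one side: its error on $\cD$ restricted to this ``bundle'' is at least
\[
\tfrac12\min\bigl(\mu(a),\ \nu(B_a)\bigr).
\]
Using $\nu(B_a)=\Omega(\mu(a))$ from the third hypothesis, this minimum is $\Omega(\mu(a))$. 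Summing over $a\in\eta(B)$ and then using $\mu(\eta(B))=\Omega(1)$ yields a total error of the form
\[
\mathrm{err}_\cD(\cT)\ \ge\ \Omega\!\left(\sum_{a\in\eta(B)}\mu(a)\right)\ =\ \Omega\bigl(\mu(\eta(B))\bigr)\ =\ \Omega(1).
\]
Call this constant $\epsilon_1$.

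\textbf{From constant error to the randomised lower bound.} Suppose for contradiction that some randomised algorithm $\cR$ distinguishes $\cP$ from $\cN$ with some fixed error $\epsilon_0<\tfrac12$ using $q'$ queries. Running $\cR$ a constant $k=O\bigl(\log(1/\epsilon_1)\bigr)$ times and taking a majority vote yields a randomised algorithm with error below $\epsilon_1$ and query complexity $O(q')$. By Yao's minimax, there exists a deterministic decision tree $\cT^*$ of depth $O(q')$ whose error on $\cD$ is strictly less than $\epsilon_1$. Combined with the lower bound above, which applies to every deterministic tree of depth at most $q$, we conclude $O(q')\ge q$, i.e.\ $q'=\Omega(q)$.

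\textbf{Where the difficulty lies.} The only delicate point is the ``one side must err'' step: one has to notice that the hypothesis $\cT(f_{\eta(b)})=\cT(g_b)$ forces the tree to commit to a single answer on the bundle $\{f_a\}\cup\{g_b:b\in B_a\}$, whereupon the $\min(\mu(a),\nu(B_a))$ bound is automatic. The rest is bookkeeping with Yao's principle and a standard success amplification to convert the constant-error statement into the desired $\Omega(q)$ bound for bounded-error randomised testers.
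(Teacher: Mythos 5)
Your proof is correct and follows essentially the same route as the paper: fix a deterministic tree, pull in the partial mapping $\eta$, observe that the equality of leaves forces a single answer on each bundle $\{f_a\}\cup\{g_b : b\in\eta^{-1}(a)\}$ and hence a $\min(\mu(a),\nu(\eta^{-1}(a)))$ error contribution, sum over $a\in\eta(B)$, and close with Yao plus standard error reduction. The only cosmetic difference is that the paper aggregates via the set $P=\{a : \cT \text{ accepts } f_a\}$ and derives the contradiction $\nu(\eta^{-1}(P))\ge C_2(C_1-2\eps)$ directly, whereas you sum bundle-by-bundle; the two computations are the same.
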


\begin{proof}
Assume $\mu(\eta(B)) \ge C_1$ and $\nu(\eta^{-1}(a)) \ge C_2 \mu(a)$ for every $a\in\eta(B)$, where $C_1, C_2 >0$ are constants.
Performing standard error reduction, we may assume that the error probability of the algorithm is a constant $\eps>0$, which depends on $C_1$ and $C_2$ in a way to be determined later.

By standard Yao's principle, there exists a deterministic decision tree $\cT$ that accepts with probability $\ge 1-2\eps$ on $f_a$ where $a\sim\mu$, and rejects with probability $\ge 1-2\eps$ on $g_b$ where $b\sim\nu$.

Let $P$ be the set of $a\in A$ such that $\cT$ accepts $f_a$.
By the first property of $\eta$, $\cT$ accepts all $g_b$ with $b\in \eta^{-1}(P)$.
We have
\[
\nu(\eta^{-1}(P)) \ge C_2 \mu(P\cap \eta(B)) \ge C_2(C_1 - 2\eps).
\]
As this quantity is supposed to be less then $2\eps$, we get a contradiction when
$2\eps < C_2(C_1 -2\eps)$, 
or
\[
\eps < \frac{C_1 C_2}{2(1+C_2)},
\]
which is a positive constant.
\end{proof}

\subsection{The case of \texorpdfstring{$\eps = \Omega(1)$}{epsilon = Omega(1)}}
\label{sec:Omega1}
In this subsection we prove the following theorem, which covers the $\eps=\Omega(1)$ case of Theorem~\ref{thm:1D}.

\begin{thm}
\label{thm:line-lb}
For an integer $k$, it takes $\Omega(k)$ queries to $\frac19$-test a function $f\colon [3^k]\to [(9k)^{3k}]$ for convexity.
\end{thm}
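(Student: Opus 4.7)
Proof plan. I will apply Lemma~\ref{lem:Yao-easy} with families of convex and far-from-convex functions indexed by a \emph{scale} parameter $j \in \{0,1,\dots,k-1\}$ and a \emph{shift} parameter, in the style of the Belovs~\cite{Belovs18} monotonicity lower bound. The key high-level idea is that a perturbation living at resolution $3^j$ can only be detected by a query pair that ``brackets'' the perturbation at that scale, so a random scale cannot be pinned down by few queries.

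\emph{Construction.} I will start from a strictly convex ``backbone'' $\phi\colon[3^k]\to\mathbb{Z}$ whose second differences grow multiplicatively with the separation of the points (so that the range of the function fills up an interval of length roughly $(9k)^{3k}$). For each scale $j$ and each valid shift $s$, I define $f_{j,s}$ and $g_{j,s}$ to agree with $\phi$ away from a specified block of length $\Theta(3^j)$; inside the block, $f_{j,s}$ replaces $\phi$ by a convex perturbation, while $g_{j,s}$ replaces it by a pattern containing a non-convex triple at scale $3^j$. Crucially, the scale-$j$ pattern is used in \emph{every} block at that scale (with the same shift), so that $g_{j,s}$ is $\frac{1}{9}$-far from convex globally, while the rapid growth of the second differences of $\phi$ absorbs the perturbations and keeps $f_{j,s}$ globally convex.

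\emph{Matching and counting.} Take $A = B$ to be the set of $(\text{scale},\text{shift})$ pairs and $\mu = \nu$ uniform over them (first pick $j$ uniformly in $\{0,\dots,k-1\}$, then the shift uniformly at that scale). For a depth-$q$ deterministic decision tree $\cT$, define $\eta(j,s) = (j,s)$ whenever the leaf of $\cT$ reached by $g_{j,s}$ does not contain a pair of queries that brackets the scale-$j$, shift-$s$ perturbation; on such inputs $\cT(f_{j,s}) = \cT(g_{j,s})$ by construction. At any fixed leaf the $q$ queries determine $O(q)$ bracketing pairs per scale (each query point participates in $O(1)$ bracketing pairs at each scale, after suitable discretisation), so averaging over the uniform choice of $j$ the fraction of $(j,s)$ for which $\eta$ is undefined is $O(q/k)$. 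Taking $q \le ck$ for a small enough constant $c$ makes this smaller than $\frac12$, which verifies the three conditions of Lemma~\ref{lem:Yao-easy}---the third trivially, since $\mu = \nu$ and $\eta$ is the identity on its domain---and yields the $\Omega(k)$ lower bound.

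\emph{Main obstacle.} The hardest part is the construction itself: the perturbations must be coordinated so that, simultaneously, every $f_{j,s}$ is globally convex, every $g_{j,s}$ is $\frac{1}{9}$-far from convex, the two families differ \emph{only} on the targeted scale (so that no pair of queries outside a bracketing pair can witness the discrepancy), and all values fit in $[(9k)^{3k}]$. This is exactly where the geometric growth of the backbone's second differences and the ternary hierarchy on $[3^k]$ are essential, and the range $[(9k)^{3k}]$ in the statement is the price paid for accommodating $k$ independent scales of perturbation in a single integer-valued function.
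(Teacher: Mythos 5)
Your choice of Lemma~\ref{lem:Yao-easy} and a multi-scale family parametrised by a scale $j$ is the right framework, and the ternary-hierarchy intuition is genuinely what the paper exploits. However, the matching step is a real gap: you take $A = B$ and $\eta$ to be the identity on its domain, so the first condition of Lemma~\ref{lem:Yao-easy} demands $\cT(f_{j,s}) = \cT(g_{j,s})$, i.e.\ $f_{j,s}$ and $g_{j,s}$ must return identical answers to every query along their common path. No construction can make this hold for an $\Omega(1)$ fraction of pairs once $q$ exceeds a fixed constant: since $f_{j,s}\in\cP$ is convex and $g_{j,s}\in\cN$ is $\frac19$-far from convex, they must disagree on at least $n/9$ points of the domain, so by averaging any set $S$ of still-surviving pairs admits a query $x^*$ with $\Pr_{(j,s)\sim S}\bigl[f_{j,s}(x^*)\ne g_{j,s}(x^*)\bigr]\ge\frac19$; a decision tree that applies this greedily at each node drives the surviving $\mu$-measure down like $(8/9)^q$. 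So Lemma~\ref{lem:Yao-easy} with $\eta = \mathrm{id}$ can only ever give an $\Omega(1)$ bound, no matter how clever the backbone. The ``a single query doesn't bracket anything'' intuition is valid only at the level of ensembles, and turning it into a matching for Lemma~\ref{lem:Yao-easy} requires a non-identity $\eta$.

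The paper supplies exactly this by taking $A$ to be the set of all functions $a\colon[3]^{<k}\to[k^3-1]$ (one free parameter $a_s$ per ternary prefix $s$), setting $B=A\times[k]$, and defining $\eta(b,j)$ to be a \emph{different} label $a$, obtained by shifting $b_s$ by $\pm1$ at exactly the length-$j$ prefixes $s$ of the queried points. Claim~\ref{clm: dependence of f on a_s} guarantees that $f_a(x)$ depends on $a_s$ only for $s$ a prefix of $x$, so this reshifting makes $f_{\eta(b,j)}$ agree with $g_{b,j}$ on the $\le q$ queries while the two functions still disagree heavily elsewhere --- precisely the slack that an identity pairing cannot provide. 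Well-definedness of $\eta$ and the bound $|\eta^{-1}(a)|\ge k/2$ then reduce to the combinatorial fact (a variant of \cite[Lemma~6]{Belovs18}) that at most $q-1$ scales $j$ have two query points sharing a length-$j$ prefix but differing in digit $j$. To salvage your approach you would need to enlarge $A$ so that each $f_a$ carries an independent parameter at every scale and every prefix, and let $\eta$ reroute $(b,j)$ to a fresh label; the scale-and-shift parametrisation alone is too rigid to make the first condition of the lemma hold.
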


We will define the required objects from Lemma~\ref{lem:Yao-easy}.
Clearly, $n=3^k$ and $r = (9k)^{3k}$.
The sets $\cP$ and $\cN$ consist of convex and $1/9$-far-from-convex functions, respectively.

\newcommand{\TK}{[3]^{<k}}
Let $m=3k^3$.
Denote by $\TK$ the set of ternary strings of length strictly less than $k$, including the empty string.
The set $A$ consists of all the functions from $\TK$ into $[k^3-1]$.  For $a\in A$, the value of $a$ on $s\in\TK$ is denoted by $a_s$.
We define the function $f_a\in\cP$ corresponding to $a\in A$ by giving its discrete derivative, which is a monotone function $\partial f_a\colon [3^k]\to[m^k]$.
That is,
\begin{equation}
\label{eqn:f}
f_a(x) = \sum_{z<x} \partial f_a(z).
\end{equation}
It is clear that if the function $\partial f_a$ is monotone, the function $f_a$ is convex.
Also, the maximal value of $f_a$ is at most $3^k\cdot m^k < (9k)^{3k}$.

The function $\partial f_a$ is defined as follows.
Assume that the argument $x\in [3^k]$ is written in ternary and the value $\partial f_a(x)\in [m^k]$ in $m$-ary.
We prepend leading zeroes if necessary so that each number has exactly $k$ digits.
We enumerate the digits from left to right with the elements of $[k]$, so that the $0$-th digit is the most significant one, and the $(k-1)$-st digit is the least significant one.
We use $x_i$ to denote the $i$th digit of $x$.
For an interval $[a..b]$, we define $x_{[a..b]}$ as the substring of $x$ formed by the digits $x_i$ as $i$ ranges over $[a..b]$.

Let
\begin{equation}
\label{eqn:phi}
\phi_a(x,i) = \begin{cases}
a_{x_{[i]}} & \mbox{if } x_i = 0; \\
a_{x_{[i]}}+1 & \mbox{if } x_i = 1; \\
m - 2a_{x_{[i]}}-1 & \mbox{if } x_i = 2; \\
\end{cases}
\end{equation}
for $x\in [3^k]$ and $i\in [k]$.
The $i$-th digit of $\partial f_a(x)$ is equal to $\phi(x,i)$.
That is,
\begin{equation}
\label{eqn:partial f}
\partial f_a(x) = \sum_{i=0}^{k-1} m^{k-1-i} \phi_a(x,i).
\end{equation}

Let us make some clarifying comments here.
The main case of interest in Equation~(\ref{eqn:phi}) is $x_i=0$ and $x_i=1$.  In the far-from-convex case, the first two cases will be essentially switched.
This makes the function far from convex, but it is hard to see that just by observing the $x_i=0$ or $x_i=1$ case independently.
The $x_i=2$ case is necessary to ensure that the sum of the elements on the right-hand side of Equation~(\ref{eqn:phi}) is independent of $a_s$, see Claim~\ref{clm: dependence of f on a_s}.

\begin{clm}
\label{clm:fa_convex}
Every function $f_a$ is convex.
\end{clm}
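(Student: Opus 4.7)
The plan is to reduce convexity of $f_a$ to monotonicity of its discrete derivative. By Equation~(\ref{eqn:f}) together with Corollary~\ref{cor:lineTriple}, the function $f_a$ is convex on $[3^k]$ if and only if $\partial f_a$ is non-decreasing, so the entire proof reduces to showing $\partial f_a(x+1) \ge \partial f_a(x)$ for every $x \in [3^k-1]$.

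To establish this inequality, I would fix such an $x$ and let $j$ be the largest index with $x_j \neq 2$. This is exactly the position where incrementing $x$ stops propagating its carry: the ternary representations of $x$ and $x+1$ agree on positions $i<j$; at position $j$ the digit increases from $x_j$ to $x_j+1$; and at positions $i>j$ the digits flip from $2$ to $0$. I would then compare $\phi_a(x,i)$ with $\phi_a(x+1,i)$ position by position using Equation~(\ref{eqn:phi}).

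For $i<j$, both the prefix $x_{[i]}$ and the digit $x_i$ are unchanged, so $\phi_a(x,i) = \phi_a(x+1,i)$. For $i=j$ the prefix $s = x_{[j]}$ is unchanged, and the case $x_j=0\to 1$ changes the digit from $a_s$ to $a_s+1$, while $x_j=1\to 2$ changes it from $a_s+1$ to $m - 2a_s - 1$. The choice $m=3k^3$ combined with $a_s \le k^3 - 2$ gives $m-2a_s-1 \ge a_s+2$, so in both cases $\phi_a(x+1,j) - \phi_a(x,j) \ge 1$. For $i>j$ the digits may differ arbitrarily, but each $\phi_a$ value lies in $[m]$, so the net signed contribution of positions $j+1,\dots,k-1$ to $\partial f_a(x+1) - \partial f_a(x)$ has absolute value at most $(m-1)(m^{k-2-j}+\dots+m+1) = m^{k-1-j}-1$.

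Putting these together gives $\partial f_a(x+1) - \partial f_a(x) \ge m^{k-1-j} - (m^{k-1-j}-1) = 1 > 0$, which finishes the proof. The argument is essentially bookkeeping; the one genuine observation is that the constant $m = 3k^3$ is chosen precisely so that even in the awkward digit transition $1 \to 2$ of case~(\ref{eqn:phi}) the value $\phi_a$ still strictly increases, and the scale argument then guarantees that this strict increase at the more significant position $j$ dominates whatever happens at positions $i>j$.
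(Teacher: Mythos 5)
Your proposal is correct and follows the same route as the paper: reduce convexity of $f_a$ to monotonicity of $\partial f_a$, and then observe that the inequality $a_s < a_s+1 < m-2a_s-1$ makes the base-$m$ digit at the carry-stopping position strictly increase. You have simply spelled out the carry-propagation and base-$m$ comparison bookkeeping that the paper's one-line argument leaves implicit.
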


\pfstart
Every function $\partial f_a$ is monotone because $a_s<a_s+1<m-2a_s-1$ for every $s \in \TK$.
\pfend

\begin{clm}
\label{clm: dependence of f on a_s}
The value $f_a(x)$ only depends on the values of $a_s$ as $s$ ranges over the prefixes of $x$, and is independent from the remaining values of $a_s$.
\end{clm}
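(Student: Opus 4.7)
The plan is to swap the order of summation in $f_a(x) = \sum_{z<x} \sum_i m^{k-1-i} \phi_a(z,i)$ and analyze, for each string $s \in \TK$, the total coefficient with which $a_s$ appears. Since $\phi_a(z,i)$ depends on $a$ only through the coordinate $a_{z_{[i]}}$, the variable $a_s$ can contribute only via terms $\phi_a(z,i)$ for which $i = |s|$ and $z_{[i]} = s$. Grouping these yields
\[
f_a(x) = \sum_{s \in \TK} c_{s,x}\, a_s + (\text{terms independent of the values } a_s),
\]
where $c_{s,x} = m^{k-1-|s|} \sum_{z < x,\; z_{[|s|]} = s} \kappa(z_{|s|})$ with $\kappa(0)=+1$, $\kappa(1)=+1$, $\kappa(2)=-2$, as read off from~(\ref{eqn:phi}). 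The goal reduces to showing $c_{s,x}=0$ whenever $s$ is not a prefix of $x$.

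Fix such an $s$ of length $i$, and compare $s$ to the length-$i$ prefix $x_{[i]}$ of $x$. If $s > x_{[i]}$ lexicographically, then every $z$ beginning with $s$ satisfies $z \ge s \cdot 0^{k-i} > x$, so no term contributes and $c_{s,x}=0$ trivially. If $s < x_{[i]}$, then every $z$ beginning with $s$ satisfies $z < x$, so the sum ranges over all $3^{k-i}$ completions of $s$; summing $\kappa(z_i)$ over these completions gives $3^{k-i-1}\cdot(\kappa(0)+\kappa(1)+\kappa(2)) = 3^{k-i-1}\cdot 0 = 0$. Either way $c_{s,x}=0$, which is exactly the claim.

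I would then remark that the cancellation $\kappa(0)+\kappa(1)+\kappa(2)=0$ is precisely the reason the third branch of~(\ref{eqn:phi}) was chosen to equal $m - 2a_{x_{[i]}} - 1$, as foreshadowed in the commentary following~(\ref{eqn:partial f}); this is the only nontrivial input to the proof, everything else is bookkeeping. I do not expect any serious obstacle: the only delicate step is verifying the prefix/lexicographic dichotomy, which follows directly from the definition $z < x \iff z_{[i]} < x_{[i]}$ or ($z_{[i]} = x_{[i]}$ and $z_{[i..k]} < x_{[i..k]}$) applied in the case $s \ne x_{[i]}$.
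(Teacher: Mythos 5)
Your proposal is correct and follows essentially the same route as the paper: rewrite $f_a(x) = \sum_{i} m^{k-1-i}\sum_{z<x}\phi_a(z,i)$, group the inner sum by the length-$i$ prefix of $z$, and use the cancellation in the three branches of Equation~(\ref{eqn:phi}). The paper phrases the key observation as the three branch-values summing to the constant $m$ (so the contribution of each non-prefix block is $3^{k-i-1}m$, independent of $a$), whereas you phrase it as the coefficients $\kappa(0)+\kappa(1)+\kappa(2)=0$; these are the same fact viewed from two angles, and both arguments then split on whether $s<x_{[i]}$ (full block, zero net $a_s$-contribution) or $s>x_{[i]}$ (empty block). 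Your version is marginally more streamlined because it targets the coefficient of $a_s$ directly rather than computing the full explicit formula for $f_a(x)$ as the paper does, but there is no substantive difference.
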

\begin{proof}
From Equation~(\ref{eqn:f}) and Equation~(\ref{eqn:partial f}), we can write
\begin{equation}
\label{eqn:dep_f(x)}
f_a(x) = \sum_{i=0}^{k-1} m^{k-1-i} \sum_{z<x} \phi_a(z,i).
\end{equation}
Note that the sum of the elements on the right-hand-side of Equation~(\ref{eqn:phi}) is $m$ for every value of $a_{x[i]}$.
This means that for every $s\in[3]^{i-1}$ such that $s < x_{[i]}$, we have
\[
\sum_{z<x: z_{[i]}=s} \phi_a(z,i) = 3^{k-1-i}\cdot m.
\]
The number of such $s$ is exactly $x_{[i]}$.
Using this, and summing explicitly over $z<x: z_{[i]} = x_{[i]}$, we get that
\[
f(x) = \sum_{i=0}^{k-1} (3m)^{k-1-i}m x_{[i]} + \sum_{i=0}^{k-1} m^{k-i-1} \cdot
\begin{cases}
x_{[i+1..k]} \cdot a_{x_{[i]}}, & \mbox{if } x_i = 0; \\
3^{k-1-i}\cdot a_{x_{[i]}} + x_{[i+1..k]} (a_{x_{[i]}}+1), & \mbox{if } x_i = 1; \\
3^{k-1-i} (2a_{x_{[i]}}+1) + x_{[i+1..k]} (m-2a_{x_{[i]}}-1), & \mbox{if } x_i = 2. \\
\end{cases}
\]
\end{proof}


The set $B$ is defined as $A\times [k]$.
For $a\in A$, $j\in [k]$, and $\delta=\pm 1$, let $a[j,\delta]$ denote the function $b\colon\TK\to [-1..k^3]$ defined by
\[
b_s = 
\begin{cases}
a_s+\delta,&\text{if $|s|=j$;}\\
a_s,&\text{otherwise.}
\end{cases}
\]
Note that the value of $b_s$ may lie outside of $[k^3-1]$, but the definition $f_b$ still makes sense, and Claim~\ref{clm: dependence of f on a_s} still holds.

For $(a,j)\in B$, the corresponding function $g_{a,j}$ is defined by
\[
g_{a,j}(x) = 
\begin{cases}
f_{a[j,+1]}(x),&\text{if $x_j=0$;}\\
f_{a[j,-1]}(x),&\text{if $x_j=1$;}\\
f_{a}(x),&\text{if $x_j=2$.}\\
\end{cases}
\]
%
%

\begin{clm}
\label{clm:far-convex}
Every function $g_{a,j}$ is $\frac19$-far from convex.
\end{clm}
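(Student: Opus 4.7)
My plan is to apply Corollary~\ref{cor:lineTriple}: a function on the line is convex iff it is convex on every triple of consecutive points, so the distance from $g_{a,j}$ to convexity is at least the maximum number of pairwise-disjoint consecutive triples on which $g_{a,j}$ violates convexity. It therefore suffices to exhibit, inside each $g_{a,j}$, a collection of $\Omega(n)$ points grouped into pairwise-disjoint violating consecutive triples.

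First, I would derive a closed form for the perturbation $g_{a,j}(x)-f_a(x)$. Claim~\ref{clm: dependence of f on a_s} combined with the observation that $\phi_{a[j,\pm 1]}(z,j)-\phi_a(z,j)$ takes the values $\pm 1,\pm 1,\mp 2$ for $z_j=0,1,2$ makes the prefix sum in Equation~(\ref{eqn:dep_f(x)}) telescope. Setting $N=3^{k-1-j}$ and $M=m^{k-1-j}$, this gives, inside every ``big block'' of $3N$ consecutive integers (indexed by the residue $s$ modulo $3N$),
\[
g_{a,j}(x)-f_a(x)=\begin{cases}+Ms & s\in[0,N),\\ -Ms & s\in[N,2N),\\ 0 & s\in[2N,3N).\end{cases}
\]
So $g_{a,j}$ equals the convex function $f_a$ plus a piecewise-linear perturbation that drops by $(2N-1)M$ at every $B_0\to B_1$ boundary and jumps back up by the same amount at every $B_1\to B_2$ boundary.

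Next, I would produce violating triples centred on these jumps. For the consecutive triple $(N-2,N-1,N)$ inside a big block, direct substitution from the closed form above gives
\[
g(N-2)+g(N)-2g(N-1)=\bigl[f_{a[j,+1]}(N-2)-2f_{a[j,+1]}(N-1)+f_{a[j,+1]}(N)\bigr]-2MN,
\]
and the crucial quantitative step is the uniform upper bound $f_{a[j,+1]}(N-2)-2f_{a[j,+1]}(N-1)+f_{a[j,+1]}(N)<2MN$ valid for every $a\in A$. I would prove this by expanding the bracketed second difference digit-by-digit: only the digits at positions $j,j+1,\dots,k-1$ of $N-1=0^{j}0\,2^{k-1-j}$ and $N=0^{j}1\,0^{k-1-j}$ differ, and the identity $m=3k^{3}$ together with the constraint $a_s<k^{3}-1$ makes the contributions from positions $i>j$ telescope against the $+M$ contribution from position $j$, leaving a total strictly below $M\le 2MN$. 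A symmetric calculation yields a second violating triple $(2N-2,2N-1,2N)$ at the $B_1\to B_2$ boundary of the same big block.

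Finally, I would observe that the two triples produced in each big block are pairwise disjoint (provided $N\ge 3$) and fully contained in that big block, so summing over all $3^{j}$ big blocks (and handling the boundary cases $N\le 2$ separately using cross-block triples) yields the desired $\Omega(n)$ disjoint violating consecutive triples, which by Corollary~\ref{cor:lineTriple} forces at least $\tfrac{n}{9}$ modifications to restore convexity. The main obstacle will be the uniform upper bound on the second differences of $f_{a[j,\pm 1]}$ at the sub-block boundaries; the cancellation that drives this bound is tight, and this is exactly the step where the parameter choices $m=3k^{3}$ and $a_s\in[k^{3}-1]$ are used essentially.
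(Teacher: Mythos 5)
Your plan has a fundamental counting error, and the claim would not follow from it.

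You correctly derive the closed form for the perturbation $p(r)=g_{a,j}(x)-f_a(x)$ (with $r=x\bmod 3N$), and you correctly observe it is piecewise linear with two kinks per big block. But a piecewise-linear perturbation has zero second difference on the interior of each linear piece; the second difference of $p$ is nonzero only at $r\in\{N-1,N,2N-1,2N\}$, and it is negative only at $r=N-1$ (value $-2MN$) and at $r=2N$ (value $-M(2N-1)$). So there are at most \emph{two} violating consecutive triples per big block, hence $O(3^{j})$ disjoint violating consecutive triples in total. When $j$ is small (say $j=0$), this is $O(1)$, while the claim requires distance $\ge n/9$ for \emph{every} $j\in[k]$. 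The inequality ``distance $\ge$ number of disjoint violating consecutive triples'' is valid, but it can be, and here is, far too lossy: the distance to convexity of a function whose derivative has a single huge downward kink can be $\Omega(n)$ while there is only one violating consecutive triple.

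The point the local-triple approach misses is that the violations in $g_{a,j}$ are \emph{non-local}. The paper instead partitions $[n]$ into $3^{k-2}$ disjoint $9$-tuples of points that agree except on digits $j$ and $k-1$. In a $9$-tuple take $x$ (with $x_j=x_{k-1}=0$) and $y$ (with $y_j=1$, $y_{k-1}=0$), which are $N=3^{k-1-j}$ apart. Comparing $\partial g_{a,j}(x)=\partial f_{a[j,+1]}(x)$ and $\partial g_{a,j}(y)=\partial f_{a[j,-1]}(y)$ digit by digit, the $j$-th digit \emph{decreases} by one (the perturbation pushes $x$'s slope up and $y$'s down), and the digits at positions $i>j$ contribute at most $m^{k-1-j}-1$, so $\partial g(y)<\partial g(x)$ even though $y>x$. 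Therefore every convex function must disagree with $g$ on one of $x,x+1,y,y+1$, all of which lie in the $9$-tuple. Disjointness of the $9$-tuples then gives distance $\ge 3^{k-2}=n/9$. Your triple-counting cannot reproduce this because the four witness points span a distance of $N$, which for small $j$ is almost the entire line.

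A secondary error: the second triple you propose, $(2N-2,2N-1,2N)$, is not a violation; the second difference of $p$ there is $+2MN>0$. The negative kink near that boundary is at $r=2N$, i.e.\ the triple $(2N-1,2N,2N+1)$, where the drop in $\partial p$ is only $M(2N-1)$; that need not dominate $\partial^{2}f_a(2N)$ (which can be as large as $\Theta(mM)$) when $j$ is close to $k$, so even two triples per block is not guaranteed.
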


\begin{proof}
First consider the case $j < k-1$.
Partition the domain of $g_{a,j}$ into $9$-tuples which differ only in the $j$th and the last $(k-1)$th ternary digits. 
In a given $9$-tuple, let $x$ and $y$ be the inputs that satisfy $x_j = 0, x_{k-1} = 0$ and $y_j = 1, y_{k-1} = 0$. 
The definition of $g_{a,j}$ implies that 
\[
\partial g_{a,j}(x) = m^{k-1-j} (a_{x_{[j]}}+1) + \sum_{i = 0}^{j-1} m^{k-1-i} \phi_a(x,i) + 
\sum_{i = j+1}^{k-1} m^{k-1-i} \phi_a(x,i)
\]
and
\[
\partial g_{a,j}(y) = m^{k-1-j} a_{y_{[j]}} + \sum_{i = 0}^{j-1} m^{k-1-i} \phi_a(y,i) + 
\sum_{i = j+1}^{k-1} m^{k-1-i} \phi_a(y,i).
\]
Since $y_{[j]} = x_{[j]}$, we have $a_{y_{[j]}} = a_{x_{[j]}}$ and $\phi_a(y,i) = \phi_a(x,i)$ for each $i < j$. 
Therefore,
\begin{align*}
\partial g_{a,j}(y) - \partial g_{a,j}(x) 
&= - m^{k-1-j} + \sum_{i = j+1}^{k-1} m^{k-1-i} (\phi_a(y,i) - \phi_a(x,i)) \\
&\le -m^{k-1-j} + \sum_{i=j+1}^{k-1} m^{k-1-i} (m-1) \\
&= -m^{k-1-j} + (m^{k-1-j} - 1) < 0
\end{align*}
and so $\partial g_{a,j}(y) < \partial g_{a,j}(x)$. Any convex function must disagree with $g_{a,j}$ on at least one of the four points $x$, $x+1$, $y$, or $y+1$.

The case $j=k-1$ is similar, but only considering the triples which differ in the last, $(k-1)$st, digit.
\end{proof}

The probability distributions $\mu$ and $\nu$ are uniform on $A$ and $B$, respectively.

Let $\cT$ be a deterministic decision tree of depth $q\le k/2$.
Now we define the mapping $\eta\colon B\to A$ which depends on $\cT$.

We will define $\eta$ in the inverse direction, starting from a potential image $a\in A$.  
Let $Q = \{x_1,\dots,x_q \}\subseteq [3^k]$ be the values which the decision tree $\cT$ queries on input of $f_a$.
Denote $S = \{x_{[j]} \mid x\in Q,\; j\in[k] \}$.
We will proceed only if 
\begin{equation}
\label{eqn:condition a_s}
a_s\in [1..k^3-2]
\qquad\text{for all}\quad s\in S.
\end{equation}
Take $j\in [k]$, and define $b$ as
\begin{equation}
\label{eqn:b^j}
b_s = \begin{cases}
a_s - 1 & \mbox{if $s = x_{[j]}$ for some $x \in Q$ with $x_j = 0$;} \\
a_s + 1 & \mbox{if $s = x_{[j]}$ for some $x \in Q$ with $x_j = 1$;} \\
a_s & \mbox{if $s = x_{[j]}$ for some $x \in Q$ with $x_j = 2$;} \\
a_s & \mbox{otherwise.}
\end{cases}
\end{equation}
if there are no conflicts among the first three cases in this definition.
Note that Equation~(\ref{eqn:condition a_s}) implies that $b\in A$.
If $b$ is well-defined, we let $\eta(b,j) = a$.

\begin{clm}
\label{clm:eta}
The mapping $\eta$ is well-defined and $\cT(f_a) = \cT(g_{b, j})$ in the above notation.
\end{clm}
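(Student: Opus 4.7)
The plan is to split the claim into two subparts. First, I will verify that for every $x\in Q$ the values $f_a(x)$ and $g_{b,j}(x)$ coincide, which immediately yields $\cT(f_a)=\cT(g_{b,j})$ by determinism of $\cT$. Second, I will show that if two distinct $a,a'\in A$ both produce the same $b$ at level $j$, then in fact $a=a'$, so that $\eta$ is a (partial) function rather than a multi-valued relation.

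For the first subpart, I will use Claim~\ref{clm: dependence of f on a_s}: the value $f_a(x)$ depends only on the entries $a_s$ for prefixes $s$ of $x$. The definition of $g_{b,j}$ rewrites $g_{b,j}(x)=f_{b[j,\delta(x_j)]}(x)$, where $\delta(0)=+1$, $\delta(1)=-1$, and $\delta(2)=0$, and the same claim applies. I then case-split on the length of a prefix $s$ of $x$: when $|s|\neq j$, the ``otherwise'' branch of Equation~(\ref{eqn:b^j}) gives $b_s=a_s$, hence $b[j,\delta(x_j)]_s=a_s$; when $|s|=j$, so $s=x_{[j]}$, the shift that Equation~(\ref{eqn:b^j}) applies to $a_s$ to form $b_s$ is exactly the negative of the subsequent shift $\delta(x_j)$, so $b[j,\delta(x_j)]_s=b_s+\delta(x_j)=a_s$ in all three sub-cases. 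Claim~\ref{clm: dependence of f on a_s} then gives $g_{b,j}(x)=f_a(x)$, and determinism of $\cT$ yields $\cT(f_a)=\cT(g_{b,j})$.

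For the second subpart, suppose $(a,j)$ and $(a',j)$ both produce the same $b$. By the first subpart, $\cT$ on $g_{b,j}$ queries the same set $Q$ as on $f_a$ and the same set $Q'$ as on $f_{a'}$, whence $Q=Q'$. I then recover $a$ coordinate-wise: for $s\in\TK$ that is not a length-$j$ prefix of any $x\in Q$, the ``otherwise'' branch forces $a_s=b_s=a'_s$; for $s=x_{[j]}$ with $x\in Q$, the no-conflict condition implicit in $\eta(b,j)$ being defined ensures that all queries with that length-$j$ prefix share a common $j$-th digit, and this digit together with $b_s$ determines both $a_s$ and $a'_s$ via the inverse of the shift in Equation~(\ref{eqn:b^j}). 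Hence $a=a'$.

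The main technical subtlety I expect is keeping the signs straight, so that the shift from $a_s$ to $b_s$ and the subsequent shift from $b_s$ to $b[j,\delta(x_j)]_s$ really do cancel in each of the three sub-cases of $x_j$; once that bookkeeping is done, the rest of the argument reduces to routine applications of Claim~\ref{clm: dependence of f on a_s} and determinism of $\cT$.
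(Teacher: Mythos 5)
Your proof is correct and takes essentially the same approach as the paper: you verify that $f_a$ and $g_{b,j}$ agree on the query set $Q$ via the sign cancellation between the shift in Equation~(\ref{eqn:b^j}) and the definition of $g_{b,j}$ (invoking Claim~\ref{clm: dependence of f on a_s}), and you then invert the construction to show well-definedness, using the fact that the common leaf $\cT(g_{b,j})$ determines $Q$. You simply spell out in more detail the case analysis and inversion that the paper's proof leaves implicit.
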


\pfstart
By definition of $g_{b,j}$ and $b$, and using Claim~\ref{clm: dependence of f on a_s}, we have that $f_a(x) = g_{b,j}(x)$ for all $x\in Q$.
This proves that $\cT(f_a) = \cT(g_{b,j})$.

Now consider $(b,j)$ in the domain of $\eta$.
By the previous paragraph it can only come from $a\in A$ such that $\cT(f_a) = \cT(g_{b,j})$.  Then, the set $Q$ is known, and the mapping in Equation~(\ref{eqn:b^j}) can be inverted, proving that $\eta$ is well-defined.
\pfend

\begin{clm}
\label{clm:muetaB}
We have $\mu(\eta(B)) = \Omega(1)$ and $|\eta^{-1}(a)| \ge k/2$ for every $a\in\eta(B)$. 
\end{clm}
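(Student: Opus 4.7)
The plan is to fix any $a \in A$ and decouple the two conditions implicit in the definition of $\eta^{-1}(a)$: namely, (I) that $a_s \in [1..k^3-2]$ for every $s \in S(a)$, and (II) that $j \in [k]$ avoids a conflict among the first three cases of Equation~(\ref{eqn:b^j}). Condition (I) depends only on $a$, so it alone determines whether $a \in \eta(B)$, while (II) governs the size of $\eta^{-1}(a)$ via a purely combinatorial count on $Q(a)$.

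For (II) I would argue that $j$ is bad exactly when some pair of distinct queries $x, y \in Q(a)$ has its first position of disagreement equal to $j$. Building the ternary trie on $Q(a)$, the bad $j$'s are precisely the depths of internal branching nodes of this trie, and a trie with $|Q(a)| \le q \le k/2$ leaves has at most $|Q(a)| - 1 \le k/2 - 1$ branching nodes. Hence at least $k - (k/2 - 1) = k/2 + 1$ values of $j$ are good, and each one yields a distinct preimage $(b, j) \in \eta^{-1}(a)$ via Equation~(\ref{eqn:b^j}); this gives $|\eta^{-1}(a)| \ge k/2$ whenever (I) holds.

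For (I), and hence for $\mu(\eta(B)) = \Omega(1)$, I would run $\cT$ on $f_a$ while revealing the values $a_s$ incrementally. Let $S_i$ denote the set of prefixes of the first $i$ queried points, so $S_i \subseteq S(a)$ and $|S_i| \le ik$. By Claim~\ref{clm: dependence of f on a_s}, the answer $f_a(x^{(i)})$ depends only on $(a_s : s \in S_i)$, so an easy induction shows that the $(i{+}1)$st query, and therefore $S_{i+1}$, is a deterministic function of $(a_s : s \in S_i)$; the fresh prefixes $S_{i+1} \setminus S_i$ (at most $k$ new ones) bring in values $a_s$ that are independent of $(a_s : s \in S_i)$ and uniform on $\{0, \dots, k^3-2\}$. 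A union bound over the at most $|S(a)| \le qk \le k^2/2$ prefixes then gives that the probability some $a_s$ with $s \in S(a)$ lies in the forbidden set $\{0, k^3-2\}$ is at most $|S(a)| \cdot 2/(k^3 - 1) = O(1/k)$, proving $\mu(\eta(B)) \ge 1 - O(1/k) = \Omega(1)$.

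The main obstacle is this last probabilistic step: because $S(a)$ itself depends on $a$, one cannot simply union-bound over a fixed index set. The resolution is precisely the prefix-only dependence of $f_a$ guaranteed by Claim~\ref{clm: dependence of f on a_s}, which allows the queries — and hence $S(a)$ — to be discovered from a small, gradually growing slice of coordinates of $a$, preserving enough conditional independence among the remaining $a_s$ to make the sequential union bound legitimate.
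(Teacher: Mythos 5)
Your proof is correct and follows the same plan as the paper's: it shows that at most $|Q|-1 \le k/2 - 1$ values of $j$ are bad (so at least $k/2$ are good whenever $a$ is admissible), and that the event that some $a_s$ with $s$ a prefix of a queried point falls outside $[1..k^3-2]$ has probability $o(1)$ by a union bound over the at most $qk \le k^2/2$ prefixes. The two refinements you add---a direct ternary-trie argument in place of the paper's citation to Lemma~6 of~\cite{Belovs18}, and a deferred-decisions justification of the union bound that explicitly resolves the adaptivity issue (namely that $S(a)$ itself depends on $a$, which the paper's one-line union bound does not comment on)---are both sound and fill in steps the paper treats tersely.
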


\pfstart
We will prove first that $a\in\eta(B)$ if condition Equation~(\ref{eqn:condition a_s}) is satisfied.
Indeed, in this case, we do not set $\eta(b,j)=a$ only if there are two inputs $x,y\in Q$ such that $x_{[j]} = y_{[j]}$ and $x_j\ne y_j$.
By a simple modification of \cite[Lemma 6]{Belovs18}, there are at most $|Q|-1$ values of $j$ for which this happens.  As $|Q|\le k/2$, this proves the second part of the claim.

For the first part of the claim, the probability that Equation~(\ref{eqn:condition a_s}) does not hold is upper bounded by the union bound over at most $k/2$ elements of $Q$ and $k$ prefixes of each $x\in Q$ as
\[
\Pr_{a\sim\mu}[\text{Equation~(\ref{eqn:condition a_s}) does not hold}] \le \frac k2\cdot k\cdot\frac{2}{k^3-1} = o(1).\qedhere
\]
\pfend

Now we can apply Lemma~\ref{lem:Yao-easy} and get that complexity of $1/9$-testing functions for convexity is $\Omega(k) = \Omega(\log n)$ as required.

\subsection{General lower bound for the line}

The lower bound can be strengthened for general values of $\epsilon$ as follows.

\begin{thm}
\label{thm:line-lb-eps}
Fix any $\frac1n \le \epsilon \le \frac19$. Any $\epsilon$-tester for convexity of functions $[n] \to \bZ$ has query complexity
\[
\Omega\left(\tfrac1\epsilon \log(\epsilon n)\right)
\]
\end{thm}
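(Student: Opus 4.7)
\textbf{Proof plan for Theorem~\ref{thm:line-lb-eps}.} The plan is to amplify the $\Omega(\log n)$ lower bound from Theorem~\ref{thm:line-lb} by a block-based direct-product construction in the spirit of~\cite{Belovs18}. I would set $L := \lfloor 1/(9\epsilon) \rfloor$ and let $k$ be the largest integer with $3^k \le n/L$, so that $k = \Theta(\log(\epsilon n))$, and partition $[n]$ into $L$ consecutive blocks of size $3^k$ (filling any leftover positions on the right with a fixed steep convex tail). In each block I would host an independent copy of the hard distribution from Section~\ref{sec:Omega1}, then glue the blocks into a single global function by adding an affine shaping term $iSy + V_i$ in block $i$ (where $x = i \cdot 3^k + y$, $S$ strictly exceeds $\max_a \partial f_a$, and the vertical shifts $V_i$ are chosen inductively so that the inter-block jumps fall in the narrow gap between the consecutive slope windows $[iS, (i+1)S)$ and $[(i+1)S, (i+2)S)$). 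The positive distribution $\mu$ samples $\vec{a} = (a^{(0)}, \dots, a^{(L-1)}) \in A^L$ uniformly and uses $f_{a^{(i)}}$ in each block; the negative distribution $\nu$ additionally samples $i^* \in [L]$ and $j \in [k]$ uniformly, and uses $g_{a^{(i^*)}, j}$ in block $i^*$ (keeping the same affine shaping).

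I would then verify that every sampled $F_{\vec{a}}$ is convex by combining Claim~\ref{clm:fa_convex} with the choice of the $V_i$, and that every sampled $G_{\vec{a}, i^*, j}$ is $\epsilon$-far from convex by the following reduction: if a convex function $H$ disagrees with $G$ on at most $\delta n$ points, then by Lemma~\ref{lem:restriction} the restriction $H|_{\text{block } i^*}$ is convex; subtracting the affine shaping yields another convex function on $[3^k]$ (since convex minus affine is convex) that disagrees with $g_{a^{(i^*)}, j}$ on at most $\delta n$ points, which by Claim~\ref{clm:far-convex} forces $\delta n \ge 3^k/9$ and hence $\delta \ge 1/(9L) \ge \epsilon$.

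For the lower bound, I would invoke Lemma~\ref{lem:Yao-easy} with $\mu$ and $\nu$ as above. Given a deterministic decision tree $\cT$ of depth $q \le cLk$ for a small constant $c$, let $Q_i$ be the set of queries of $\cT$ inside block $i$; by averaging, the set $I := \{i : |Q_i| \le k/2\}$ has size $|I| \ge (1 - 2c)L$. For each $(\vec{a}, i^*, j)$ with $i^* \in I$, I would apply the single-block analysis of Section~\ref{sec:Omega1} inside block $i^*$, producing $b \in A$ such that $f_b$ agrees with $g_{a^{(i^*)}, j}$ on every query in $Q_{i^*}$. Defining $\eta(\vec{a}, i^*, j)$ by replacing $a^{(i^*)}$ with $b$, conditions (i)--(iii) of Lemma~\ref{lem:Yao-easy} reduce to the block-level statements already established in Claims~\ref{clm:eta} and~\ref{clm:muetaB}, with the extra factor of $L$ in $|\eta^{-1}(\vec{a}')| = \Omega(Lk)$ arising from the uniform choice of $i^* \in I$. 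This yields the desired $\Omega(Lk) = \Omega(\epsilon^{-1}\log(\epsilon n))$ bound. The main technical obstacle I anticipate is calibrating the shifts $V_i$ so that the window argument accommodates both $f_{a^{(i^*)}}$ and $g_{a^{(i^*)}, j}$ under the same affine shaping (the farness reduction relies crucially on this), and so that no part of the shaping inadvertently reveals which block is perturbed.
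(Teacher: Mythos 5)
Your proposal is correct and follows essentially the same approach as the paper: tile $\ell \approx 1/(9\epsilon)$ independent copies of the $\Omega(\log n)$ construction from Section~\ref{sec:Omega1}, glue them with increasing slope offsets to keep the concatenation convex (the paper absorbs your ``$iSy + V_i$'' shaping automatically by defining the tiled function through its discrete derivative, adding $t\cdot m^k$ in block $t$), and apply Lemma~\ref{lem:Yao-easy} by counting good pairs $(t,j)$ to get $\Omega(\ell k) = \Omega(\epsilon^{-1}\log(\epsilon n))$. The only cosmetic differences are that the paper bounds the total number of bad $(t,j)$ pairs directly across all blocks (conflicts contribute at most $q-1$, boundary issues $O(\ell/k)$ in expectation) rather than first filtering to blocks with at most $k/2$ queries, and it establishes farness by counting disjoint witnesses rather than via your restrict-and-subtract-affine reduction, but both variants go through.
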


The proof of Theorem~\ref{thm:line-lb-eps} is a slight extension of the proof of Theorem~\ref{thm:line-lb}.

Define $\ell = \lceil \frac1{9\epsilon} \rceil$, $k= \lfloor \log_3 \frac{n}{\ell} \rfloor$, and $m = 3k^3$. We will show that $\epsilon$-testing the convexity of a function mapping $[\ell 3^k] \to \bZ$ requires $\Omega(\ell k)$ queries.

We will use notations with tilde for objects referring to the proof of Theorem~\ref{thm:line-lb-eps}, and non-tilde notation for the objects from Section~\ref{sec:Omega1}.

\newcommand{\tx}{t\cdot 3^k+x}
\newcommand{\tg}{\tilde g}
\newcommand{\tpf}{\widetilde{\partial f}}

Let $A$ be as in Section~\ref{sec:Omega1}, and define $\tA = A^\ell$.
For $a\in\tA$, we have $a = (a^0,\dots,a^{\ell-1})$ with each $a^t\in A$.
The partial derivative is given by
\[
\tpf_a(\tx) = t\cdot m^k + \partial f_{a^t}(x),
\]
for $t\in[\ell]$, $x\in[3^k]$, and $\partial f_a$ as in Section~\ref{sec:Omega1}.
The function $\tf_a$ is given by
\[
\tf_a(\tx) = \sum_{z< \tx} \tpf_a(\tx).
\]

Similarly to Claims Claim~\ref{clm:fa_convex} and Claim~\ref{clm: dependence of f on a_s}, we have the following result
\begin{clm}
Every function $\tf_a$ is convex.
The value of $\tf_a(\tx)$ only depends on the values of $a^t_s$ as $s$ runs through the prefixes of $x$.
\end{clm}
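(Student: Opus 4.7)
My plan is to reduce both statements to the analogous facts already established for the single-block functions $f_{a^t}$ in Claims~\ref{clm:fa_convex} and~\ref{clm: dependence of f on a_s}, using the block-diagonal structure of $\tpf_a$.

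For convexity, I would first show that the discrete derivative $\tpf_a$ is monotone on all of $[\ell \cdot 3^k]$, which is equivalent to convexity of $\tf_a$ by the same argument as in Claim~\ref{clm:fa_convex}. Within a single block $\{t \cdot 3^k, \ldots, (t+1)\cdot 3^k - 1\}$, the constant offset $t \cdot m^k$ is the same on both endpoints of any consecutive pair, so monotonicity follows immediately from the monotonicity of $\partial f_{a^t}$ proved in Claim~\ref{clm:fa_convex}. At a boundary, I need $t \cdot m^k + \partial f_{a^t}(3^k - 1) \le (t+1) m^k + \partial f_{a^{t+1}}(0)$; since $\partial f_{a^t}$ takes values in $[m^k]$, we have $\partial f_{a^t}(3^k-1) < m^k$ and the inequality is immediate.

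For the dependence statement, I would split the definition of $\tf_a(\tx)$ according to full blocks. Writing $\tx = t \cdot 3^k + x$, we get
\[
\tf_a(\tx) \;=\; \sum_{t'=0}^{t-1}\Big(3^k t' m^k \;+\; \sum_{z < 3^k}\partial f_{a^{t'}}(z)\Big) \;+\; t m^k x \;+\; f_{a^t}(x).
\]
The key observation, which is the only substantive step, is that $\sum_{z < 3^k} \partial f_{a^{t'}}(z)$ is actually a \emph{constant} independent of $a^{t'}$. This is because for each digit position $i$ and each prefix $s \in [3]^i$, the three values $\phi_a(z,i)$ attained as $z_i$ ranges over $\{0,1,2\}$ (with $z_{[i]}=s$) sum to $a_s + (a_s+1) + (m - 2a_s - 1) = m$, so the dependence on $a_s$ cancels block-wise. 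This same cancellation was already exploited in the proof of Claim~\ref{clm: dependence of f on a_s}, so I would simply invoke it. Consequently, the entire contribution of the blocks $t' < t$ is independent of $a$, and the only $a$-dependence comes from the last term $f_{a^t}(x)$, to which I apply Claim~\ref{clm: dependence of f on a_s} to conclude.

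I do not expect any genuine obstacle here; both parts are routine extensions of the Section~\ref{sec:Omega1} calculations once the correct block decomposition is written down. The most subtle bookkeeping point is confirming that the full-block sums are constant, but this reuses an identity already present in the earlier proof.
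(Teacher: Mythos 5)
Your proof is correct and follows the approach the paper intends: the paper states this claim without a written proof, merely citing Claims~\ref{clm:fa_convex} and~\ref{clm: dependence of f on a_s} as analogous, and your argument fleshes out exactly that reduction via the block decomposition, including the two non-trivial checks (monotonicity of $\tpf_a$ across block boundaries using $\partial f_{a^t}(3^k-1) < m^k$, and constancy of the full-block sums $\sum_{z<3^k}\partial f_{a^{t'}}(z)$ via the cancellation $a_s + (a_s+1) + (m-2a_s-1)=m$). Nothing is missing.
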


The set $\tB$ is defined as $\tA\times[\ell]\times[k]$.
For $a\in\tA$, define $a[t,j,\delta]$ as $b=(b^0,\dots,b^{\ell-1})$ with $b^t = a^t[j,\delta]$ and $b^u = a^u$ for $u\ne t$.
Then,
\[
\tg_{a,t,j}(\tx) =
\begin{cases}
\tf_{a[t,j,+1]}(\tx),&\text{if $x_j=0$;}\\
\tf_{a[t,j,-1]}(\tx),&\text{if $x_j=1$;}\\
\tf_{a}(\tx),&\text{if $x_j=2$.}\\
\end{cases}
\]

\begin{clm}
Every function $\tg_a$ is $\eps$-far from convex.
\end{clm}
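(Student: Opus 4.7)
The plan is to adapt Claim~\ref{clm:far-convex} to the multi-block setting. By construction, $\tg_{a,t,j}$ agrees with the convex function $\tf_a$ outside the block $B_t = \{t\cdot 3^k, t\cdot 3^k + 1, \ldots, (t+1)\cdot 3^k - 1\}$, so all required witnesses of non-convexity must live inside $B_t$. The key observation is that, within $B_t$, the function $\tg_{a,t,j}$ is essentially $g_{a^t,j}$ from Section~\ref{sec:Omega1}, modified only by an affine shift in the argument and a global additive constant $t\cdot m^k$ in the partial derivative. Since such shifts preserve both convexity and witnesses of non-convexity, the analysis of Claim~\ref{clm:far-convex} transfers verbatim to $\tg_{a,t,j}$ restricted to $B_t$.

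Running that analysis, for $j < k-1$ I would partition $B_t$ into $3^{k-2}$ disjoint ``$9$-tuples'' indexed by the local-coordinate digits of $x$ other than the $j$th and $(k-1)$st. Each $9$-tuple contains four points $x, x+1, y, y+1 \in B_t$ (with $x_j = 0$, $y_j = 1$, $x_{k-1} = y_{k-1} = 0$) for which $\tpf_a(x) > \tpf_a(y)$, and hence any convex function on $[\ell\cdot 3^k]$ must disagree with $\tg_{a,t,j}$ on at least one of those four points. The edge case $j = k-1$ is handled analogously using $3$-tuples that differ only in the last digit, and yields even more witnesses. Since the $9$-tuples are disjoint, this produces at least $3^{k-2}$ disjoint required changes.

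The fractional distance of $\tg_{a,t,j}$ to any convex function is then at least $\frac{3^{k-2}}{\ell\cdot 3^k} = \frac{1}{9\ell}$. Using $\ell = \lceil 1/(9\eps) \rceil \le 2/(9\eps)$ when $\eps \le 1/9$, this is at least $\eps/2$, so $\tg_{a,t,j}$ is $\Omega(\eps)$-far from convex, which is sufficient for the intended asymptotic lower bound after adjusting absolute constants.

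The main delicate point to verify is that witnesses identified inside $B_t$ genuinely remain witnesses in the enlarged function, i.e., no convex $h$ on $[\ell\cdot 3^k]$ can agree with $\tg_{a,t,j}$ on all four witness points by exploiting freedom from outside $B_t$. This follows from the local characterization in Corollary~\ref{cor:lineTriple}: agreement at $x, x+1, y, y+1$ would force the two discrete slopes $h(x+1)-h(x) = \tpf_a(x)$ and $h(y+1)-h(y) = \tpf_a(y)$, and convexity of $h$ on consecutive triples spanning $[x, y+1]$ would force $\tpf_a(x) \le \tpf_a(y)$, contradicting the strict slope inequality from Claim~\ref{clm:far-convex}.
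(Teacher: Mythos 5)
Your proposal is correct and follows essentially the same route as the paper, which simply observes (in one sentence) that the argument of Claim~\ref{clm:far-convex} applies block-by-block and produces $3^{k-2}$ disjoint violating pairs inside the block $B_t$. You flesh out the reduction to the single-block case, correctly observe that the slope comparison is local and hence the witnesses remain witnesses inside the full domain $[\ell\cdot 3^k]$, and you are more explicit than the paper about the resulting constant: with $\ell = \lceil 1/(9\eps)\rceil$ the distance is $\frac{1}{9\ell}$, which is $\ge \eps$ only when $\frac{1}{9\eps}$ is an integer and otherwise gives $\Omega(\eps)$-far (at least $\eps/2$ under the hypothesis $\eps \le \tfrac19$). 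The paper asserts $3^{k-2}\ge \eps\cdot\ell 3^k$ without comment, which is off by this constant factor; as you note, this does not affect the stated asymptotic lower bound, but your phrasing ``$\Omega(\eps)$-far'' is the accurate statement of what the construction delivers.
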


\begin{proof}
This is due to the fact that there are $3^{k-2} \ge \epsilon \cdot \ell 3^{k}$ disjoint pairs of values $x < y$ for which $\partial g(y) > \partial g(x)$, as in the proof of Claim~\ref{clm:far-convex}.
\end{proof}

The probability distributions $\mu$ and $\nu$ are defined as uniform on $\tA$ and $\tB$, respectively.

The mapping $\eta\colon \tB\to\tA$ is also defined similarly to Section~\ref{sec:Omega1}.
Let $\cT$ be a deterministic decision tree of depth $q\le \frac{\ell k}{4}$.
Take $a\in\tA$.  
Let $Q$ be the set of variables queried by $\cT$ on $\tf_a$, and let 
$
Q^t = \{x\in[3^k] \mid \tx \in Q\}
$.
For $(t,j)\in[\ell]\times[k]$, let
\begin{equation}
\label{eqn:bts}
b^t_s = \begin{cases}
a^t_s - 1 & \mbox{if $s = x_{[j]}$ for some $x \in Q^t$ with $x_j = 0$;} \\
a^t_s + 1 & \mbox{if $s = x_{[j]}$ for some $x \in Q^t$ with $x_j = 1$;} \\
a^t_s & \mbox{if $s = x_{[j]}$ for some $x \in Q^t$ with $x_j = 2$;} \\
a^t_s & \mbox{otherwise.}
\end{cases}
\end{equation}
and $b=(a^0,\dots,a^{t-1},b^t,a^{t+1},\dots,a^{\ell-1})$.
We call the pair $(t,j)$ good if there are no conflicts in the first three cases of Equation~(\ref{eqn:bts}) (that is, $b^t_s$ is well-defined) and $b\in \tB$.
If there are at least $\ell k/2$ good pairs, we define $\eta(b,t,j) = a$ for each good pair $(t,j)$, where $b$, of course, depends on $t$ and $j$.

Similarly to Claim~\ref{clm:eta}, $\eta$ is well-defined and $\cT(\tf_{\eta(b)}) = \cT(\tg_b)$ for every $b$ in the domain of $\eta$.
Also, by definition, $\nu(\eta^{-1}(a)) = \Omega(\mu(a))$ for every $a\in\eta(\tB)$.
In order to apply Lemma~\ref{lem:Yao-easy}, it remains to show the following.

\begin{clm}
We have $\mu(\eta(B)) = \Omega(1)$.
\end{clm}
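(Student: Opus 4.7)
The plan is to adapt the two-part argument from Claim~\ref{clm:muetaB}, but instead of requiring the ``good'' event to hold simultaneously over all of $S$, I will bound the \emph{expected} number of bad $(t,j)$ pairs and then apply Markov's inequality. For $a \sim \mu$, let $X(a)$ denote the number of pairs $(t,j) \in [\ell] \times [k]$ that fail to be good. Each such pair falls into one of two disjoint cases: either the first three cases of Equation~(\ref{eqn:bts}) produce a \emph{conflict} inside $Q^t$ (an event that depends only on $Q$, not on $a$), or no conflict arises but some $a^t_s$ with $s = x_{[j]}$ for $x \in Q^t$ lies in $\{0, k^3 - 2\}$, causing the corresponding coordinate of $b^t$ to fall outside $[k^3 - 1]$ so that $b \notin \tilde B$.

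For the conflict contribution, I will apply the modification of \cite[Lemma 6]{Belovs18} used in Section~\ref{sec:Omega1} separately to each set $Q^t$, getting at most $\max(0, |Q^t| - 1)$ conflict indices $j$ for each $t$. Summing, the total number of conflict pairs is at most $\sum_t |Q^t| = |Q| \le \ell k / 4$ \emph{deterministically}. For the range-failure contribution, I will use a per-pair union bound: since $a^t_s$ is uniform on $[k^3 - 1]$, the probability that $a^t_s \in \{0, k^3 - 2\}$ for some $s \in \{x_{[j]} : x \in Q^t\}$ is at most $|Q^t| \cdot \frac{2}{k^3 - 1}$. Summing over all $\ell k$ pairs and using $|Q| \le \ell k/4$ yields an expected range-failure count of $\frac{2k|Q|}{k^3-1} = O(\ell/k)$.

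Combining, $\mathbb{E}_{a \sim \mu}[X(a)] \le \ell k / 4 + O(\ell / k)$. Markov's inequality then gives $\Pr[X \ge \ell k / 2] \le \tfrac{1}{2} + O(1/k^2)$, which is bounded away from $1$ for $k$ sufficiently large (the small-$k$ regime is trivial, since then the depth bound $q \le \ell k /4$ leaves almost no queries to the tester). Since $a \in \eta(\tilde B)$ is exactly the event that $a$ admits at least $\ell k / 2$ good pairs, this gives $\mu(\eta(\tilde B)) \ge \Pr[X < \ell k / 2] = \Omega(1)$, which is the last ingredient needed to invoke Lemma~\ref{lem:Yao-easy}.

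The main obstacle I anticipate is that the global union bound used in Claim~\ref{clm:muetaB} does not survive the increased query budget: with $|Q|$ possibly as large as $\ell k/4$, the total number of relevant prefixes is $\Theta(\ell k^2)$, so the probability that \emph{every} $a^t_s$ in $S$ lies in $[1..k^3-2]$ is at best $1 - \Omega(\ell/k)$, which is no longer $\Omega(1)$ once $\ell \gg k$. Relaxing the global event to the weaker ``a positive fraction of $(t,j)$ pairs are good'' event, and handling it via linearity of expectation plus Markov as above, is precisely the modification required to push the lower bound through for all values of $\ell$.
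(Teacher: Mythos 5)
Your proof is correct and takes essentially the same approach as the paper: bound the number of conflict pairs deterministically by $q \le \ell k/4$, bound the expected number of range-failure pairs by $O(\ell/k)$, and apply Markov's inequality. The only cosmetic difference is that the paper applies Markov solely to the range-failure count with threshold $\ell k/4$ (giving probability $1-o(1)$), whereas you fold the deterministic $\ell k/4$ conflict bound into $\E[X]$ and apply Markov with threshold $\ell k/2$ (giving probability $\ge \tfrac12 - O(1/k^2)$); both are $\Omega(1)$ for all valid $k \ge 2$.
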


\pfstart
Fix $a\in\tA$.
Similarly to Claim~\ref{clm:muetaB}, there can be at most $q-1< \frac{\ell k}{4}$ pairs such that there is a contradiction in the first three cases of Equation~(\ref{eqn:bts}).
A pair $(t,j)$ can be bad also because $a^t_s$ equals $0$ or $k^3-2$.
The expected number of such pairs as $a\sim \tA$ is
$
q\cdot k\cdot \frac{2}{k^3-1} = O(\frac{\ell}{k}).
$
By Markov's inequality, probability that the number of such pairs is $\ge \frac{\ell k}{4}$ is $o(1)$.
And if this does not happen, the number of good pairs is at least $\ell k/2$.
\pfend

\section*{Acknowledgements}
Aleksandrs Belovs is supported by the ERDF grant number 1.1.1.2/VIAA/1/16/113.
Eric Blais and Abhinav Bommireddi are funded by an NSERC Discovery grant.

\bibliographystyle{plain}
\bibliography{TestingConvexity}

\appendix

\section{On convexity and line convexity}
\label{app:otherConvexity}

In the introduction, we mentioned that the notion of \emph{linear convexity} studied in~\cite{BlaisRY14} is not equivalent to the notion of convexity we study in this current work. In this section, we provide a proof of this statement.

\begin{defn}
Fix a set $X \subseteq \R^d$. The function $f\colon X \to \R$ is \emph{linearly convex} if for every $x, y \in X$ and every $0 \le \lambda \le 1$ for which $\lambda x + (1-\lambda)y \in X$, we have $f(\lambda x + (1-\lambda) y) \le \lambda f(x) + (1-\lambda) f(y)$.
\end{defn}

When $X = \R^d$ or, more generally, when $X$ is a convex set, then the notion of linear convexity is equivalent to convexity. When $X$ is a discrete set with dimension $d \ge 2$, however, the two definitions are not equivalent.

\begin{prop}
\label{prop:conv-lin}
For any $d \ge 2$ and any discrete set $X \subseteq \R^d$, every convex function $f\colon X \to \R$ is also linearly convex. However, for every $d \ge 2$ there are discrete sets $X \subseteq \R^d$ for which there exist linearly convex functions $g\colon X \to \R$ that are not convex.
\end{prop}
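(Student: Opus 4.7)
The plan is to prove the two parts of the proposition separately. The first part is immediate from the definitions: linear convexity is precisely the $k=2$ case of the convexity definition, applied with $x_1 = x$, $x_2 = y$, $\lambda_1 = \lambda$, and $\lambda_2 = 1-\lambda$. The hypothesis $\lambda x + (1-\lambda)y \in X$ of linear convexity matches the hypothesis $\sum_i \lambda_i x_i \in X$ of convexity in this case, so every convex function is automatically linearly convex.

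For the second part, I will exhibit an explicit counterexample for each $d \ge 2$. The guiding idea is to choose a discrete set $X \subseteq \mathbb{R}^d$ containing a point $p$ that lies in the interior of the convex hull of the other points (so convexity imposes a nontrivial constraint on $g(p)$) while $p$ lies on \emph{no} open line segment between two points of $X$ (so linear convexity imposes no constraint involving $g(p)$). A natural candidate is a simplex together with its centroid. Concretely, I would take
\[
X = \{\mathbf{0},\; (d+1)e_1,\; (d+1)e_2,\; \ldots,\; (d+1)e_d,\; \mathbf{1}\} \subseteq \mathbb{R}^d,
\]
where $e_1,\ldots,e_d$ are the standard basis vectors and $\mathbf{1} = e_1 + \cdots + e_d$. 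The identity $\mathbf{1} = \frac{1}{d+1}\bigl(\mathbf{0} + \sum_i (d+1)e_i\bigr)$ exhibits $\mathbf{1}$ as the centroid of the other $d+1$ points. Defining $g(\mathbf{1}) = 1$ and $g(x) = 0$ for every other $x \in X$ then violates the centroid inequality and so $g$ is not convex.

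The main remaining step is to verify that $g$ is nevertheless linearly convex. It suffices to show that for every pair of distinct $x, y \in X$, no point of $X$ lies on the \emph{open} segment between $x$ and $y$; linear convexity then holds vacuously for any function on $X$. I expect this case analysis to be the main (though routine) obstacle. The cases to dispatch are: segments of the form $\mathbf{0}$--$(d+1)e_i$ have interior points supported only on coordinate $i$, ruling out every other element of $X$; the segment $\mathbf{0}$--$\mathbf{1}$ has all coordinates equal and strictly between $0$ and $1$, ruling out all the $(d+1)e_k$; the segment $(d+1)e_i$--$(d+1)e_j$ with $i\ne j$ has interior points supported on $\{i,j\}$, which cannot equal $\mathbf{1}$ precisely because $d \ge 2$; and the segment $(d+1)e_i$--$\mathbf{1}$ has every non-$i$ coordinate strictly in $(0,1)$, excluding $\mathbf{0}$, any $(d+1)e_k$, and $\mathbf{1}$ itself. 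This covers all pairs, completing the verification and hence the proposition.
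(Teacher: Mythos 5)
Your approach differs from the paper's, and is arguably cleaner as a uniform ``for every $d\ge 2$'' argument. The paper exhibits a concrete example only for $d=2$, on the full $3\times 3$ grid, where linear convexity is non-vacuous and must be checked against actual collinear triples in the grid. You instead construct, uniformly in $d$, a sparse set (vertices of a scaled simplex plus their centroid) engineered so that \emph{no} point of $X$ lies on the open segment between two others; linear convexity then holds vacuously for any function on $X$, and the centroid identity $\mathbf{1}=\tfrac{1}{d+1}\bigl(\mathbf{0}+\sum_i(d+1)e_i\bigr)$ kills convexity immediately. The first part of the proposition is handled identically in both proofs.

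There is, however, a small gap in your case analysis, specifically for $d=2$. You dismiss the segment $(d+1)e_i$--$(d+1)e_j$ by saying its interior points are ``supported on $\{i,j\}$, which cannot equal $\mathbf{1}$ precisely because $d\ge 2$.'' For $d\ge 3$ this is fine: an interior point has a zero coordinate outside $\{i,j\}$, while $\mathbf{1}$ does not. But for $d=2$ we have $\{i,j\}=\{1,2\}$, the support observation says nothing, and yet the proposition must cover $d=2$. The fix is easy and in fact uniform in $d$: every interior point of that segment is $(1-t)(d+1)e_i+t(d+1)e_j$ with coordinate sum $d+1$, whereas $\mathbf{1}$ has coordinate sum $d$, so they can never coincide. (Points $\mathbf{0}$ and $(d+1)e_k$ with $k\notin\{i,j\}$ are still excluded by your support observation, and the latter case does not even arise when $d=2$.) With that one sentence repaired, your verification is complete.
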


\begin{proof}
That every convex function $f$ is also linearly convex follows directly from the definitions. For the second statement, consider the function $f \colon [3] \times [3] \to \R$ defined by
\begin{align*}
f(0, 2) = 3 && f(1, 2) = 1 && f(2, 2) = 5 \\ 
f(0, 1) = 1 && f(1, 1) = 2 && f(2, 1) = 3 \\
f(0, 0) = 5 && f(1, 0) = 3 && f(2, 0) = 1 
\end{align*}
The function $f$ is linearly convex, but it has a violation of convexity on the point $(1,1)$ with respect to the points $(2,0)$, $(0,1)$, and $(1,2)$.
\end{proof}

\begin{figure}
\centering
\begin{tikzpicture}[thick,scale=1, every node/.style={transform shape}]
\foreach \x in {0,1,...,2} {
        \foreach \y in {0,1,...,2} {
            \fill[color=black] (\x,\y) circle (0.04);
        }
    }

\node [left] at (0, 0) {$5$};
\node [left] at (0, 1) {$1$};
\node [left] at (0, 2) {$3$};
\node [below] at (1, 0) {$3$};
\node [below] at (1, 1) {$2$};
\node [above] at (1, 2) {$1$};
\node [right] at (2, 0) {$1$};
\node [right] at (2, 1) {$3$};
\node [right] at (2, 2) {$5$};

\draw (2, 0) -- (0, 1) -- (1, 2) -- (2, 0);

\end{tikzpicture}
\caption{Illustration of the function $f$ constructed in the proof of Proposition~\ref{prop:conv-lin}.}
\end{figure}
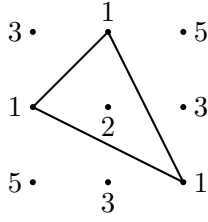

We note that many other notions of convexity of functions over discrete domains have also been considered in the context of discrete convex analysis. See ~\cite{Murota98a} and the references therein for more details on those notions.

\section{Missing proofs from Section~\ref{sec:convexity}}
\label{app:missingProofs}

For completeness, we include proofs of Lemma~\ref{lem:extension} and Theorem~\ref{thm:minimalSimplex} in this section.

\subsection{Proof of Lemma~\ref{lem:extension}}

By convexity of $f$, we have that $g(x) = f(x)$ for all $x\in X$. Hence, $g$ indeed extends $f$.  It remains to prove that $g$ is convex.

\begin{clm}
\label{clm:gAlternative}
The definition of $g$ in Equation~(\ref{eqn:g}) does not change if we minimise over all possible convex combinations $z = \lambda_1 x_1 +\cdots + \lambda_k x_k$, where $x_1,\dots,x_k$ need not form a simplex.
\end{clm}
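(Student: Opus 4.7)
\textbf{Proof plan for Claim~\ref{clm:gAlternative}.}

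The plan is to prove both inequalities. The direction in which the minimum over arbitrary convex combinations is at most $g(z)$ is immediate: every centred simplex $x_1,\dots,x_k$ of $X$ is in particular a collection of points of $X$ expressing $z$ as a convex combination, so restricting to simplices can only make the infimum larger. All the work is in the reverse direction: given any convex combination $z=\sum_{i=1}^k \lambda_i x_i$ with $x_i\in X$ and $\lambda_i>0$, I want to produce a centred simplex $x_{i_1},\dots,x_{i_r}$ of $X$ with coefficients $\lambda'_{j}>0$ summing to $1$ such that $\sum_j \lambda'_{j} f(x_{i_j}) \le \sum_i \lambda_i f(x_i)$.

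I would argue this by a standard Carath\'eodory-style reduction. If $x_1,\dots,x_k$ are already affinely independent, they form a simplex and $x_1,\dots,x_k,z$ is already centred at $z$ (since all $\lambda_i>0$), and we are done. Otherwise there exist reals $\mu_1,\dots,\mu_k$, not all zero, with $\sum_i \mu_i = 0$ and $\sum_i \mu_i x_i = 0$. For a real parameter $t$, set $\lambda_i(t) = \lambda_i + t\mu_i$. Then $\sum_i \lambda_i(t) = 1$ and $\sum_i \lambda_i(t) x_i = z$ for every $t$, while the objective $F(t) = \sum_i \lambda_i(t) f(x_i) = \sum_i \lambda_i f(x_i) + t\sum_i \mu_i f(x_i)$ is affine in $t$. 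Choosing the sign of $t$ so that $F$ is non-increasing (or arbitrarily if the slope is zero), I slide $t$ away from $0$ until the first coordinate $\lambda_{i_0}(t)$ reaches $0$; this must happen for some finite $t$ because at least one $\mu_i$ is negative on that ray (since $\sum_i \mu_i=0$ and not all $\mu_i$ vanish). At this point I have a new convex combination of $z$ using strictly fewer points of $X$ and with objective value no larger than before.

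Iterating this contraction step reduces the number of points used at each stage, so it must terminate with an affinely independent family, at which point the corresponding simplex is centred at $z$ and achieves value at most $\sum_i \lambda_i f(x_i)$. This shows the infimum over arbitrary convex combinations is at least $g(z)$, completing the equality. The only potential obstacle is purely bookkeeping --- checking that after removing one point the remaining weights are still strictly positive and the hypothesis $\sum_i \mu_i = 0$, $\sum_i \mu_i x_i = 0$ can be re-established in the reduced family --- both of which are immediate from the construction since we stop exactly when a single weight hits zero.
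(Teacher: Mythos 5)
Your proof is correct and takes essentially the same approach as the paper: both use a Carath\'eodory-style reduction, perturbing the weights along an affine dependence $\sum_i \mu_i x_i = 0$, $\sum_i \mu_i = 0$ in the direction that does not increase the objective until a weight hits zero. The only cosmetic difference is that the paper phrases this as an extremal argument (take a minimizing combination with $k$ smallest and derive a contradiction) while you phrase it as an explicit iterative reduction; the underlying mechanism is identical.
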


\pfstart
Let $g(z)$ be defined as in the statement of this claim.
Take a linear combination $z = \lambda_1 x_1 +\cdots + \lambda_k x_k$ which minimises $\sum_i \lambda_i f(x_i)$ and such that $k$ is as small as possible.
We claim that then $x_1,\dots,x_k$ form a simplex.

Indeed, assume $x_1,\dots, x_k$ are not affinely independent.
Then, there exists a non-trivial linear combination $\beta_1 x_1+\cdots+\beta_k x_k = 0$ such that $\beta_1+\cdots+\beta_k = 0$.
Changing the sign of each $\beta_i$ if necessary, we may assume that $\beta_1 f(x_1)+\cdots+\beta_k f(x_k)\ge0$.  
Let $t\ge 0$ be the maximal real number such that $\lambda_i - t \beta_i \ge 0$ for all $i$.

Let $\lambda_i' = \lambda_i - t\beta_i$.
We have that $\lambda_i' \ge 0$, $\sum_i \lambda_i' = 1$, $\sum_i \lambda_i' x_i = z$, and $\sum_i \lambda_i' f(x_i) \le \sum_i \lambda_i f(x_i)$.
Moreover, at least one of $\lambda_i'$ is equal to 0, which contradicts minimality of $k$.
\pfend

\begin{clm}
The function $g$ is convex on the convex hull of $X$.
\end{clm}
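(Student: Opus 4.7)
The plan is to exploit the alternative characterization of $g$ from Claim~\ref{clm:gAlternative}, which allows us to describe $g(z)$ as an infimum over \emph{arbitrary} convex combinations of points of $X$, not just simplices. This flexibility is exactly what is needed to handle convex combinations that mix together the representations used for different input points.

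First I would take any finite collection of points $z_1,\dots,z_m$ in the convex hull of $X$ together with nonnegative weights $t_1,\dots,t_m$ summing to $1$, and set $w = \sum_j t_j z_j$. For each $j$, by Claim~\ref{clm:gAlternative} I would pick a convex representation $z_j = \sum_i \lambda_{ji}\, x_{ji}$ with $x_{ji} \in X$ that attains the minimum, so that $g(z_j) = \sum_i \lambda_{ji} f(x_{ji})$. Substituting, I obtain
\[
w \;=\; \sum_{j,i} t_j \lambda_{ji}\, x_{ji},
\]
and the coefficients $t_j \lambda_{ji}$ are nonnegative and sum to $1$, so this is again a convex combination of points of $X$. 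Applying Claim~\ref{clm:gAlternative} to this representation of $w$ yields
\[
g(w) \;\le\; \sum_{j,i} t_j \lambda_{ji}\, f(x_{ji}) \;=\; \sum_j t_j \sum_i \lambda_{ji} f(x_{ji}) \;=\; \sum_j t_j\, g(z_j),
\]
which is exactly the convexity inequality for $g$ on the convex hull of $X$.

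The only real subtlety is making sure we are entitled to combine representations this way, and that is precisely what Claim~\ref{clm:gAlternative} gives us: without it, the natural representation of $w$ would have too many terms to form a simplex, and we would not be able to compare it directly to $g(w)$. With that claim in hand, the argument is essentially a one-line computation, so I do not expect any genuine obstacle here; the main work has already been done in reducing to the relaxed minimum over all convex combinations.
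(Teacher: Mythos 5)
Your proof is correct and follows exactly the paper's argument: for each $z_j$ pick an optimal convex representation over $X$ (using Claim~\ref{clm:gAlternative}), combine them into a convex combination representing $w$, and apply Claim~\ref{clm:gAlternative} again to conclude $g(w)\le\sum_j t_j g(z_j)$. The only difference is notational.
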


\pfstart
Consider a convex combination $z = \mu_1 z_1 + \cdots + \mu_k z_k$, where all $z_i$ lie in the convex hull of $X$.
For each $z_i$ choose a convex combination $z_i = \sum_{x\in X} \lambda_{i,x} x$ such that $g(z_i) = \sum_{x\in X} \lambda_{i,x}f(x)$.
Then, $\lambda_x = \sum_i \mu_i \lambda_{i,x}$ give a convex combination over the elements of $X$ such that $z = \sum_{x\in X} \lambda_x x$.
By Claim~\ref{clm:gAlternative},
\[
g(z) \le \sum_{x\in X} \lambda_x f(x) = \mu_1 g(z_1) + \cdots + \mu_k g(z_k),
\]
proving that the function $g$ is convex.
\pfend

By~\cite{yan2012extension}, the function $g$ can be extended from the convex hull of $X$ to the whole $\bR^d$.  This completes the proof of Lemma~\ref{lem:extension}.

\subsection{Proof of Theorem~\ref{thm:minimalSimplex}}

Assume that $f$ is not convex.
Then there exists a convex combination $z = \lambda_1 x_1 +\cdots+\lambda_k x_k$ such that
$f(z) > \lambda_1 f(x_1) +\cdots+\lambda_k f(x_k)$.
Choose such a convex combination that $k$ is as small as possible and the convex hull of $x_1,\dots,x_k$ is inclusion-wise minimal.
We claim that then $x_1,\dots,x_k,z$ form a minimal centred simplex.

Using the same argument as in Claim~\ref{clm:gAlternative}, we get that $x_1,\dots,x_k$ is a simplex.  
Assume it contains more than two points in its convex hull minus the vertices.
Let $z$ be such that the violation $f(z) - \lambda_1 f(x_1) -\cdots-\lambda_k f(x_k)>0$ is as large as possible.
Let $y$ be any other point in the convex hull of $x_1,\dots,x_k$ except for its vertices.
Then,
\begin{equation}
\label{eqn:zViolation}
f(z) - \lambda_1 f(x_1) -\cdots-\lambda_k f(x_k) \ge f(y) - \mu_1 f(x_1) -\cdots-\mu_k f(x_k),
\end{equation}
where $y = \mu_1 x_1 + \cdots \mu_k x_k$.
Let $t\ge 0$ be the largest real number such that $\lambda_i - t\mu_i\ge 0$ for all $i$.
Let $\lambda_i' = \lambda_i - t\mu_i$.
We have the following convex combination:
\[
z = ty + \lambda_1' x_1 + \cdots + \lambda_k' x_k.
\]
Moreover, one of $\lambda'_i$ is equal to 0.
As $z\ne y$, we have that $t<1$.
This together with Equation~(\ref{eqn:zViolation}) yields
\[
f(z) > tf(y) + \lambda_1' f(x_1) + \cdots + \lambda_k' f(x_k).
\]
This contradicts inclusion-wise minimality of $x_1,\dots,x_k$.

\end{document}